\def\dcap{Q}
\def\pcap{\Pi}
\def\adfee{\pi}
\def\enCPT{E}
\def\encpt{e}
\def\val{\theta}
\def\cost{\beta}
\def\data{d}
\def\cycle{c}
\def\indata{r}
\newtheorem{problem}{\textbf{Problem}}
\newtheorem{question}{\textbf{Question}}
\newtheorem{lemma}{\textbf{Lemma}}
\newtheorem{corollary}{\textbf{Corollary}}
\newtheorem{proposition}{\textbf{Proposition}}
\newtheorem{definition}{\textbf{Definition}}
\newtheorem{theorem}{\textbf{Theorem}}
\begin{document}

\title{Monetizing Edge   Service in Mobile Internet Ecosystem}

\author{Zhiyuan~Wang,
        Lin~Gao,
        Tong~Wang,
        and Jingjing~Luo

	\IEEEcompsocitemizethanks{
		\IEEEcompsocthanksitem Zhiyuan Wang is with the School of Electronics and Information Engineering, Harbin Institute of Technology, Shenzhen, China, and the Department of Computer Science and Engineering, The Chinese University of Hong Kong, Shatin, N.T., Hong Kong, China.
        Lin Gao is with the School of Electronics and Information Engineering, Harbin Institute of Technology, Shenzhen, China, and the Shenzhen Institute of Artificial Intelligence and Robotics for Society, Shenzhen, China.
      Tong Wang and Jingjing Luo are with the School of Electronics and Information Engineering, Harbin Institute of Technology, Shenzhen, China.
 		E-mail: tongwang@hit.edu.cn.
      (The first two authors,  Zhiyuan Wang and Lin Gao, contributed equally to this work. Corresponding Author: Tong Wang)~~~~~~
}

}


\IEEEtitleabstractindextext{
\begin{abstract}
In mobile Internet ecosystem, Mobile Users (MUs) purchase wireless data services from Internet Service Provider (ISP) to access to Internet and acquire the interested content services (e.g., online game) from Content Provider (CP).
The popularity of intelligent functions (e.g., AI and 3D modeling) increases the computation-intensity of the content services, leading to a growing computation pressure for the MUs' resource-limited devices.
To this end,  {\emph{edge computing service}} is emerging as a promising approach to alleviate the MUs' computation pressure while keeping their quality-of-service, via offloading some computation tasks of MUs to edge (computing) servers deployed at the local network edge.
Thus, Edge Service Provider (ESP), who deploys the edge servers and offers the edge computing service, becomes an upcoming new stakeholder in the ecosystem.
	In this work, we study the economic interactions of MUs, ISP, CP, and ESP in the new ecosystem with edge computing service, where MUs can acquire the computation-intensive content services (offered by CP) and offload some computation tasks, together with the necessary raw input data, to edge servers (deployed by ESP) through ISP.
We first study the MU's Joint Content Acquisition and Task Offloading (J-CATO) problem, which aims to maximize his long-term payoff. We derive the \emph{off-line} solution with crucial insights, based on which we design an \emph{online} strategy with provable performance.
Then, we study the ESP's edge service monetization problem.
We propose a pricing policy that can achieve a \emph{constant fraction} of the ex post optimal revenue with an extra \emph{constant loss} for the ESP.
Numerical results show that the edge computing service can stimulate the MUs' content acquisition and improve the payoffs of MUs, ISP, and CP.
\end{abstract}

\begin{IEEEkeywords}
Internet ecosystem, game theory, edge computing monetization, business model.
\end{IEEEkeywords}
}

\maketitle

\IEEEdisplaynontitleabstractindextext

\IEEEpeerreviewmaketitle

%
\IEEEpeerreviewmaketitle

\section{Introduction}
\subsection{Background and Motivation}
\IEEEPARstart{M}{obile} Internet has been increasingly indispensable for Mobile Users (MUs) in the past decades.
Each MU typically signs a long-term contract with the Internet Service Provider (ISP) to obtain the wireless data service.
The contract offered by ISP usually corresponds to a monthly data plan, consisting of a monthly data cap, a lump-sum subscription fee, and a per-unit fee for exceeding the data cap \cite{sen2013survey}.
Accordingly, MUs, with the wireless data service, can acquire and enjoy various Internet content services (e.g., online game and video streaming) through the mobile applications of Content Providers (CPs) on their mobile devices \cite{dhamdhere2011twelve}.
There are two major trends in the mobile Internet ecosystem during the past several years.
\begin{itemize}
	\item The content service (offered by CPs) has been increasingly \textit{data-hungry} due to the popularity of the high-resolution videos, cloud-based services, and various social media.

	\item The content service (offered by CPs) has been increasingly \textit{computation-intensive} due to the intelligent functions (e.g., virtual reality, mobile games, and 3D modeling) within the CP's mobile application.
\end{itemize}

Regarding the increasing data volume, previous studies (e.g., \cite{zheng2017customized,wang2019multi}) have shown that ISP can alleviate the growing mobile Internet data traffic through more innovative wireless data services.
The study on the increasing computation volume, however, is still at the early stage.
On   one hand, the intelligent content service can help CPs attract more MUs.
On the other hand, the growing computation volume may degrade the MU's Quality of Experience (QoE), as the mobile devices are usually resource-limited.
 {\textbf{Mobile edge computing}}, allowing MUs to offload some computing tasks to the edge servers, is becoming the potential solution to the growing computation volume \cite{hu2015mobile}.
There have been some initial trials carried out by different third-party Edge Service Providers (ESPs).
For example, Vapor IO has opened two edge server sites in Chicago.
EdgeMicro has built a fully functional edge server in Englewood.

The edge service furnishes MUs with both opportunities and challenges.
The crucial part is the trade-off between the local execution and the edge execution.
Specifically, the choice of edge-execution helps MUs reduce the local-execution cost, but edge-execution is not free of charge.
First, the edge-execution of computation tasks requires that the MU should offload the necessary raw input data (e.g., the images in AR functions) to the edge servers, which potentially increases the wireless data usage.
Second, the self-interest ESP also wants to monetize the edge service.
In general, the above trade-off will affect the QoE of MUs, leading to different content acquisition behaviors.
Therefore, this motivates us to study the following key question:
\begin{question}
	What is the MU's optimal content acquisition and task offloading strategy?
\end{question}

As an upcoming stakeholder in the ecosystem, ESP (e.g., Vapor IO and EdgeMicro in US market) is self-interest and seek for more economic benefit from operating the edge servers.
However, comparing to the wireless data service (offered by ISP), the MUs' demand on the edge service is more random and unpredictable.
On the one hand, the local-execution capabilities are heterogeneous across the MU population, depending on the computation resource (e.g., the CPU frequency) of their mobile devices.
On the other hand, the MU's demand on the edge server is usually time-variant, affected by their acquired content service and the operating state of the mobile device (e.g., the battery volume).
All these issues will significantly affect how heavily the MU relies on the edge servers in practice.
These observations motivate us to investigate the following key question:
\begin{question}
	How should ESP monetize the edge service?
\end{question}

Besides the aforementioned strategic behaviors of MUs and ESP, it is crucial to unveil the economic effect of the up-coming edge service on the classic mobile Internet value chain with MUs (as the consumers) as well as ISP and CP (as the providers).
First of all, the edge service may imperceptible increase the MUs' wireless data usage, as the edge-execution relies on offloading the necessary raw input data.
Hence the ISP offering the wireless data service is possible to benefit from the edge service of ESP.
Furthermore, the edge service provides a new solution for the MUs to acquire the computation-intensive content service, which potentially increases the content acquisitions of MUs.
All the above conjectures highly depend on the strategic interplay between the MUs and ESP.
This motivates us to study the third key question in this work:
\begin{question}
	How will ESP's edge service monetization affect the ISP, CP, MUs, and the social welfare of the ecosystem?
\end{question}


This paper studies the new mobile Internet ecosystem with edge computing service.
We aim to demonstrate the economic effect of the edge service and stimulate the edge service monetization.

\subsection{Main Results and Key Contributions}

We investigate the mobile Internet ecosystem consisting of MUs, CPs, ISP, and ESP.
The MUs acquire and enjoy the CPs'   content services (e.g., online game) through the wireless data service offered by ISP and the edge computing service offered by ESP.
We take into account a multi-period operation horizon.
Each MU will make the joint content acquisition and task offloading decisions in each time slot (e.g., every day) with the purpose of monthly payoff maximization.
Hence the MU's Joint Content Acquisition and Task Offloading (J-CATO) problem is an online payoff maximization.
Moreover, ESP monetizes edge computing service though an appropriate pricing policy.

The main results and key contributions of this paper are summarized as follows:
\begin{itemize}
	\item \textit{A Business Model Study on Mobile Internet Ecosystem:}
		We study the economic interactions in mobile Internet ecosystem consisting of MUs, CPs, ISP, and ESP.
		Specifically, we aim to unveil the economic impact of the edge computing service.
		Our study is an initial step towards understanding a more complex business model.
	\item \textit{A Joint Analysis of the MU's Content Acquisition and Task Offloading:}
		We study the MU's Joint Content Acquisition and Task Offloading (J-CATO) problem from the simplified off-line version to the practical online context.
		We first solve the off-line J-CATO (which is non-convex) in closed-form through appropriate reformulations.
		Our analysis unveils the significant role of the shadow price of the wireless data usage.
		Moreover, we propose an online strategy with provable performance based on the intuition behind the shadow price.
	\item \textit{ESP's Edge Service Monetization:}
		We design a pricing policy for ESP to monetize the edge computing service without relying on any statistical information of the MU population (which is costly to measure).
		The key idea of the pricing policy is to iteratively explore and exploit good pricing choices.
		Moreover, we show that our pricing policy can achieve at least a constant fraction of the \textit{ex post optimal revenue} with an extra constant loss.
		By appropriately tuning the parameters, it can achieve a constant competitive ratio under mild conditions.
	\item \textit{Performance Evaluation and Insights:}
		We carry out extensive evaluations on the mobile Internet ecosystem with ESP monetizing edge service.
		We find that the edge service not only helps the MUs overcome the local-computing bottleneck, but also stimulates the content acquisitions of MUs.
		Meanwhile, both ISP and CP also benefit from the increasing content acquisitions.
		Therefore, the edge service leads to higher social benefit for the mobile Internet ecosystem.
\end{itemize}

The rest of the paper is as follows:
Section \ref{Section: Literature Review} reviews related literatures.
Section \ref{Section: System Model} introduces the system model.
Section \ref{Section: Off-line Analysis and Insights} presents the MU's off-line solution together with key insights.
Section \ref{Section: MU On-Line JCOP} studies the MU's online strategy.
Section \ref{Section: ESP Pricing Problem} investigates ESP's pricing policy.
Section \ref{Section: Numerical} presents the numerical results.
We conclude this paper in Section \ref{Section: Conclusion}.

\section{Literature Review}\label{Section: Literature Review}
This paper is related to two streams of studies, i.e., Internet ecosystem and edge computing.
In the following, we review the two streams of studies, respectively.
\subsection{Internet ecosystem}
The Internet ecosystem has been widely studied before (see, e.g., \cite{dhamdhere2011twelve,sen2013survey} for two comprehensive surveys).
The early studies on the Internet ecosystem mainly focused on the Internet data service offered by ISP.
The major research problems include pricing under the peering and transit relations (e.g., \cite{shakkottai2006economics}), the network neutrality and regulations (e.g., \cite{ma2013public}), and the revenue sharing mechanisms (e.g., \cite{he2006pricing,ma2010internet}).
Some follow-up research works took into account the economic interactions between ISPs and users under different business models of the Internet data service.
For example, Hande \textit{et al.} in \cite{hande2010pricing} investigated how the ISP sells the broadband Internet access to users under the flat-rate and the usage-based schemes.
Ma in \cite{ma2016usage} studied the congestion-prone market and how users' congestion sensitivity affect the optimal price and ISPs' competition.
However, the above studies merely took into account the one-period static setting, neglecting the multi-period dynamics.
Some other studies (e.g., \cite{zheng2018optimizing,wang2019economic}) explored the dynamic game-theoretic interactions between ISPs and users.
There were also some studies taking into account both the Internet data service (offered by ISPs) and the Internet content service (offered by CPs).
For example, Wu \textit{et al.} \cite{wu2011revenue} studied the revenue sharing and rate allocation problems between the content ISP and the eyeball ISP (who offer Internet data service to CPs and users, respectively).
Wong \textit{et al.} in \cite{joe2018sponsoring} studied how the CPs subsidize the users' cost on the Internet data services and showed that multiple stakeholders can benefit.

\subsection{Edge Computing}
There are many excellent studies on edge computing from the perspective of communication (e.g., \cite{mao2017survey}) and edge intelligence (e.g., \cite{8736011}).
Next we review some typical literatures among the most recent ones.

Many studies on edge computing focused on the energy-efficient offloading (e.g., \cite{you2016energy,sun2017emm,zhou2019energy}), joint communication and computation resource allocation (e.g., \cite{mao2017stochastic,yang2019joint,wang2016mobile}), wireless-powered system (e.g., \cite{mao2016dynamic,wang2017joint,xu2017online}), and edge caching (e.g., \cite{poularakis2020service,xu2018joint,tang2018enabling}).
For example, You \textit{et al} in \cite{you2016energy} studied the resource allocation for a multi-user MEC system
under time-division multiple access (TDMA) and orthogonal frequency-division multiple access (OFDMA), aiming to minimize the weighted sum of mobile energy consumption under the constraint on computation latency.
Mao \textit{et al} in \cite{mao2017stochastic} developed an online joint radio and computational resource management algorithm.
They leveraged the Lyapunove optimization method to minimize the long-term energy consumption and keep the task buffer stability.
Wang \textit{et al} in \cite{wang2017joint} considered a wireless powered multiuser MEC system, where a multi-antenna access point (AP) broadcasts wireless power to charge users and each user relies on the harvested energy to execute computation tasks.
Poularakis \textit{et al} in \cite{poularakis2020service} studied
the joint optimization of service placement and request routing in
dense MEC networks with multidimensional constraints.
They proposed an algorithm that achieves close-to-optimal performance using a randomized rounding technique.

The economic aspect of the edge service was overlooked.
There are only few studies on the business aspect of the edge service.
Specifically, Chen \textit{et al.} in \cite{chen2015efficient} investigated the multiple users' task offloading game and derived the Nash equilibrium.
Liu \textit{et al.} in \cite{liu2017price} studied how the ESP sets the price for the finite edge computation resource to maximize its revenue.
Xiong \textit{et al.} in \cite{xiong2019joint} jointly considered the interplay between the CPs' sponsoring and the ESP's edge caching services as a hierarchical three-stage Stackelberg game.
Nevertheless, the above studies did not characterize the users' content consumption behavior and neglected the multi-period dynamics.

This paper differs from the above studies in terms of both problem setup and the theoretical solution.
First, we focus on the economic interaction between MUs and ESP, and unveil the win-win impact of edge computing service.
Second, our proposed online MU policy addresses the non-separable payoff, which is different from the Lyapunov framework (as in \cite{mao2016dynamic,mao2017stochastic}).
Third, we also propose a dynamic pricing policy for ESP, which continuously explores and exploits good pricing outcome with provable discretization error.

\section{System Model}\label{Section: System Model}
We consider the mobile Internet ecosystem with a set $\mathcal{N}=\{1,2,...,N\}$ of Mobile Users (MUs), Content Providers (CPs), Internet Service Provider (ISP), and Edge Service Provider (ESP).
More specifically, each MU $n\in\mathcal{N}$ acquires and enjoys the content service of CPs (e.g., \textit{Tencent}, \textit{Facebook}, \textit{Pokemon Go}, etc) on the corresponding mobile applications.
Successful content service acquisition for each MU corresponds to the wireless content delivery (e.g., video streaming) and the computation task execution (e.g., image processing), which highly rely on ISP's wireless data service and ESP's edge computing service, respectively.
\begin{itemize}
	\item \textit{Wireless Content Delivery:}
		The MU $n\in\mathcal{N}$ can obtain the wireless data service from the ISP based on the monthly data plans.
	\item \textit{Computation Task Execution:}
		The MU $n\in\mathcal{N}$ can fulfill the computation tasks either locally at the mobile device or remotely utilizing the edge servers of ESP.
\end{itemize}

We will consider a one-month operation period, consisting of a set $\mathcal{T}=\{1,2,...,T\}$ of time slots.
Each time slot $t\in\mathcal{T}$ may correspond to one day or one hour.
Our analysis in this paper still holds when we consider multiple months.
Next we start with the service model for the mobile Internet ecosystem in Section \ref{Subsection: Service Model}.
We then characterize the MUs and service providers in Section \ref{Subsection: MU Model} and Section \ref{Subsection: Revenues of Providers}, respectively.
Table \ref{Table: Key notations} summarizes the key notations in this paper.

\subsection{Service Models}\label{Subsection: Service Model}
There are three types of services in the mobile Internet ecosystem, i.e., the Internet content service, the wireless data service, and the edge computing service.
Next we introduce the service models.

\subsubsection{Wireless Data Service}
Internet Service Provider (ISP) offers wireless data service based on the monthly data plan, which is a three-part tariff denoted by $\{\dcap,\pcap,\adfee\}$.
Specifically, the MU pays a monthly subscription fee $\pcap$ for the data usage up to the data cap $\dcap$.
And the MU pays the overage fee $\adfee$ for unit data usage exceeding the data cap.
Note that the monthly data cap $\dcap$ and the monthly subscription $\pcap$ of different MUs may be different, but the overage fee $\adfee$ is usually the same for the same ISP \cite{sen2013survey}.

\subsubsection{Edge Computing Service}
Edge Service Provider (ESP) monetizes edge computing service by allowing MUs to offload their computation tasks to the nearby edge servers.
We suppose that ESP charges the MUs based on the offloaded computation volume in a dynamic usage-based manner
That is, ESP can dynamically determine the price of unit computation volume (measured in CPU cycles) depending on the cost and the capacity.
Hence we let $p_{t}$ denote the unit price in slot $t$.
Accordingly, $\bm{p}=(p_{t}:t\in\mathcal{T})$ is the price vector determined by ESP.

\subsubsection{Internet Content Service}\label{Subsubsection: Internet Content Service}
We characterize the content service (offered by CPs) based on a random vector $(\data,\indata,\cycle)$, which jointly captures the per-slot requirement on both communication and computation.
The detailed elaborations are as follows:
\begin{itemize}
	\item The random variable $\data$, defined on the support $[0,\bar{\data}]$ indicates the \textit{data-usage level} of the content service.
	Specifically, $\data$ represents the total wireless data usage (including down-link and up-link) of acquiring CPs' Internet content for an entire time slot.
	
	\item The random variables $\indata$ and $\cycle$ jointly characterize the computation requirement of acquiring content service for one time slot.
	Specifically, the random variable $\indata$ with the support $[0,\bar{\indata}]$ represents the one-slot \textit{raw data amount} (e.g., raw images).
	The random variable $\cycle$ with the support $[0,\bar{\cycle}]$ represents the one-slot \textit{computing amount} (e.g., motion detection) measured in CPU cycles.
\end{itemize}

Based on the above content service model, acquiring the content service for $x$ fraction of time slot will correspond to the content delivery $x\data$ (in bit) and the computation task $(x\indata,x\cycle)$.
Specifically, $x\indata$ (in bit) and $x\cycle$ (in CPU cycles) represent the input raw data amount and computing amount, respectively.
Moreover, the computation task $(x\indata,x\cycle)$ can be executed at the mobile devices or at the edge servers (of ESP), which will be introduced later.

\subsection{MU Model}\label{Subsection: MU Model}
Next we introduce the MU model.
Specifically, we start with the demand realization, MU characteristics, and the MU's decision in Sections \ref{Subsection: MU Demand Realization}, \ref{Subsection: MU Characteristics}, and \ref{Subsection: MU Decisions}, respectively.
We then formulate the MU's monthly payoff in Section \ref{Subsection: MU Payoff}.

\subsubsection{MU Demand Realization}\label{Subsection: MU Demand Realization}
Based on the content service model in Section \ref{Subsubsection: Internet Content Service}, we let $(\bm{\data}_{n},\bm{\indata}_{n},\bm{\cycle}_{n})$ denote the content service realization of MU $n\in\mathcal{N}$.
The vector $\bm{\data}_{n}=\{\data_{n,t}:t\in\mathcal{T}\}$ is the data-usage realization, the vector $\bm{\indata}_{n}=\{\indata_{n,t}:t\in\mathcal{T}\}$ is the input raw data realization, and the vector $\bm{\cycle}_{n}=\{\cycle_{n,t}:t\in\mathcal{T}\}$ is the computing amount realization.

\subsubsection{MU Characteristics}\label{Subsection: MU Characteristics}
We characterize each MU $n\in\mathcal{N}$ taking into account his \textit{satisfaction} and \textit{dissatisfaction} from the content services.

First, the MUs get satisfaction (or happiness) from enjoying the content services.
	We let $U_{n,t}(x)$ denote MU $n$'s experienced satisfaction of acquiring the content service for $x$ fraction of the $t$-th slot.
	The utility function $U_{n,t}(\cdot)$ is both user-dependent and time-dependent, capturing the heterogeneous MU population and the time-variant preference, respectively.
	We suppose that $U_{n,t}(\cdot)$ takes the form of
	\begin{equation}\label{Equ: satisfaction}
		U_{n,t}(x) \triangleq \val_{n,t}\cdot u_{n,t}(x),
	\end{equation}
	where $\val_{n,t}$ is a scalar and represents MU $n$'s valuation (on the content service) in slot $t$.
	Moreover, $u_{n,t}(\cdot)$ is increasing and concave.
	We refer to $u_{n,t}(\cdot)$ as the normalized utility function of MU $n$ in slot $t$.
	
Second, the MUs also get dissatisfaction (or unhappiness) from the content service due to the resource-limited mobile devices.
We let $\enCPT_{n,t}(s)$ denote MU $n$'s experienced dissatisfaction for locally executing the computation tasks of amount $s$ (in CPU cycles) in slot $t$.\footnote{The dissatisfaction in MU's payoff captures the cost of executing the tasks locally, thus it is positively related to the computation amount.	Overall, it can capture the energy consumption in an indirect way compared to the previous study on edge computing (e.g., \cite{mao2017stochastic,yang2019joint}).}
It models the computation-intensive functions (e.g., image processing of AR applications) against with the resource-limited mobile devices.
We suppose that $\enCPT_{n,t}(\cdot)$ takes the form of
	\begin{equation}\label{Equ: dissatisfaction}
		\enCPT_{n,t}(s)\triangleq \cost_{n,t}\cdot e_{n,t}(s),
	\end{equation}
	where $\cost_{n,t}$ is a scalar and measures the sensitivity of MU $n$ in slot $t$.
	Moreover, $e_{n,t}(\cdot)$ is assumed to be increasing and convex, capturing the limited computation capacity.
	We refer to $e_{n,t}(\cdot)$ as the normalized cost function of MU $n$ in slot $t$.

Note that the MU's satisfaction and dissatisfaction depend on his decisions, which will be introduced next.

\subsubsection{MU Decisions}\label{Subsection: MU Decisions}
Each MU $n\in\mathcal{N}$ has two sets of decisions, i.e., the \textit{content-acquiring decisions} $\bm{x}_{n}$ and \textit{task-offloading decisions} $\bm{y}_{n}$.
\begin{itemize}
	\item We let $x_{n,t}\in[0,1]$ denote the content acquiring decision of MU $n$ in slot $t$.
	Specifically, $x_{n,t}$ represents the period length (i.e., fraction of slot) that MU $n$ spends on the content service in slot $t$.
	That is, the content acquiring decision $x_{n,t}$ leads to content delivery amount $x_{n,t}\data_{n,t}$, input raw data amount $x_{n,t}\indata_{n,t}$, and computing amount $x_{n,t}\cycle_{n,t}$.
	Accordingly, we denote $\bm{x}_{n}=\left( x_{n,t}\in[0,1]:t\in\mathcal{T} \right)$ as the content acquiring decisions of MU $n$.

	\item We let $y_{n,t}\in[0,1]$ denote the task-offloading decision of MU $n$ in slot $t$.
	Specifically, $y_{n,t}$ represents the fraction of computation task to be executed remotely at the nearby edge servers.
	That is, the MU tends to execute the computation task $x_{n,t}\cycle_{n,t}y_{n,t}$ at the edge servers by offloading the raw data $x_{n,t}\indata_{n,t}y_{n,t}$.
	Accordingly, we denote $\bm{y}_{n}=\left( y_{n,t}\in[0,1]:t\in\mathcal{T} \right)$ as the task-offloading decisions of MU $n$.
\end{itemize}

MU's content-acquiring decisions $\bm{x}$ and task-offloading decisions $\bm{y}$ will affect his wireless data usage (regarding ISP) and the edge server usage (regarding ESP), which eventually determine his monthly payoff.

\subsubsection{MU Payoff}\label{Subsection: MU Payoff}
Now we derive the MU $n$'s payoff based on the decisions $(\bm{x}_{n},\bm{y}_{n})$ and the content service realization $(\bm{\data}_{n},\bm{\indata}_{n},\bm{\cycle}_{n})$.
Overall, MU's payoff is defined as the difference between the utility and the total cost.
Moreover, the total cost consists of the sunk cost and the opportunistic cost.

\textit{\textbf{Utility:}}
The MU's utility corresponds to his satisfaction from the content service.
Hence the content-acquiring decision $\bm{x}_{n}$ will generate the monthly utility $\sum_{t=1}^{T}U_{n,t}(x_{n,t})$ for MU $n$.
Moreover, the content-acquiring decision $x_{n,t}$ (in slot $t$) also leads to the wireless data usage $\data_{n,t} x_{n,t}$ and the computing task amount $\cycle_{n,t} x_{n,t}$, both of which will incur cost for MU $n$ in slot $t$.

\textit{\textbf{Sunk Cost:}}
MU's sunk cost comes from executing the computation task $\cycle_{n,t} x_{n,t}$.
It includes the \textit{local-execution cost} and the \textit{edge-execution cost}.
\begin{itemize}
	\item The \textit{local-execution cost} corresponds to the dissatisfaction, defined in (\ref{Equ: dissatisfaction}), due to the resource-limited mobile device.
	Recall that the computation task of volume $\cycle_{n,t} x_{n,t}(1-y_{n,t})$ will be executed locally, thus incurs dissatisfaction for MU $n$ in slot $t$.
	
	\item The \textit{edge-execution cost} is the monetary payment (to ESP) for utilizing the edge servers.
	Mathematically, the payment is proportional to the volume of offloaded computation task based on the unit price $p_{t}$.
	Recall that the computation task of volume $\cycle_{n,t} x_{n,t}y_{n,t}$ will be executed at the edge servers, thus incurs monetary payment for MU $n$ in slot $t$.
\end{itemize}

Therefore, the decision $(x_{n,t},y_{n,t})$ in slot $t$ leads to the local-execution cost $\enCPT_{n,t}\left(\cycle_{n,t} x_{n,t}(1-y_{n,t})\right)$ and the edge-execution cost $p_{t}\cycle_{n,t} x_{n,t}y_{n,t}$ for MU $n$.
For notation simplicity, we define the \textit{virtual payoff} of MU $n$ in slot $t$ as follows:
\begin{equation}\label{Equ: virtual payoff}
	\begin{aligned}
		& f_{n,t}(x_{n,t},y_{n,t})
		\triangleq\\
		&  U_{n,t}(x_{n,t})
		- \enCPT_{n,t}\big(\cycle_{n,t} x_{n,t}(1-y_{n,t})\big)
		-  p_{t} \cycle_{n,t} x_{n,t}y_{n,t}.
	\end{aligned}
\end{equation}
Note that the three terms in the virtual payoff all correspond to the monetary meansurement.

\textit{\textbf{Opportunistic Cost:}}
The opportunistic cost is the MU's monetary payment (to ISP) for the wireless data usage exceeding the monthly data cap.
The MU's total data usage consists of \textit{content delivery} and \textit{raw data migration}.
\begin{itemize}
	\item The data usage of \textit{content delivery} merely depends on the MU's content-acquiring decision.
	Specifically, the content-acquiring decision $x_{n,t}$ in slot $t$ leads to the wireless data usage $\data_{n,t} x_{n,t}$.
	
	\item The data usage of \textit{raw data migration} depends on the acquiring and offloading choices.
	Specifically, the decisions $(x_{n,t},y_{n,t})$ lead to the raw data of volume $\indata_{n,t} x_{n,t}y_{n,t}$ to be migrated in slot $t$.
\end{itemize}

Therefore, the MU's wireless data usage under the decision $(x_{n,t},y_{n,t})$ in slot $t$ is given by
\begin{equation}\label{Equ: cap-consumption}
	h_{n,t}(x_{n,t},y_{n,t})
	\triangleq
	\data_{n,t} x_{n,t} + \indata_{n,t} x_{n,t} y_{n,t}.
\end{equation}

Based on the above discussions on utility and costs,  we express the monthly payoff of MU $n\in\mathcal{N}$ as follows:
\begin{equation}
	\begin{aligned}\label{Equ: MU payoff}
		S(\bm{x}_{n},\bm{y}_{n})
		\triangleq
		&\textstyle
		\sum\limits_{t=1}^{T} f_{n,t}(x_{n,t},y_{n,t}) \\
		&\textstyle
		- \adfee\left[ \sum\limits_{t=1}^{T}h_{n,t}(x_{n,t},y_{n,t}) - \dcap_{n} \right]^+
		-\pcap_{n},
	\end{aligned}
\end{equation}
where $\dcap_{n}$ and $\pcap_{n}$ are the monthly data cap and the monthly subscription fee of MU $n$, respectively.
Moreover. $\adfee$ is the per-unit fee for data usage exceeding the monthly data cap.

%
Each MU $n\in\mathcal{N}$ will (selfishly) maximize his monthly payoff $S(\bm{x}_{n},\bm{y}_{n})$.
In practice, however, each MU $n\in\mathcal{N}$ has to determine $(x_{n,t},y_{n,t})$ sequentially in each slot $t$ without knowing the future information.
Therefore, the MU's payoff maximization is an online Joint Content Acquisition and Task Offloading (J-CATO) problem.
We will study the off-line problem in Section \ref{Section: Off-line Analysis and Insights} and investigate the online problem in Section \ref{Section: MU On-Line JCOP}.

\begin{table}
	\setlength{\abovecaptionskip}{3pt}
	\setlength{\belowcaptionskip}{0pt}
	\renewcommand{\arraystretch}{1.25}		
	\caption{Key Notations.}
	\label{Table: Key notations}
	\centering
	\begin{tabular}{|c|c|l|}
		\hline
		\multicolumn{2}{|c|}{\textbf{Symbols}} 		& $\qquad\qquad\qquad$\textbf{Physical Meaning}					\\
		\hline\hline
		\multirow{13}{*}{MU}
		& $\data_{n,t}$					& The data-usage volume for MU $n$ in slot $t$	\\
		& $\cycle_{n,t}$				& The computation volume for MU $n$ in slot $t$	\\
		& $\indata_{n,t}$				& The raw data volume for MU $n$ in slot $t$	\\
		\cline{2-3}
		& $\val_{n,t}$				& MU $n$'s content valuation in slot $t$	\\
		& $\cost_{n,t}$				& MU $n$'s cost sensitivity in slot $t$	\\
		& $x_{n,t}$					& Content-acquiring decision of MU $n$ in slot $t$	\\
		& $y_{n,t}$					& Task-offloading decision of MU $n$ in slot $t$	\\
		& $z_{n,t}$					& The execution decision of MU $n$ in slot $t$	\\
		\cline{2-3}
		& $U_{n,t}(\cdot)$			& Satisfaction of MU $n$ in slot $t$, defined in (\ref{Equ: satisfaction})\\
		& $E_{n,t}(\cdot)$			& Dissatisfaction of MU $n$ in slot $t$, defined in (\ref{Equ: dissatisfaction})\\
		& $f_{n,t}(\cdot)$			& Virtual payoff of MU $n$ in slot $t$, defined in (\ref{Equ: virtual payoff})		\\
		& $h_{n,t}(\cdot)$			& Data usage of MU $n$ in slot $t$, defined in (\ref{Equ: cap-consumption})\\
		& $S_{n}(\cdot)$			& Monthly payoff of MU $n$, defined in (\ref{Equ: MU payoff})\\
		\hline
		\multirow{4}{*}{ISP}
		& $\dcap$					& The data cap offered by ISP	\\
		& $\pcap$					& The subscription fee charged by ISP	\\
		& $\adfee$					& The per-unit fee charged by ISP	\\
		& $V_{\text{ISP}}(\cdot)$	& The total revenue of ISP, defined in (\ref{Equ: Revenue of ISP})\\
		\hline
		\multirow{2}{*}{ESP}	
		& $p_{t}$					& The price of edge service in slot $t$	\\
		& $V_{\text{ESP}}(\cdot)$	& The total revenue of ESP, defined in (\ref{Equ: Revenue of ESP})\\
		\hline
	\end{tabular}
\end{table}

\subsection{Revenues of Providers}\label{Subsection: Revenues of Providers}
Next we introduce the revenue of each service provider (i.e., ESP, ISP, and CPs) based on the MU formulation.

\subsubsection{ESP Revenue}
Edge Service Provider (ESP) profits from the edge computing service and determines the unit price $p_{t}$ of utilizing the edge servers in each slot $t\in\mathcal{T}$.
We denote $\bm{p}=(p_{t}:t\in\mathcal{T})$ as the ESP's pricing for the edge service.
Accordingly, given all the MUs' content-acquiring decisions $\textbf{\textit{X}}=(\bm{x}_{n}:n\in\mathcal{N})$ and task-offloading decisions $\textbf{\textit{Y}}=(\bm{y}_{n}:n\in\mathcal{N})$, the total revenue of ESP is given by
\begin{equation}\label{Equ: Revenue of ESP}
	V_{\text{ESP}}(\bm{p},\textbf{\textit{X}},\textbf{\textit{Y}})
	\triangleq\textstyle
	\sum\limits_{n=1}^{N}\sum\limits_{t=1}^{T} p_{t}\cycle_{n,t} x_{n,t}y_{n,t},
\end{equation}
where $x_{n,t}$ and $y_{n,t}$ are MU $n$'s decisions in slot $t$, and depend on the pricing decisions of ESP.

\subsubsection{ISP Revenue}
Internet Service Provider (ISP) profits from the wireless data service based on the three-part tariff data plans.
Specifically, ISP's revenue consists of the monthly subscription fee and the overage fee for exceeding the monthly data cap.
Given all the MUs' content-acquiring decisions $\textbf{\textit{X}}=(\bm{x}_{n}:n\in\mathcal{N})$ and task-offloading decisions $\textbf{\textit{Y}}=(\bm{y}_{n}:n\in\mathcal{N})$, the monthly revenue of ISP is given by
\begin{equation}\label{Equ: Revenue of ISP}
	V_{\text{ISP}}(\textbf{\textit{X}},\textbf{\textit{Y}})
	\triangleq\textstyle
	\sum\limits_{n=1}^{N}\left( \pcap_{n} + \adfee\left[ \sum\limits_{t=1}^{T}h_{n,t}(x_{n,t},y_{n,t}) - \dcap_{n} \right]^+  \right),
\end{equation}
where $\dcap_{n}$ and $\pcap_{n}$ represent the monthly data cap and the monthly subscription fee of MU $n$, respectively.
In particular, we note that MUs' decisions $(\textbf{\textit{X}},\textbf{\textit{Y}})$ depend on how ESP prices the edge service, i.e., $\bm{p}$.
That is, ESP's pricing decisions may also affect ISP's revenue, which will be demonstrated in Section \ref{Subsection: Economic Impact of Edge Service}.

\subsubsection{CP Revenue}
Content Provider (CP) profits from displaying advertisements when the MUs are using the mobile applications \cite{joe2018sponsoring}.
Intuitively, the longer time period the MUs spend on the mobile applications, the more advertisements can be displayed.
Therefore, the revenue of CP is positively related to the total time period that the MUs spend on the content service, i.e., the content-acquiring decisions $\textbf{\textit{X}}$ of MUs.
Accordingly, we model the total revenue of CPs as follows:
\begin{equation}\label{Equ: Revenue of CP}
	V_{\text{CP}}(\textbf{\textit{X}})
	\triangleq\textstyle
	v\left(\sum\limits_{n=1}^{N}\sum\limits_{t=1}^{T} x_{n,t}\right),
\end{equation}
where $v(\cdot)$ represents a general revenue function, and is assumed to be increasing and concave as in previous literatures (e.g., \cite{joe2018sponsoring}).
Recall that the MUs' acquiring decisions $\textbf{\textit{X}}$ depends on ESP's pricing $\bm{p}$.
This means that ESP's pricing decision will affect the revenue of CPs as well.
We will demonstrate this in Section \ref{Section: Numerical}.

So far we have introduced the system model.
Next we start with analyzing the MU's off-line problem in Section \ref{Section: Off-line Analysis and Insights}, and study the MU's online strategy in Section \ref{Section: MU On-Line JCOP} (based on the off-line insights).
We then investigate how the ESP monetizes the edge service Section \ref{Section: ESP Pricing Problem}.

\section{MU Decision Problem: Off-line Analysis and Insights}\label{Section: Off-line Analysis and Insights}
This section focuses on the MU's off-line payoff maximization problem to unveil the key insights.
In particular, our analysis focuses on a generic MU, thus will neglect the MU index $n$ unless there is confusion.

\subsection{Problem Reformulation}
Suppose that the MU knows all the future information in advance, then the off-line Joint Content Acquisition and Task Offloading (J-CATO) problem is given by
\begin{problem}[Off-Line J-CATO]\label{Problem: JCOP}
	\begin{subequations}
		\begin{align}
			\{\bm{x^*},\bm{y^*}\}=\arg
			\max\limits_{\bm{x},\bm{y}} 			&\quad S(\bm{x},\bm{y})  \\
			\textit{s.t.}	&\quad x_{t}\in[0,1],\ \forall t\in\mathcal{T},\\
			&\quad y_{t}\in[0,1],\ \forall t\in\mathcal{T}.
		\end{align}
	\end{subequations}
\end{problem}

Problem \ref{Problem: JCOP} exhibits two difficulties.
First, the product term between variables $x_{t}$ and $y_{t}$ is non-convex.
Second, the overage payment is piece-wise linear.
Next we reformulate Problem \ref{Problem: JCOP} in the following two steps.

First, we introduce a set of new variables $\bm{z}=(z_{t}\in[0,1]:t\in\mathcal{T})$ to eliminate the product terms.
That is, we use $z_{t}$ to replace $x_{t}y_{t}$ for any $t\in\mathcal{T}$ in the MU monthly payoff, defined in (\ref{Equ: MU payoff}).
To ensure equivalence, we introduce the following conditions:
\begin{equation}\label{Equ: Reformulation Constraint r}
	0\le z_{t} \le x_{t},\ \forall t\in\mathcal{T}.
\end{equation}
For presentation convenience, we will refer to $\bm{z}$ as the MU's \textbf{executing decision}.
Accordingly, we express the MU's virtual payoff $f_{t}(\cdot)$ defined in (\ref{Equ: virtual payoff}) and the wireless data usage $h_{t}(\cdot)$ defined in (\ref{Equ: cap-consumption}) as follows:
	\begin{subequations}\label{Equ: new f g}
		\begin{align}
			\tilde{f}_{t}(x_{t},z_{t}) & \triangleq U_{t}(x_{t})
			- \enCPT_{t}\big((x_{t}-z_{t})\cycle_{t}\big) - p_{t}z_{t} \cycle_{t} ,\\
			\tilde{h}_{t}(x_{t},z_{t}) & \triangleq \data_{t} x_{t} + \indata_{t} z_{t}.\label{Equ: tilde g}
		\end{align}
	\end{subequations}

Second, we introduce a new variable $s\in\mathbb{R}$ to linearize the piece-wise linear term in the MU's monthly payoff, defined in (\ref{Equ: MU payoff}).
Mathematically, the new variable $s$ represents the data usage exceeding the monthly data cap.
For equivalence, we should ensure the following two conditions
	\begin{subequations}\label{Equ: Reformulation Constraint s}
		\begin{align}
			s&\ge 0, \label{Equ: Reformulation Constraint s 1}	\\
			s&\ge\textstyle \sum\limits_{t=1}^{T}\tilde{h}_{t}(x_{t},z_{t}) -\dcap. \label{Equ: Reformulation Constraint s 2}
		\end{align}
	\end{subequations}

Based on the above reformulations, now we are able to express the MU's monthly payoff as follows:
\begin{equation}
	\tilde{S}(\bm{x},\bm{z},s)
	\triangleq\textstyle
	\sum\limits_{t=1}^{T}\tilde{f}(x_{t},z_{t}) - \adfee s,
\end{equation}
and the reformulated off-line J-CATO problem is given by

\begin{problem}[Reformulated Off-Line J-CATO]\label{Problem: Reformulated JCOP}
	\begin{subequations}
		\begin{align}
			\{\bm{x^*},\bm{z^*},s^*\}=\arg
			\max\limits_{\bm{x},\bm{z},s} &\quad \tilde{S}(\bm{x},\bm{z},s) \\
			\textit{s.t.} 	&\quad (\ref{Equ: Reformulation Constraint r}),(\ref{Equ: Reformulation Constraint s}) \\
			&\quad x_{t}\in[0,1],\ \forall t\in\mathcal{T},\\
			&\quad z_{t}\in[0,1],\ \forall t\in\mathcal{T}.\label{Equ: Reformulated JCOP integer}
		\end{align}
	\end{subequations}
\end{problem}

Note that after the above reformulation, Problem \ref{Problem: Reformulated JCOP} is a convex optimization with differentiable objective and constraints.
We can solve it by analyzing the Karush-Kuhn-Tucker (KKT) conditions of Problem \ref{Problem: Reformulated JCOP}.
Furthermore, we have
\begin{equation}
	y_{t}^*=\left\{
	\begin{aligned}
		& 0,					&\textit{if }& x_{t}^*=0,\\
		& {z_{t}^*}/{x_{t}^*},	&\textit{if }& x_{t}^*>0,
	\end{aligned}
	\right.
	\quad\forall t\in\mathcal{T}.
\end{equation}

\subsection{Solution of Problem \ref{Problem: Reformulated JCOP}}
Next we solve Problem \ref{Problem: Reformulated JCOP} based on the Karush-Kuhn-Tucker (KKT) conditions (as Problem \ref{Problem: Reformulated JCOP} is convex).
Instead of directly presenting the mathematical solution, we will step-by-step elaborate the key insights of the KKT analysis.
These insights are crucially valuable to our online strategy in Section \ref{Section: MU On-Line JCOP}.

\subsubsection{Shadow Price}
Problem \ref{Problem: Reformulated JCOP} is a constrained optimization problem.
As we will see later, the constraint (\ref{Equ: Reformulation Constraint s 2}) plays a significant role in the KKT analysis.
Hence we let $\lambda$ denote the Lagrangian multiplier associated with constraint (\ref{Equ: Reformulation Constraint s 2}).
According to the constrained optimization in economics \cite{heckman1974shadow}, $\lambda$ can be interpreted as the \textit{shadow price} of the wireless data service.
We first present a basic property related to the shadow price $\lambda$ in Proposition \ref{Proposition: lambda}.
\begin{proposition}\label{Proposition: lambda}
	The shadow price $\lambda^*$ satisfying the KKT conditions of Problem \ref{Problem: Reformulated JCOP} is no larger than $\adfee$, i.e., $0\le\lambda^*\le\adfee$.
\end{proposition}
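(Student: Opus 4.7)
The plan is to obtain the bound directly from the KKT stationarity condition associated with the auxiliary variable $s$, exploiting the fact that $s$ appears only linearly, in the objective (through the term $-\adfee s$) and in the two constraints (\ref{Equ: Reformulation Constraint s 1}) and (\ref{Equ: Reformulation Constraint s 2}).

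First I would form the Lagrangian of Problem \ref{Problem: Reformulated JCOP}. Let $\lambda\ge 0$ be the multiplier of (\ref{Equ: Reformulation Constraint s 2}), written in the form $\sum_{t}\tilde{h}_{t}(x_{t},z_{t})-\dcap - s \le 0$, and let $\mu\ge 0$ be the multiplier of (\ref{Equ: Reformulation Constraint s 1}), written as $-s\le 0$. Ignoring the remaining constraints (whose multipliers do not involve $s$), the relevant part of the Lagrangian reads
\begin{equation*}
  L \;=\; \sum_{t=1}^{T}\tilde{f}_{t}(x_{t},z_{t}) - \adfee\, s
  \;-\; \lambda\bigl(\textstyle\sum_{t=1}^{T}\tilde{h}_{t}(x_{t},z_{t}) - \dcap - s\bigr)
  \;+\; \mu\, s .
\end{equation*}

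Next I would invoke the KKT stationarity condition with respect to $s$. Since $s$ is an unconstrained real variable (its sign constraint is already encoded by $\mu$), the partial derivative must vanish at the optimum:
\begin{equation*}
  \frac{\partial L}{\partial s}\Big|_{\ast} \;=\; -\adfee + \lambda^{*} + \mu^{*} \;=\; 0,
\end{equation*}
which gives the identity $\lambda^{*} + \mu^{*} = \adfee$. Combined with the dual feasibility conditions $\lambda^{*}\ge 0$ and $\mu^{*}\ge 0$, this immediately yields $0\le\lambda^{*}\le\adfee$, which is the claim.

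I do not expect any real obstacle here: because $s$ enters the problem only linearly, the stationarity condition in $s$ collapses to a single algebraic identity among multipliers, and the bound follows just from nonnegativity of dual variables. The only thing to be careful about is the sign convention when writing the two $s$-constraints in standard $\le 0$ form, so that the multipliers $\lambda^{*},\mu^{*}$ are both guaranteed to be nonnegative; once that is fixed, the derivation is a one-line consequence of the KKT conditions, and no further structural property of $\tilde{f}_t$, $\tilde{h}_t$, or the box constraints on $(\bm x,\bm z)$ is needed.
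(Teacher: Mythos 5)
Your proposal is correct and matches the paper's own argument: the paper's Appendix~\ref{Appendix: MU offline} introduces multipliers $\lambda$ and $\Lambda$ for the two $s$-constraints, obtains the stationarity condition $\lambda+\Lambda-\adfee=0$ from $\partial L/\partial s=0$, and concludes $0\le\lambda^*\le\adfee$ from dual feasibility, exactly as you do. No gaps.
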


Proposition \ref{Proposition: lambda} shows a feasible range of the optimal shadow price satisfying the KKT conditions of Problem \ref{Problem: Reformulated JCOP}.
The feasible range mainly results from the three-part tariff wireless data service $\{\dcap,\pcap,\adfee\}$.
Specifically, the shadow price of the wireless data usage is zero if the MU's monthly data cap $\dcap$ is sufficient comparing to the MU's total data usage.
However, the shadow price is the same as the per-unit overage fee $\adfee$ if the MU's total data usage exceeds the monthly data cap $\dcap$.
In addition, the shadow price locates between the interval $(0,\adfee)$ if the MU's total data usage exactly equals to the monthly data cap $\dcap$.
We will introduce how to compute the optimal shadow price $\lambda^*$ in Section \ref{Subsubsection: Optimal Shadow Price}.
Before that, we first demonstrate the primal-dual solution structure in Section \ref{Subsubsection: Solution Structure}.

\subsubsection{Solution Structure}\label{Subsubsection: Solution Structure}
We elaborate the mathematical structure of the optimal primal-dual solution $(\bm{x}^{*},\bm{z}^{*},\lambda^*)$ of four different cases in Lemmas \ref{Lemma: primal-dual I}$\sim$\ref{Lemma: primal-dual IV}, respectively.
The four cases are characterized based on the MU's time-dependent features, i.e., $(\cost_{t},\val_{t})$.
Recall that $\val_{t}$ is the scalar in (\ref{Equ: satisfaction}) and indicates MU's valuation on the content service.
The larger $\val_{t}$ value means that MU has a greater demand on the content service in slot $t$.
In addition, $\cost_{t}$ is the scalar in (\ref{Equ: dissatisfaction}) and measures MU's sensitivity to the local execution.
The larger $\cost_{t}$ value means that MU is less tolerant to Quality of Experience (QoE) reduction, thus prefers to the choice of edge-execution.
Next we present the results for the four cases  in Lemmas \ref{Lemma: primal-dual I}$\sim$\ref{Lemma: primal-dual IV}.
\begin{lemma}\label{Lemma: primal-dual I}
	The optimal primal-dual solution $(\bm{x}^*,\bm{z}^*,\lambda^*)$ satisfies $(x_{t}^{*},z_{t}^{*})=(1,0)$, for the case of $(\cost_{t},\val_{t})\in\Omega_{t}^{\text{I}}(\lambda^*)$, where the set $\Omega_{t}^{\text{I}}(\cdot)$ is defined as follows:
	\begin{equation}\label{Equ: set I}
		\Omega_{t}^{\text{I}}(\lambda)
		\triangleq
		\Big\{\textstyle (\cost,\val): \val > \frac{\cost\cycle_{t} \encpt'_{t}(\cycle_{t}) + \data_{t}\lambda}{ u'_{t}(1)},
		\ \cost  < \frac{p_{t}\cycle_{t}+\indata_{t}\lambda}{\cycle_{t} \encpt'_{t}(\cycle_{t})}  \Big\},
	\end{equation}
	where $u'_{t}(\cdot)$ and $e'_{t}(\cdot)$ represent the derivative of the normalized utility and cost functions, respectively.
	Moreover, $(\data_{t},\indata_{t},\cycle_{t})$ is the content service realization in slot $t$.
\end{lemma}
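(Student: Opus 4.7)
The plan is to invoke the Karush--Kuhn--Tucker (KKT) conditions of Problem~\ref{Problem: Reformulated JCOP}, which are both necessary and sufficient since the objective $\tilde{S}$ is concave (each $U_{t}$ is concave and each $E_{t}$ is convex) and all constraints are affine. The key structural observation is that, once the shadow price $\lambda^{*}$ is fixed at its optimum from Proposition~\ref{Proposition: lambda}, the stationarity conditions decouple across time slots, reducing the characterization of $(x_{t}^{*},z_{t}^{*})$ to a per-slot subproblem whose solution depends only on $(\cost_{t},\val_{t})$ and on the realization $(\data_{t},\indata_{t},\cycle_{t})$.

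First I would write the Lagrangian with multipliers $\lambda\ge 0$ for (\ref{Equ: Reformulation Constraint s 2}); $\mu_{t},\xi_{t}\ge 0$ for $x_{t}\le 1$ and $x_{t}\ge 0$; $\alpha_{t},\eta_{t}\ge 0$ for $z_{t}\le 1$ and $z_{t}\ge 0$; and $\nu_{t}\ge 0$ for $z_{t}\le x_{t}$. Stationarity at slot $t$ then yields
\begin{align*}
&\val_{t}u'_{t}(x_{t})-\cost_{t}\cycle_{t}e'_{t}\big((x_{t}-z_{t})\cycle_{t}\big)-\lambda\data_{t}-\mu_{t}+\xi_{t}+\nu_{t}=0,\\
&\cost_{t}\cycle_{t}e'_{t}\big((x_{t}-z_{t})\cycle_{t}\big)-p_{t}\cycle_{t}-\lambda\indata_{t}-\alpha_{t}+\eta_{t}-\nu_{t}=0.
\end{align*}

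Next I would substitute the candidate $(x_{t}^{*},z_{t}^{*})=(1,0)$. Only the constraints $x_{t}\le 1$ and $z_{t}\ge 0$ bind, so complementary slackness forces $\xi_{t}=\alpha_{t}=\nu_{t}=0$; solving the two stationarity equations for the remaining multipliers gives
\begin{equation*}
\mu_{t}=\val_{t}u'_{t}(1)-\cost_{t}\cycle_{t}e'_{t}(\cycle_{t})-\lambda^{*}\data_{t},\ \ \eta_{t}=p_{t}\cycle_{t}+\lambda^{*}\indata_{t}-\cost_{t}\cycle_{t}e'_{t}(\cycle_{t}).
\end{equation*}
Imposing dual feasibility $\mu_{t}\ge 0$ and $\eta_{t}\ge 0$ reproduces exactly the two inequalities defining $\Omega_{t}^{\text{I}}(\lambda^{*})$ in (\ref{Equ: set I}). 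Conversely, for any $(\cost_{t},\val_{t})$ in that region, the above assignment of multipliers, together with zeroing every other per-slot multiplier and coupling through the global $\lambda^{*}$, satisfies all KKT conditions at $(1,0)$; sufficiency of KKT then certifies optimality of $(x_{t}^{*},z_{t}^{*})=(1,0)$.

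The main obstacle is purely clerical: six inequality constraints per slot mean a proliferation of multipliers whose signs and active sets must be tracked carefully. The strict inequalities in $\Omega_{t}^{\text{I}}(\lambda^{*})$ correspond to $\mu_{t}>0$ and $\eta_{t}>0$, which eliminate boundary degeneracy and pin down $(1,0)$ as the unique maximizer of the per-slot subproblem; the companion lemmas for the three other corners of the $(\cost_{t},\val_{t})$ plane will then follow by the same template with different active-set choices.
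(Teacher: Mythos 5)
Your proposal is correct and follows essentially the same route as the paper: write the KKT conditions of the convex reformulated Problem~\ref{Problem: Reformulated JCOP}, fix the optimal shadow price $\lambda^{*}$, decouple per slot, identify the active set at $(1,0)$, and read off the two dual-feasibility inequalities $\mu_{t}\ge 0$ and $\eta_{t}\ge 0$ as exactly the defining inequalities of $\Omega_{t}^{\text{I}}(\lambda^{*})$. If anything, your multiplier bookkeeping is slightly more complete than the paper's appendix, which omits the multiplier for $z_{t}\ge 0$ from its Lagrangian and therefore cannot literally balance its second stationarity equation in this case with all multipliers set to zero; your explicit $\eta_{t}=p_{t}\cycle_{t}+\lambda^{*}\indata_{t}-\cost_{t}\cycle_{t}\encpt'_{t}(\cycle_{t})>0$ repairs that.
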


Lemma \ref{Lemma: primal-dual I} presents the optimal primal-dual solution structure for the case of $(\cost_{t},\val_{t})\in\Omega_{t}^{\text{I}}(\lambda^*)$.
We elaborate this lemma in the following two aspects.
\begin{itemize}
	\item First, $\Omega_{t}^{\text{I}}(\lambda)$, defined in (\ref{Equ: set I}), is a set in terms of the MU's characteristics $(\cost,\val)$.
	It is time-dependent as it is defined based on the MU's satisfaction $u_{t}(\cdot)$ and dissatisfaction $e_{t}(\cdot)$ as well as the MU's content service realization $(\data_{t},\indata_{t},\cycle_{t})$.
	Moreover, the set $\Omega_{t}^{\text{I}}(\lambda)$ depends on the shadow price $\lambda$ as well.
	Hence the shadow price $\lambda$ also affects whether the MU's characteristic $(\cost_{t},\val_{t})$ belongs to the set $\Omega_{t}^{\text{I}}(\lambda)$ in slot $t$.
	
	\item Second, the case of $(\cost_{t},\val_{t})\in\Omega_{t}^{\text{I}}(\lambda^*)$ in Lemma \ref{Lemma: primal-dual I} implies that the characteristic $(\cost_{t},\val_{t})$ corresponds to a \textit{weak-sensitivity} and \textit{high-valuation} state, denoted by the gray region in Fig. \ref{fig: OfflineRegion}.
	In this case, the MU is ``self-sufficient'' in terms of the computation capacity.
	That is, the MU is able to acquire the content service for the entire slot (i.e., $x_{t}^{*}=1$) under the pure local execution mode (i.e., $z_{t}^{*}=0$).
\end{itemize}

\begin{lemma}\label{Lemma: primal-dual II}
	The optimal primal-dual solution $(\bm{x}^*,\bm{z}^*,\lambda^*)$ satisfies $(x_{t}^{*},z_{t}^{*})=\left(1,z_{t}^{\text{II}}(\lambda^*)\right)$ for the case of $(\cost_{t},\val_{t})\in\Omega_{t}^{\text{II}}(\lambda^*)$.
	Speicfically, $z_{t}^{\text{II}}(\cdot)$ is given by
	\begin{equation}\label{Equ: z II}
		\textstyle
		z_{t}^{\text{II}}(\lambda)
		\triangleq
		1 - {\encpt'^{-1}_{t}\left( \frac{p_{t}\cycle_{t}+\indata_{t}\lambda}{\cost_{t}\cycle_{t}} \right)}\Big/{\cycle_{t}},
	\end{equation}
	where $\encpt'^{-1}_{t}(\cdot)$ represents the inverse function of $e_{t}(\cdot)$.
	Moerover, the set $\Omega_{t}^{\text{II}}(\cdot)$ is given by
	\begin{equation}\label{Equ: set II}
		\Omega_{t}^{\text{II}}(\lambda)
		\triangleq
		\Big\{\textstyle
		 (\cost,\val): \val > \frac{p_{t}\cycle_{t}+(\data_{t}+\indata_{t})\lambda}{u'_{t}(1)},
		\ \cost  \ge \frac{p_{t}\cycle_{t}+\indata_{t}\lambda}{\cycle_{t} \encpt'_{t}(\cycle_{t})} \Big\}.
	\end{equation}
\end{lemma}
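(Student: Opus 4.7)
The plan is to verify that $(x_t^*, z_t^*) = (1, z_t^{\text{II}}(\lambda^*))$ satisfies all KKT conditions of the reformulated convex Problem \ref{Problem: Reformulated JCOP}; since that problem is a differentiable convex program with an easily verified Slater point (any strict interior with $s$ large enough), KKT sufficiency then yields the claim slot-by-slot. I would attach $\lambda\ge 0$ to the aggregate data constraint (\ref{Equ: Reformulation Constraint s 2}) (the very multiplier already introduced above Proposition \ref{Proposition: lambda}), multipliers $\gamma_t,\zeta_t\ge 0$ to $z_t\le x_t$ and $z_t\ge 0$, and $\beta_t,\alpha_t\ge 0$ to $x_t\le 1$ and $x_t\ge 0$. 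Using (\ref{Equ: new f g}), the relevant partials are
\[
\partial_{x_t}\tilde{f}_t = \val_t u'_t(x_t) - \cost_t\cycle_t \encpt'_t((x_t-z_t)\cycle_t), \quad \partial_{z_t}\tilde{f}_t = \cost_t\cycle_t \encpt'_t((x_t-z_t)\cycle_t) - p_t\cycle_t,
\]
with $\tilde{h}_t$ contributing the linear weights $\data_t$ and $\indata_t$ to the $x_t$- and $z_t$-stationarity, respectively.

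Next I would \emph{conjecture} that in Case~II the slot-$t$ optimum sits at the corner $x_t^*=1$ with $z_t^*\in(0,1)$. Under this conjecture complementary slackness forces $\alpha_t=\gamma_t=\zeta_t=0$, so $z_t$-stationarity collapses to $\cost_t\cycle_t \encpt'_t((1-z_t^*)\cycle_t) = p_t\cycle_t + \indata_t\lambda^*$; inverting $\encpt'_t$ yields exactly the closed form (\ref{Equ: z II}) for $z_t^{\text{II}}(\lambda^*)$. Monotonicity of $\encpt'_t$ translates the feasibility bound $z_t^{\text{II}}(\lambda^*)\ge 0$ into $\cost_t\ge (p_t\cycle_t+\indata_t\lambda^*)/(\cycle_t\encpt'_t(\cycle_t))$, which is the sensitivity half of (\ref{Equ: set II}). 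Plugging $\cost_t\cycle_t \encpt'_t((1-z_t^*)\cycle_t)=p_t\cycle_t+\indata_t\lambda^*$ back into the $x_t$-stationarity at $x_t^*=1$ (with $\alpha_t=0$ and $\beta_t\ge 0$) delivers $\val_t u'_t(1)\ge p_t\cycle_t + (\data_t+\indata_t)\lambda^*$, i.e., the valuation half of (\ref{Equ: set II}). Together this exhibits a full KKT tuple for the candidate, certifying optimality by convexity.

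The hard part I anticipate is bookkeeping rather than calculus. First, the definition of $\Omega_t^{\text{II}}(\lambda^*)$ includes the boundary $\cost_t=(p_t\cycle_t+\indata_t\lambda^*)/(\cycle_t\encpt'_t(\cycle_t))$, at which $z_t^{\text{II}}(\lambda^*)=0$ and the constraint $z_t\ge 0$ becomes active, so I must confirm that formula (\ref{Equ: z II}) still matches the KKT solution with a possibly nonzero $\zeta_t$. Second, I need to check that $z_t^{\text{II}}(\lambda^*)\le 1$ holds automatically, which follows from $\encpt'_t\ge 0$ making $\encpt'^{-1}_t$ non-negative, so no extra condition on $(\cost_t,\val_t)$ is needed. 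Finally, to ensure the case split is coherent, I would verify that the competing KKT regimes---namely $x_t^*<1$ or the pure-local corner $z_t^*=0$ with $x_t^*=1$---are ruled out precisely when the two defining inequalities of $\Omega_t^{\text{II}}(\lambda^*)$ both hold strictly, so that this region does not overlap with the ones handled by Lemma \ref{Lemma: primal-dual I} or by the remaining two cases.
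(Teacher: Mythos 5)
Your proposal is correct and follows essentially the same route as the paper's proof (carried out inside the proof of Theorem~\ref{Theorem: MU KKT} in Appendix~\ref{Appendix: MU offline}): write the KKT conditions of the reformulated convex Problem~\ref{Problem: Reformulated JCOP}, posit the corner $x_{t}^{*}=1$ with interior $z_{t}^{*}$ so the multipliers on $z_t\le x_t$, $z_t\ge 0$, and $z_t\le 1$ vanish, read off (\ref{Equ: z II}) from the $z_t$-stationarity, and translate $z_{t}^{\text{II}}(\lambda^{*})\ge 0$ together with the sign of the multiplier on $x_t\le 1$ into the two defining inequalities of $\Omega_{t}^{\text{II}}(\lambda^{*})$. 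Your back-substitution of the $z_t$-stationarity into the $x_t$-stationarity (yielding $\val_t u'_t(1)\ge p_t\cycle_t+(\data_t+\indata_t)\lambda^{*}$ directly) and your handling of the boundary $z_{t}^{\text{II}}=0$ are, if anything, slightly more careful than the paper's written version, but the argument is the same.
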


Lemma \ref{Lemma: primal-dual II} presents the optimal primal-dual solution structure for the case of $(\cost_{t},\val_{t})\in\Omega_{t}^{\text{II}}(\lambda^*)$.
Similarly, the set $\Omega_{t}^{\text{II}}(\cdot)$, defined in (\ref{Equ: set II}), is time-dependent and affected by the shadow price $\lambda$.
Moreover, the case of $(\cost_{t},\val_{t})\in\Omega_{t}^{\text{II}}(\lambda^*)$ corresponds to \textit{strong-sensitivity} and \textit{high-valuation} state, denoted by the blue region in Fig. \ref{fig: OfflineRegion}.
In this case, the MU's computation capacity cannot meet his computation tasks due to the strong-sensitivity (i.e., a large $\cost_{t}$ value).
Therefore, in spite of the full acquiring decision (i.e., $x_{t}^{*}=1$), the MU tends to partially offload his computation task for remote execution (i.e., $0<z_{t}^{\textit{II}}(\lambda^*)<1$).

\begin{lemma}\label{Lemma: primal-dual III}
	The optimal primal-dual solution $(\bm{x}^*,\bm{z}^*,\lambda^*)$ satisfies $(x_{t}^{*},z_{t}^{*})=\left(x^{\text{III}}_{t}(\lambda^*),z^{\text{III}}_{t}(\lambda^*)\right)$ for the case of $(\cost_{t},\val_{t})\in\Omega_{t}^{\text{III}}(\lambda^*)$, where $x^{\text{III}}_{t}(\cdot)$ and $z^{\text{III}}_{t}(\cdot)$ are given by
	\begin{subequations}
		\begin{align}
			x_{t}^{\text{III}}(\lambda)
			&\textstyle=  u'^{-1}_{t}\left( \frac{p_{t}\cycle_{t}+(\data_{t}+\indata_{t})\lambda}{\val_{t}} \right),\\
			z_{t}^{\text{III}}(\lambda)
			&\textstyle= x_{t}^{\text{III}}(\lambda)-\encpt'^{-1}_{t}\left( \frac{p_{t}\cycle_{t}+\indata_{t}\lambda}{\cost_{t}\cycle_{t}} \right)\Big/\cycle_{t},
		\end{align}	
	\end{subequations}	
	where $u'^{-1}_{t}(\cdot)$ represents the inverse function of $u'_{t}(\cdot)$.
	Moreover, the set $\Omega_{t}^{\text{III}}(\cdot)$ is
	\begin{equation}\label{Equ: set III}
	\begin{aligned}
		\Omega_{t}^{\text{III}}(\lambda)
		\triangleq
		&\bigg\{\textstyle (\cost,\val): \frac{p_{t}\cycle_{t} +(\data_{t}+\indata_{t})\lambda }{ u'_{t}\left( \encpt'^{-1}_{t}\left(\frac{\cycle_{t} p_{t}+\indata_{t}\lambda}{\cost\cycle_{t}}\right)\big/\cycle_{t} \right) }\le \val \le \\
		&\qquad\qquad\qquad\qquad\qquad\qquad \textstyle\frac{p_{t}\cycle_{t}+(\data_{t}+\indata_{t})\lambda}{u'_{t}(1)}\bigg\}.
	\end{aligned}		
	\end{equation}
\end{lemma}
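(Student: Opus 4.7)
The plan is to derive Lemma \ref{Lemma: primal-dual III} through a direct KKT analysis of Problem \ref{Problem: Reformulated JCOP}, reusing the Lagrangian set-up that produces Proposition \ref{Proposition: lambda} and Lemmas \ref{Lemma: primal-dual I}--\ref{Lemma: primal-dual II}. First I would form the Lagrangian by attaching a multiplier $\lambda\ge 0$ to constraint (\ref{Equ: Reformulation Constraint s 2}), a multiplier $\mu\ge 0$ to (\ref{Equ: Reformulation Constraint s 1}), and non-negative multipliers to the box constraints on $x_t,z_t$ and to the coupling $z_t\le x_t$ in (\ref{Equ: Reformulation Constraint r}). Stationarity in $s$ gives $\lambda=\adfee-\mu\in[0,\adfee]$, consistent with Proposition \ref{Proposition: lambda}, and fixes the admissible shadow price.

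The heart of the argument is to identify which box/coupling multipliers vanish in the regime $(\cost_t,\val_t)\in\Omega_t^{\text{III}}(\lambda^*)$. I would posit that this region corresponds to a strictly interior optimum, $0<z_t^*<x_t^*<1$, so every box multiplier and the multiplier of $z_t\le x_t$ vanishes by complementary slackness. The stationarity conditions in $z_t$ and $x_t$ then collapse to
\begin{subequations}
\begin{align}
\cost_t\cycle_t\, e'_t\bigl((x_t-z_t)\cycle_t\bigr) &= p_t\cycle_t+\indata_t\lambda,\\
\val_t u'_t(x_t) &= \cost_t\cycle_t\, e'_t\bigl((x_t-z_t)\cycle_t\bigr)+\data_t\lambda.
\end{align}
\end{subequations}
Substituting the first equation into the second eliminates the dissatisfaction term and yields $\val_t u'_t(x_t)=p_t\cycle_t+(\data_t+\indata_t)\lambda$. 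Inverting $u'_t$ (strictly decreasing by concavity of $u_t$) gives exactly $x_t^{\text{III}}(\lambda)$, and inverting $e'_t$ then solving for $z_t$ gives exactly $z_t^{\text{III}}(\lambda)$.

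Next I would close the loop by showing that the conjectured interior structure is self-consistent precisely when $(\cost_t,\val_t)$ lies in $\Omega_t^{\text{III}}(\lambda^*)$. Applying the decreasing map $u'_t$ to the inequality $x_t^{\text{III}}(\lambda)\le 1$ translates to $\val_t\le (p_t\cycle_t+(\data_t+\indata_t)\lambda)/u'_t(1)$, which is the upper bound on $\val$ in (\ref{Equ: set III}). Similarly, the inequality $z_t^{\text{III}}(\lambda)\ge 0$ is equivalent to $x_t^{\text{III}}(\lambda)\ge \encpt'^{-1}_t((p_t\cycle_t+\indata_t\lambda)/(\cost_t\cycle_t))/\cycle_t$; applying $u'_t$ and rearranging produces the lower bound on $\val$ in (\ref{Equ: set III}). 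The inequality $z_t^{\text{III}}\le x_t^{\text{III}}$ is automatic because $\encpt'^{-1}_t(\cdot)/\cycle_t\ge 0$, and $z_t^{\text{III}}\le 1$ follows from $z_t^{\text{III}}\le x_t^{\text{III}}\le 1$. Since Problem \ref{Problem: Reformulated JCOP} is convex and the objective is strictly concave in the relevant directions, the KKT point so constructed is the unique optimum in this regime.

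The step I expect to require the most care is the bookkeeping around the inverses: because both $u'_t$ and $e'_t$ are monotone in opposite senses to their values, each box-boundary condition flips direction when translated to a bound on $\val_t$, and one must be careful to obtain exactly the two-sided interval in (\ref{Equ: set III}) rather than its complement. The algebra itself is routine once the active-constraint pattern is pinned down; the subtlety is verifying that the three KKT regimes handled by Lemmas \ref{Lemma: primal-dual I}--\ref{Lemma: primal-dual III} tile the $(\cost,\val)$-plane without overlap, so that the interior regime is unambiguously described by $\Omega_t^{\text{III}}(\lambda^*)$.
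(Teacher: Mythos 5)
Your proposal is correct and follows essentially the same route as the paper: a KKT analysis of Problem \ref{Problem: Reformulated JCOP} in which the regime $\Omega_t^{\text{III}}$ is identified with the interior active-constraint pattern ($\mu_t^*=\xi_t^*=\psi_t^*=0$), the two stationarity conditions in $x_t$ and $z_t$ are combined to eliminate the $e'_t$ term and inverted to obtain $x_t^{\text{III}}$ and $z_t^{\text{III}}$, and the bounds defining $\Omega_t^{\text{III}}$ are recovered as the consistency conditions $x_t^{\text{III}}\le 1$ and $z_t^{\text{III}}\ge 0$. The paper's argument (in the appendix proof of Theorem \ref{Theorem: MU KKT}) runs the same equivalence in the opposite direction, from membership in $\Omega_t^{\text{III}}$ to the vanishing multipliers, but the content is identical.
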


Lemma \ref{Lemma: primal-dual III} presents the optimal primal-dual solution structure for the case of $(\cost_{t},\val_{t})\in\Omega_{t}^{\text{III}}(\lambda^*)$.
The set $\Omega_{t}^{\text{III}}(\cdot)$, defined in (\ref{Equ: set III}), corresponds to the \textit{strong-sensitivity} and \textit{low-valuation} state, denoted by the green region in Fig. \ref{fig: OfflineRegion}.
Different from Lemma \ref{Lemma: primal-dual II}, the MU in this case will only consume part of the slot on the content service (i.e., $0<x_{t}^{\textit{III}}(\lambda^*)<1$) due to the low-valuation (i.e., a small $\val_{t}$ value).

\begin{figure}
	\setlength{\abovecaptionskip}{3pt}
	\setlength{\belowcaptionskip}{0pt}
	\centering
	\includegraphics[width=0.6\linewidth]{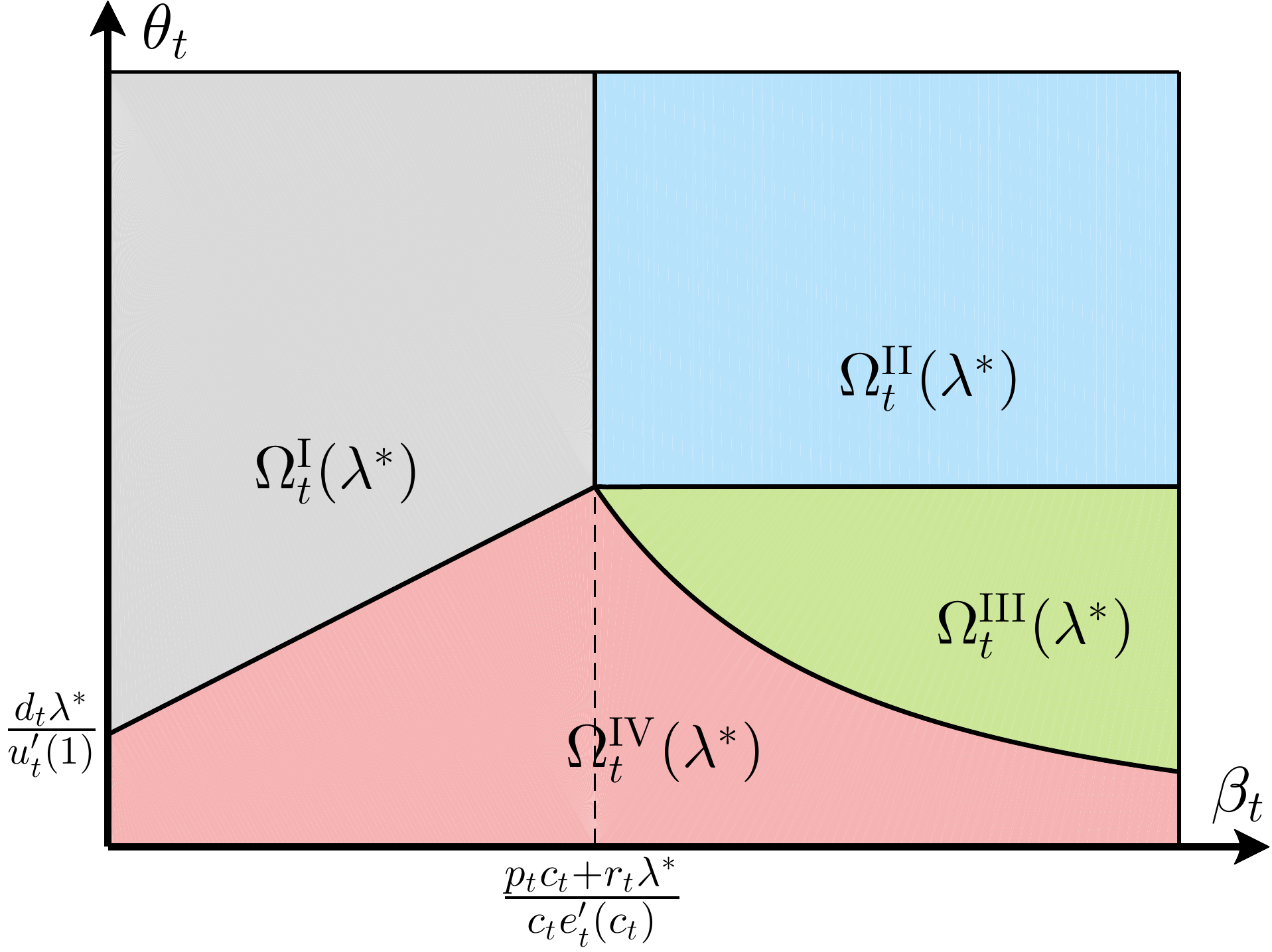}
	\caption{Regimes of MU's state.}
	\label{fig: OfflineRegion}
\end{figure}
\begin{lemma}\label{Lemma: primal-dual IV}
	The optimal primal-dual solution $(\bm{x}^*,\bm{z}^*,\lambda^*)$ satisfies $(x_{t}^{*},z_{t}^{*})=\left(x^{\text{IV}}_{t}(\lambda^*),0\right)$ for the case of $(\cost_{t},\val_{t})\in\Omega_{t}^{\text{IV}}(\lambda^*)$, where $x^{\text{IV}}_{t}(\cdot)$ satisfies
	\begin{equation}
		\textstyle
		\val_{t}u'_{t}\left(x_{t}^{\text{IV}}\right) -\cost\cycle_{t} \encpt'_{t}\left(x_{t}^{\text{IV}}\cycle_{t}\right) =\data_{t}\lambda ,
	\end{equation}
	and the set $\Omega_{t}^{\text{IV}}(\cdot)$ is given by
	\begin{equation}\label{Equ: set IV}
	\begin{aligned}
		\Omega_{t}^{\text{IV}}(\lambda)
		\triangleq
		&\bigg\{\textstyle (\cost,\val): \val<\frac{p_{t}\cycle_{t} +(\data_{t}+\indata_{t})\lambda }{ u'_{t}\left( \encpt'^{-1}_{t}\left(\frac{\cycle_{t} p_{t}+\indata_{t}\lambda}{\cost\cycle_{t}}\right)\big/\cycle_{t} \right) },\\
		&\qquad\qquad\qquad\qquad\qquad
		\textstyle \val \le \frac{\cost\cycle_{t} \encpt'_{t}(\cycle_{t}) + \data_{t}\lambda}{u'_{t}(1)} \bigg\}.
	\end{aligned}		
	\end{equation}
\end{lemma}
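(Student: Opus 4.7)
My plan is to prove Lemma \ref{Lemma: primal-dual IV} via a direct KKT analysis of the reformulated convex Problem \ref{Problem: Reformulated JCOP}, in parallel with the proofs of Lemmas \ref{Lemma: primal-dual I}--\ref{Lemma: primal-dual III}. Since Problem \ref{Problem: Reformulated JCOP} is convex with affine/differentiable constraints, the KKT conditions are necessary and sufficient, so the whole task reduces to exhibiting multipliers that satisfy them.

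First I would write the Lagrangian by associating $\lambda \geq 0$ with (\ref{Equ: Reformulation Constraint s 2}), multipliers $\mu_t \geq 0$ with $z_t \leq x_t$, and box multipliers $\underline{\alpha}_t, \overline{\alpha}_t \geq 0$ and $\underline{\gamma}_t, \overline{\gamma}_t \geq 0$ with $x_t \in [0,1]$ and $z_t \in [0,1]$ respectively. Taking $\partial/\partial x_t$ and $\partial/\partial z_t$ of the Lagrangian of $\tilde{S}$ gives the two slotwise stationarity equations
\begin{equation*}
\val_t u_t'(x_t) - \cost_t \cycle_t e_t'((x_t-z_t)\cycle_t) - \data_t \lambda - \mu_t - \overline{\alpha}_t + \underline{\alpha}_t = 0,
\end{equation*}
\begin{equation*}
\cost_t \cycle_t e_t'((x_t-z_t)\cycle_t) - p_t \cycle_t - \indata_t \lambda + \mu_t - \overline{\gamma}_t + \underline{\gamma}_t = 0,
\end{equation*}
together with complementary slackness on every inequality.

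Next I would specialize to the conjectured regime $x_t^*\in(0,1)$ and $z_t^*=0$. Then $z_t^*<x_t^*$ so $\mu_t=0$, and $x_t^*$ is strictly interior so $\underline{\alpha}_t=\overline{\alpha}_t=0$; $z_t^*=0$ makes $\overline{\gamma}_t=0$ and leaves $\underline{\gamma}_t\geq 0$ as the only potentially active multiplier. The first stationarity equation collapses to
\begin{equation*}
\val_t u_t'(x_t^*) - \cost_t \cycle_t e_t'(x_t^* \cycle_t) = \data_t \lambda,
\end{equation*}
which is exactly the defining equation of $x_t^{\text{IV}}(\lambda)$. I would verify that this equation admits a unique solution $x_t^*\in(0,1)$ on the region of interest, using that the left-hand side is strictly decreasing in $x_t$ (by concavity of $u_t$ and convexity of $e_t$). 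The second stationarity equation gives $\underline{\gamma}_t = p_t\cycle_t + \indata_t\lambda - \cost_t\cycle_t e_t'(x_t^*\cycle_t)\geq 0$, i.e.\ $\cost_t\cycle_t e_t'(x_t^*\cycle_t) \leq p_t\cycle_t+\indata_t\lambda$.

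Finally I would translate these two pointwise conditions into the explicit $\Omega_t^{\text{IV}}(\lambda)$ description. The condition $x_t^*<1$, applied to the defining equation and using monotonicity of $u_t'$ and $e_t'$, rearranges into $\val_t \leq (\cost_t\cycle_t e_t'(\cycle_t)+\data_t\lambda)/u_t'(1)$, which is the second bound in (\ref{Equ: set IV}). For the first bound, I would combine the two stationarity relations: the inequality $\cost_t\cycle_t e_t'(x_t^*\cycle_t)\leq p_t\cycle_t+\indata_t\lambda$ is equivalent (by monotonicity of $e_t'^{-1}$) to $x_t^* \leq e_t'^{-1}((p_t\cycle_t+\indata_t\lambda)/(\cost_t\cycle_t))/\cycle_t$, and then substituting back into $\val_t u_t'(x_t^*) = \cost_t\cycle_t e_t'(x_t^*\cycle_t)+\data_t\lambda \leq p_t\cycle_t+(\data_t+\indata_t)\lambda$ together with $u_t'(x_t^*)\geq u_t'(e_t'^{-1}(\cdot)/\cycle_t)$ yields the first bound in (\ref{Equ: set IV}). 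The main obstacle I anticipate is this last step: the translation is not a single inequality chain but rather a joint consequence of the stationarity equality for $x_t$ and the stationarity inequality for $z_t$, and the fact that $u_t'^{-1}$ and $e_t'^{-1}$ are both monotone must be invoked in the right direction to avoid a sign error. Having carefully pinned down the direction I would also cross-check consistency at the common boundary with $\Omega_t^{\text{III}}(\lambda)$, where the inequality for $\underline{\gamma}_t$ holds with equality and the formulas for $x_t^{\text{III}}(\lambda)$ and $x_t^{\text{IV}}(\lambda)$ should agree, confirming that the four regions form a partition of the parameter plane as depicted in Fig.\ \ref{fig: OfflineRegion}.
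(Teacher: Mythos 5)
Your proposal is correct and follows essentially the same route as the paper: the paper proves Lemma \ref{Lemma: primal-dual IV} inside the proof of Theorem \ref{Theorem: MU KKT} (Appendix A) by writing the KKT stationarity conditions for $x_t$ and $z_t$ of the convex Problem \ref{Problem: Reformulated JCOP} and checking that membership in $\Omega_t^{\text{IV}}(\lambda^*)$ is exactly the pair of inequalities that force $z_t^*=0$, $x_t^*\in(0,1)$, and the defining equation $\val_{t}u'_{t}(x_{t}^{*})-\cost_{t}\cycle_{t}\encpt'_{t}(x_{t}^{*}\cycle_{t})=\data_{t}\lambda^*$. Your treatment is marginally more explicit than the paper's (you carry the multiplier for $z_t\ge 0$, which the paper's Lagrangian omits but implicitly uses to justify $z_t^*=0$), and your monotonicity translation of the two multiplier conditions into the bounds in (\ref{Equ: set IV}) matches the paper's computation.
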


Lemma \ref{Lemma: primal-dual IV} presents the optimal primal-dual solution structure for the case of $(\cost_{t},\val_{t})\in\Omega_{t}^{\text{IV}}(\lambda^*)$.
The set $\Omega_{t}^{\text{IV}}(\cdot)$, defined in (\ref{Equ: set IV}), corresponds to the \textit{weak-sensitivity} and \textit{low-valuation} state, denoted by the red region in Fig. \ref{fig: OfflineRegion}.
Specifically, the low valuation and weak sensitivity enable the MU to be ``self-sufficiency'' in terms of the computation capacity.
Accordingly, the MU tends to consume part of the slot on the content service (i.e., $0<x_{t}^{\textit{IV}}(\lambda^*)<1$) under the pure local execution mode (i.e., $z_{t}^{*}=0$).

So far, we have introduced the optimal primal solutions $(\bm{x^*},\bm{z^*})$ given the optimal shadow price $\lambda^*$ with respect to four cases in Lemmas \ref{Lemma: primal-dual I}$\sim$\ref{Lemma: primal-dual IV}, respectively.
It is obvious that the payoff-maximizing MU will never choose edge-execution if the price $p_{t}\ge \bar{E}$ for any $t\in\mathcal{T}$, where $\bar{E}$ is given by
\begin{equation}\label{Equ: E_bar}
\bar{E}\triangleq\max_{t\in\mathcal{T}}\{\cost_{t}\cdot e'_{t}(\bar{\cycle})\}.
\end{equation}
This implies that the pricing strategy $\bm{\bar{p}}\triangleq[\bar{E},\bar{E},...,\bar{E}]$ corresponds to the case where ESP does not offer edge service, i.e., $V_{\text{ESP}}(\bm{\bar{p}})=0$.
This observation provides a critical price upper bound when we analyze the ESP's pricing policy in Section \ref{Section: ESP Pricing Problem}.

\subsubsection{Optimal Shadow Price}\label{Subsubsection: Optimal Shadow Price}
Lemmas \ref{Lemma: primal-dual I}$\sim$\ref{Lemma: primal-dual IV} imply that the shadow price $\lambda$ plays a significant role on the KKT analysis.
Before deriving the optimal shadow price $\lambda^*$, for notation simplicity, we define a mapping $W_{t}(\lambda)$ for each $t\in\mathcal{T}$ (based on Lemmas \ref{Lemma: primal-dual I}$\sim$\ref{Lemma: primal-dual IV}) as follows:
\begin{equation}\label{Equ: W_t}
	W_{t}(\lambda)
	\triangleq
	\left\{
	\begin{aligned}
		& \left(1,0\right),				
		&\textit{if }& (\cost_{t},\val_{t})\in\Omega_{t}^{\text{I}}(\lambda), \\
		& \left(1,z_{t}^{\text{II}}(\lambda)\right),			
		&\textit{if }& (\cost_{t},\val_{t})\in\Omega_{t}^{\text{II}}(\lambda), \\
		& \left( x^{\text{III}}_{t}(\lambda),z^{\text{III}}_{t}(\lambda) \right),	
		&\textit{if }& (\cost_{t},\val_{t})\in\Omega_{t}^{\text{III}}(\lambda), \\
		& \left( x^{\text{IV}}_{t}(\lambda),0\right),		
		&\textit{if }& (\cost_{t},\val_{t})\in\Omega_{t}^{\text{IV}}(\lambda),
	\end{aligned}
	\right.
\end{equation}
where $W_{t}(\cdot):\mathbb{R}\rightarrow\mathbb{R}^2$ maps from a shadow price value to the MU's one-slot acquiring decision and executing decision.
We denote the \textit{potential data usage under $\lambda$} as:
\begin{equation}\label{Equ: A_lambda}
	A(\lambda)
	\triangleq \textstyle
	\sum\limits_{t=1}^{T}\tilde{h}_{t}\Big( W_{t}(\lambda) \Big),
\end{equation}
where $\tilde{h}_{t}(\cdot)$, defined in (\ref{Equ: tilde g}), takes the MU's acquiring decision and executing decision as the input, and represents the MU's wireless data usage in slot $t$.
Note that the potential data usage $A(\lambda)$ is weakly-decreasing in $\lambda$.
That is, a higher shadow price leads to less wireless data usage.

Lemma \ref{Lemma: optimal primal} presents the optimal shadow price.
\begin{lemma}\label{Lemma: optimal primal}
	The optimal dual solution $\lambda^*$ satisfying the KKT conditions of  Problem \ref{Problem: Reformulated JCOP} is given by
	\begin{equation}\label{Equ: lambda opt}
		\lambda^*= \min(\adfee,\lambda^{\dag}),
	\end{equation}
	where $\lambda^{\dag}$ is defined as follows
	\begin{equation}\label{Equ: lambda dag}
		\begin{aligned}
			\lambda^{\dag}\triangleq \min_{\lambda\ge0} \  \lambda
			\qquad\textit{ s.t. } \  A(\lambda)\le\dcap.
		\end{aligned}
	\end{equation}
\end{lemma}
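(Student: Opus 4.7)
The plan is to invoke the KKT conditions of Problem~\ref{Problem: Reformulated JCOP} together with complementary slackness on the two constraints (\ref{Equ: Reformulation Constraint s 1}) and (\ref{Equ: Reformulation Constraint s 2}) that involve the auxiliary variable $s$. Let $\mu \geq 0$ be the multiplier for $s \geq 0$ and $\lambda \geq 0$ the multiplier for $s \geq \sum_{t} \tilde{h}_{t}(x_{t},z_{t}) - \dcap$. Since $s$ enters the objective only through $-\adfee s$ and linearly in the two constraints, stationarity in $s$ yields the coupling $\lambda + \mu = \adfee$, which is precisely what Proposition~\ref{Proposition: lambda} exploited to certify $\lambda \in [0,\adfee]$. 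For any feasible dual $\lambda$, the remaining stationarity conditions in $(x_{t},z_{t})$ are the ones already solved in Lemmas~\ref{Lemma: primal-dual I}--\ref{Lemma: primal-dual IV}, so the primal-maximizing choice is $(x_{t}^{*},z_{t}^{*}) = W_{t}(\lambda)$ and the induced total data usage equals $A(\lambda)$ by the definition in (\ref{Equ: A_lambda}).

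The argument then splits on the position of $\lambda^{\dag}$, defined by (\ref{Equ: lambda dag}), relative to $\adfee$. Suppose first that $\lambda^{\dag}\le\adfee$ and set $\lambda^{*}=\lambda^{\dag}$. Then $A(\lambda^{*})\le \dcap$ by the definition of $\lambda^{\dag}$, so $s^{*}=0$ is primal-feasible and complementary slackness on (\ref{Equ: Reformulation Constraint s 1}) is automatic. Complementary slackness on (\ref{Equ: Reformulation Constraint s 2}) then demands either $\lambda^{*}=0$, a sub-case that corresponds to $A(0)\le\dcap$ and hence $\lambda^{\dag}=0$, or $A(\lambda^{*})=\dcap$; in the latter case the weak monotonicity of $A(\cdot)$ together with the minimality built into (\ref{Equ: lambda dag}) forces $A(\lambda^{\dag})=\dcap$, closing the KKT system. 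Next suppose $\lambda^{\dag}>\adfee$. Then $A(\lambda)>\dcap$ for every admissible $\lambda\in[0,\adfee]$, so any KKT point must satisfy $s^{*}=A(\lambda^{*})-\dcap>0$; complementary slackness on (\ref{Equ: Reformulation Constraint s 1}) now forces $\mu^{*}=0$, and the coupling $\lambda+\mu=\adfee$ gives $\lambda^{*}=\adfee$. Combining both cases yields $\lambda^{*}=\min(\adfee,\lambda^{\dag})$.

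The principal technical obstacle is justifying the weak monotonicity of $A(\cdot)$ that glues the above case analysis together. Within each regime $\Omega_{t}^{\text{I}}(\lambda),\dots,\Omega_{t}^{\text{IV}}(\lambda)$, the closed forms for $x_{t}^{*}$ and $z_{t}^{*}$ are weakly decreasing in $\lambda$, since $u'^{-1}_{t}$ and $e'^{-1}_{t}$ inherit monotonicity from the strict concavity of $u_{t}$ and the strict convexity of $e_{t}$; hence $\tilde{h}_{t}(\cdot)$ is weakly decreasing on each piece. To rule out upward jumps when $\lambda$ crosses a regime boundary, I would verify that the two adjacent expressions for $W_{t}(\lambda)$ agree on the shared interface defined through (\ref{Equ: set I})--(\ref{Equ: set IV}): substituting each boundary equality into the neighboring closed forms yields continuity of $W_{t}(\cdot)$, so $A(\lambda)$ is continuous and weakly decreasing, the infimum in (\ref{Equ: lambda dag}) is attained whenever feasible on $[0,\adfee]$, and the claimed identity in (\ref{Equ: lambda opt}) follows.
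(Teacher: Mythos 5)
Your proposal is correct and follows essentially the same route as the paper: stationarity in $s$ giving $\lambda+\Lambda=\adfee$, complementary slackness on the two $s$-constraints, and a two-case split that is equivalent to the paper's split on $A(\adfee)\gtrless\dcap$. The only difference is that you explicitly sketch the verification that $A(\cdot)$ is continuous and weakly decreasing (which the paper merely asserts), a worthwhile but non-essential addition.
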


Lemma \ref{Lemma: optimal primal} implies that the optimal primal-dual solution of Problem \ref{Problem: Reformulated JCOP} has two possibilities.
\begin{itemize}
	\item If $A(\adfee)>\dcap$, then we have $\lambda^{\dag}>\adfee$ according to the definition (\ref{Equ: lambda dag}).
	Therefore, the optimal shadow price is the same as the per-unit fee according to (\ref{Equ: lambda opt}), i.e., $\lambda^*=\adfee$.
	In this case, the MU has a large wireless data demand, thus is charged overage fee $\adfee\left[A(\lambda^*)-\dcap\right]$ for the over usage $A(\lambda^*)-\dcap$.
	
	\item If $A(\adfee)\le\dcap$, then we have $\lambda^{\dag}\le\adfee$ according to the definition (\ref{Equ: lambda dag}).
	Therefore, the optimal shadow price is the same as $\lambda^{\dag}$ according to (\ref{Equ: lambda opt}), i.e., $\lambda^*=\lambda^{\dag}$.
	In this case, the MU has a small wireless data demand, thus the monthly data cap $\dcap$ is sufficient, i.e., $\dcap\ge A(\lambda^*)$.
	
\end{itemize}

Theorem \ref{Theorem: MU KKT} presents the optimal solution of Problem \ref{Problem: Reformulated JCOP} based on the previous KKT analysis in Lemmas \ref{Lemma: primal-dual I}$\sim$\ref{Lemma: optimal primal}.
Due to space limit, the detailed proof is given in an online technical report \cite{report}.
\begin{theorem}\label{Theorem: MU KKT}
	The shadow price $\lambda^*$ given in (\ref{Equ: lambda opt}) together with the primal solutions $(x_{t}^{*},z_{t}^{*})=W_{t}(\lambda^*)$ for any $t\in\mathcal{T}$ satisfy the KKT conditions of Problem \ref{Problem: Reformulated JCOP}.
\end{theorem}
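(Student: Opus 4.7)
The plan is to verify directly that the primal--dual tuple specified by Lemma \ref{Lemma: optimal primal} together with the regime map $W_t(\cdot)$ of (\ref{Equ: W_t}) satisfies the full KKT system of Problem \ref{Problem: Reformulated JCOP}. Since the objective $\tilde{S}$ is concave in $(\bm{x},\bm{z},s)$---each $\tilde{f}_t$ is the sum of a concave utility, the negative of a convex function of an affine map, and a linear term---and every constraint is affine, the problem is convex, so the KKT conditions are both necessary and sufficient for optimality. Therefore exhibiting a primal--dual pair satisfying KKT immediately certifies the theorem. I would begin by writing the Lagrangian with explicit nonnegative multipliers: $\lambda$ for (\ref{Equ: Reformulation Constraint s 2}), $\mu$ for (\ref{Equ: Reformulation Constraint s 1}), and per-slot multipliers $\alpha_t$ for $z_t\le x_t$, $\rho_t$ for $z_t\ge 0$, $\gamma_t$ for $x_t\le 1$, and $\sigma_t$ for $x_t\ge 0$; the constraint $z_t\le 1$ is redundant given $z_t\le x_t\le 1$ and needs no separate multiplier. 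Differentiating yields one stationarity equation per variable.

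The argument then splits cleanly into a global part and a slot-local part. Globally, stationarity in $s$ yields $\mu=\adfee-\lambda$, so dual feasibility $\mu\ge 0$ is precisely $\lambda\le\adfee$, supplied by Proposition \ref{Proposition: lambda}. Complementary slackness in the pair $(\mu,s)$ and in $\lambda$ with (\ref{Equ: Reformulation Constraint s 2}) forces $s^*=\max\{0,A(\lambda^*)-\dcap\}$, and the choice $\lambda^*=\min(\adfee,\lambda^{\dag})$ from Lemma \ref{Lemma: optimal primal} is the unique value compatible with these slackness requirements: if $A(\adfee)>\dcap$ the cap is saturated and $\lambda^*=\adfee$, otherwise the cap is met with equality (or slack with $\lambda^{\dag}=0$) at $\lambda^*=\lambda^{\dag}$. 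This discharges every condition involving $s$ and $\lambda$.

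What remains is a per-slot verification: for each $t$, show that $(x_t^*,z_t^*)=W_t(\lambda^*)$ can be supported by nonnegative $(\alpha_t,\rho_t,\gamma_t,\sigma_t)$ satisfying stationarity in $x_t,z_t$ and the corresponding complementary-slackness conditions. I would proceed by case on the four regimes of Fig. \ref{fig: OfflineRegion}: in regime I set $\alpha_t=\sigma_t=0$, solve the two stationarity equations for $\gamma_t$ and $\rho_t$, and observe that their nonnegativity is precisely the pair of inequalities defining $\Omega_t^{\text{I}}(\lambda^*)$ in (\ref{Equ: set I}); regime II activates only $\gamma_t$ and the interior condition in $z_t$ inverts via $\encpt'^{-1}_t$ to reproduce (\ref{Equ: z II}); regime III activates no box multiplier, so stationarity reduces to two equations that invert through $u'^{-1}_t$ and $\encpt'^{-1}_t$ to recover the explicit formulas of Lemma \ref{Lemma: primal-dual III}; regime IV activates only $\rho_t$. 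Collecting the four cases shows that the piecewise map $W_t(\lambda^*)$ together with the induced multipliers satisfies every KKT condition.

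The principal obstacle is the combinatorial bookkeeping: four regimes combined with the two overall cases for $\lambda^*$, and within each sub-case one must (i) identify which box multipliers are active, (ii) solve the reduced two-equation stationarity system in closed form, and (iii) confirm that the resulting multipliers are nonnegative exactly on the corresponding region $\Omega_t^{\text{I}},\ldots,\Omega_t^{\text{IV}}$. Two preliminary facts make this tractable and should be invoked up front: the four regions partition the characteristic plane $(\cost,\val)$ for any fixed $\lambda$, so exactly one regime applies to each $(\cost_t,\val_t)$; and $A(\lambda)$ in (\ref{Equ: A_lambda}) is weakly decreasing in $\lambda$, which guarantees that $\lambda^{\dag}$ in (\ref{Equ: lambda dag}) is well-defined and that the complementary-slackness bookkeeping on the data cap is internally consistent.
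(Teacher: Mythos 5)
Your proposal follows essentially the same route as the paper's proof: write the Lagrangian of Problem \ref{Problem: Reformulated JCOP}, dispose of the conditions involving $s$ and $\lambda$ globally (recovering $\lambda\le\adfee$ and the dichotomy $\lambda^*=\min(\adfee,\lambda^{\dag})$), and then verify stationarity and complementary slackness slot by slot across the four regimes $\Omega_t^{\text{I}},\ldots,\Omega_t^{\text{IV}}$, matching multiplier nonnegativity to the defining inequalities of each region. If anything, your explicit inclusion of multipliers for the lower bounds $z_t\ge 0$ and $x_t\ge 0$ is slightly more careful than the paper's Lagrangian, which omits them even though regimes I and IV (where $z_t^*=0$ with strict inequality in the stationarity residual) require one.
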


By now, we have solved the MU's off-line J-CATO problem based on the KKT conditions of Problem \ref{Problem: Reformulated JCOP}.
The above KKT analysis implies the significance of the shadow price of the wireless data usage.
It also motivates the strategy of the online context in Section \ref{Section: MU On-Line JCOP}.

\section{MU Online Strategy}\label{Section: MU On-Line JCOP}
This section focuses on the MU's online J-CATO problem and proposes an online strategy.
Specifically, Section \ref{Subsection: Basic Idea and Strategy} elaborates the basic idea.
We then analyze the theoretic performance in Section \ref{Subsection: MU Performance Analysis}.

\subsection{Basic Idea and Strategy}\label{Subsection: Basic Idea and Strategy}
Recall that the MU's online payoff maximization problem is given by
\begin{subequations}\label{Equ: MU Online JCOP}
	\begin{align}
		\max\limits_{\bm{x},\bm{z}}&\quad \textstyle\sum\limits_{t=1}^{T} \tilde{f}_{t}(x_{t},z_{t})
		- \adfee\left[ \sum\limits_{t=1}^{T}\tilde{h}_{t}(x_{t},z_{t}) - \dcap \right]^+
		-\pcap	\label{Equ: MU Online JCOP - Objective} \\
		\textit{s.t.}	&\quad 0\le z_{t}\le x_{t}\le 1,\ \forall t\in\mathcal{T}.
	\end{align}
\end{subequations}

The above problem in (\ref{Equ: MU Online JCOP}) share some similarity with the Lyapunov optimization framework \cite{neely2010stochastic}.
But (\ref{Equ: MU Online JCOP}) is more challenging, since the second term in the objective (\ref{Equ: MU Online JCOP - Objective}), i.e., $\adfee[ \sum_{t=1}^{T}\tilde{h}_{t}(x_{t},z_{t}) - \dcap ]^+$, is non-additive over time.
There have been some studies (e.g., \cite{mahdavi2012trading,liakopoulos2019cautious}) focusing on the problems with additive objective and long-term constraints.
But our problem is more general as the piece-wise linear term in (\ref{Equ: MU Online JCOP - Objective}) degenerates into a long-term constraint when $\adfee$ substantially increases.
Next we introduce our method based on the insights discussed in Section \ref{Section: Off-line Analysis and Insights}.
We start with defining the following augmented Lagrangian function for each time slot $t\in\mathcal{T}$:
\begin{equation}\label{Equ: augmented Lagrangian}
	L_{t}(x_{t},z_{t},\lambda)
	\triangleq
	\textstyle
	\tilde{f}_{t}(x_{t},z_{t}) - \lambda\cdot\left( \tilde{h}_{t}(x_{t},z_{t}) - \frac{\dcap}{T} \right),
\end{equation}
where $\lambda\in[0,\adfee]$ is the shadow price.
Particularly, it is obvious that minimizing $\sum_{t=1}^{T}L_{t}(x_{t},z_{t},\lambda)$ over $\lambda\in[0,\adfee]$ leads to the MU's monthly payoff $\tilde{S}(\bm{x},\bm{z})$.
That is, the following equality holds
\begin{equation}\label{Equ: U=Lt}
	\tilde{S}(\bm{x},\bm{z})
	=\textstyle
	\min\limits_{\lambda\in[0,\adfee]} \sum\limits_{t=1}^{T}L_{t}(x_{t},z_{t},\lambda) - \pcap,\ \forall(\bm{x},\bm{z}),
\end{equation}
which implies the inherent relation between the augmented Lagrangian in (\ref{Equ: augmented Lagrangian}) and the MU's monthly payoff $\tilde{S}(\bm{x},\bm{z})$.

\begin{algorithm}[t]
	\caption{User's Online Strategy $\mathcal{A}$}\label{Algorithm: MU}
	\SetKwInOut{Input}{Input}
	\SetKwInOut{Output}{Output}
	\Input{Initial $U_{t}(\cdot)$ and $\enCPT_{t}(\cdot)$ for any $t\in\mathcal{T}$.}
	\Output{$(\bm{\hat{x}},\bm{\hat{z}})$ and $\hat{\bm{\lambda}}$.}
	\textbf{Initial} $\hat{\lambda}_1=0$ and step size $\bm{\eta}=\{\eta_{t}:\forall t\in\mathcal{T}\}$.	\\
	\For {$t=1$ \KwTo $T$ }
	{
		\textbf{\textit{Determine}} $(\hat{x}_{t},\hat{z}_{t})$ based on $\hat{\lambda}_{t}$ according to
		\begin{equation}
		(\hat{x}_{t},\hat{z}_{t})=W_{t}\left(\hat{\lambda}_{t}\right).	
		\end{equation} \\
		\textbf{\textit{Update}} $\hat{\lambda}_{t+1}$ based on $(\hat{x}_{t},\hat{z}_{t})$ according to
		\begin{equation}\label{Equ: MU update lambda}
		\hat{\lambda}_{t+1}=\mathcal{P}_{[0,\adfee]}\left( \hat{\lambda}_{t} + \eta_{t} \left[\textstyle \tilde{h}_{t}(\hat{x}_{t},\hat{z}_{t})-\frac{\dcap}{T}\right] \right).
		\end{equation}
	}
\end{algorithm}

We present the strategy $\mathcal{A}$ for MU's online J-CATO problem in Algorithm \ref{Algorithm: MU}.
In each slot $t\in\mathcal{T}$, the strategy $\mathcal{A}$ mainly includes two steps, which are elaborated as follows.
\begin{itemize}
	\item \textit{Line 3:} The MU determines $(\hat{x}_{t},\hat{z}_{t})$ according to the current shadow price $\hat{\lambda}_{t}$ and the function $W_{t}(\cdot)$ defined in (\ref{Equ: W_t}).
	Based on our off-line analysis in Section \ref{Section: Off-line Analysis and Insights}, this step essentially generates for the MU the optimal acquiring and executing decisions with respect to the current shadow price $\hat{\lambda}_{t}$.
	
	\item \textit{Line 4:} The MU updates the shadow price $\tilde{\lambda}_{t+1}$ according to the current wireless data usage $\tilde{h}_{t}(\hat{x}_{t},\hat{z}_{t})$ and the average quota $q\triangleq\dcap/T$.
	The notation $\mathcal{P}_{[0,\adfee]}(\cdot)$ in (\ref{Equ: MU update lambda}) is a projection to the interval $[0,\adfee]$.
	If the current wireless data usage is greater (or smaller) than the average quota, then the MU will increase (or decrease) the shadow price with an appropriate step-size $\eta_{t}$.
\end{itemize}

\subsection{Performance Analysis}\label{Subsection: MU Performance Analysis}
We elaborate the performance of the strategy $\mathcal{A}$ in terms of the MU's average monthly payoff gap between the decisions $(\bm{\hat{x}},\bm{\hat{z}})$ and the optimal decisions in hindsight, i.e., $(\bm{x^*},\bm{z^*})=\arg\max \tilde{S}(\bm{x},\bm{z})$.
That is, we are interested in
\begin{equation}
G_{T}(\mathcal{A})
\triangleq
\frac{1}{T}\left[ \tilde{S}(\bm{x^*},\bm{z^*}) - \tilde{S}(\bm{\hat{x}},\bm{\hat{z}}) \right].
\end{equation}
As we will see later, the performance mainly depends on two factors, i.e., \textit{demand divergence} and \textit{consumption fluctuation}.
We first formally define the two factors in Definitions \ref{Definition: Xi} and \ref{Definition: Phi}, respectively.

\begin{definition}[Demand Divergence]\label{Definition: Xi}
	Given the MU's content service realization $(\bm{\data},\bm{\indata},\bm{\cycle})$, the \textit{demand divergence} (with respect to the average quota $q\triangleq\dcap/T$) in slot $t$ is
	\begin{equation}
		\xi_{t} \triangleq \big|q-\data_{t}-\indata_{t}\big|.
	\end{equation}
	Accordingly, the \textit{maximal demand divergence} is given by
	\begin{equation}
		\begin{aligned}
			\Xi = \max\left( \big|q-\bar{\data}-\bar{\indata}\big|,q \right).
		\end{aligned}	
	\end{equation}
\end{definition}

\begin{definition}[Consumption Fluctuation]\label{Definition: Phi}
	Given the MU's optimal decisions $(\bm{x^*},\bm{z^*})$ in hindsight, the per-slot leftover quota (comparing to the average quota $q\triangleq\dcap/T$) is  $l_{t}=q-\tilde{h}_{t}(x^{*}_{t},z^{*}_{t})$.
	Hence $\bar{l}=\sum_{t=1}^{T}l_t/T$ is the average leftover quota.
	The {consumption fluctuation} up to slot $t$ is
	\begin{equation}
		\psi_{t}\triangleq \big| t\cdot\bar{l} - \textstyle\sum_{i=1}^{t}l_{i} \big|.
	\end{equation}
	Accordingly, the \textit{maximal consumption fluctuation} is given by
	\begin{equation}
		\Psi \triangleq \max_{1\le t\le T}\psi_{t}.
	\end{equation}
\end{definition}

Basically, $\psi_{t}$ measures the absolute different between the cumulative leftover quota and the average case up to the $t$-th slot.
In an extreme case where the MU equally consumes wireless data every day (i.e., $\tilde{h}_{t}(x^*_{t},z^*_{t})$ are identical for any $t\in\mathcal{T}$), then the fluctuation is zero (i.e., $\psi_{t}=0$ for any $t$).

Theorem \ref{Theorem: User Online} presents the MU's payoff gap under the strategy $\mathcal{A}$.
The proof is given in  \cite{report}.
\begin{theorem}\label{Theorem: User Online}
	The solution $(\bm{\hat{x}},\bm{\hat{z}})$ generated by strategy $\mathcal{A}$ in Algorithm \ref{Algorithm: MU} achieves the following MU payoff gap
	\begin{equation}
		G_{T}(\mathcal{A})
		\le\textstyle
		\frac{1}{T}\left(\frac{\adfee^2}{2} \frac{1}{\eta_{T}} +\left(\frac{\Xi^2}{2} +\Xi\Psi\right)\sum\limits_{t=1}^{T}\eta_{t}\right),
	\end{equation}
	where $\Xi$ and $\Psi$ are defined in Definition \ref{Definition: Xi} and Definition \ref{Definition: Phi}, respectively.
	Moreover, with the step-size $\eta_{t}=\frac{\adfee}{\Xi\sqrt{T}}$, we have
	\begin{equation}
		G_{T}(\mathcal{A})\le\textstyle
		\frac{\adfee(\Xi+\Psi)}{\sqrt{T}}.
	\end{equation}
\end{theorem}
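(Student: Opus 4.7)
The plan is to leverage the fact that, by the KKT analysis in Section~\ref{Section: Off-line Analysis and Insights}, the map $W_t(\lambda)$ is precisely the unique maximizer of the per-slot augmented Lagrangian $L_t(\cdot,\cdot,\lambda)$ over the feasible box $0\le z_t\le x_t\le 1$. Hence $(\hat{x}_t,\hat{z}_t)=W_t(\hat{\lambda}_t)$ satisfies $L_t(\hat{x}_t,\hat{z}_t,\hat{\lambda}_t)\ge L_t(x_t^*,z_t^*,\hat{\lambda}_t)$ for every $t$. Summing this inequality and rearranging yields an upper bound on $\sum_t[\tilde{f}_t(x_t^*,z_t^*)-\tilde{f}_t(\hat{x}_t,\hat{z}_t)]$ in terms of $\sum_t\hat{\lambda}_t[\tilde{h}_t(x_t^*,z_t^*)-\tilde{h}_t(\hat{x}_t,\hat{z}_t)]$. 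I then use the identity $\adfee[X]^+=\max_{\lambda\in[0,\adfee]}\lambda X$ to recast both overage terms in $\tilde{S}(\bm{x^*},\bm{z^*})-\tilde{S}(\bm{\hat{x}},\bm{\hat{z}})$, introducing a free comparator $\lambda^\star\in[0,\adfee]$. After cancellation one arrives at a decomposition of the form
\[
\tilde{S}(\bm{x^*},\bm{z^*})-\tilde{S}(\bm{\hat{x}},\bm{\hat{z}}) \le \underbrace{\sum_{t=1}^{T}(\hat{\lambda}_t-\lambda^\star)\bigl[\tilde{h}_t(\hat{x}_t,\hat{z}_t)-q\bigr]}_{\text{dual regret}} + \underbrace{\sum_{t=1}^{T}(\lambda^\star-\hat{\lambda}_t)\bigl[\tilde{h}_t(x_t^*,z_t^*)-q\bigr]}_{\text{fluctuation term}},
\]
where $q\triangleq\dcap/T$, and $\lambda^\star$ will later be set to the optimal dual of the hindsight problem.

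\textbf{Bounding the dual regret by projected online gradient descent.}
Next, I interpret the update~(\ref{Equ: MU update lambda}) as projected gradient ascent on the linear losses $\lambda\mapsto\lambda(\tilde{h}_t(\hat{x}_t,\hat{z}_t)-q)$ over the compact interval $[0,\adfee]$. Expanding $(\hat{\lambda}_{t+1}-\lambda^\star)^2$, invoking non-expansiveness of $\mathcal{P}_{[0,\adfee]}$, and telescoping across $t=1,\ldots,T$ delivers the textbook inequality
\[
\sum_{t=1}^{T}(\hat{\lambda}_t-\lambda^\star)\bigl[\tilde{h}_t(\hat{x}_t,\hat{z}_t)-q\bigr] \le \frac{(\lambda^\star-\hat{\lambda}_1)^2}{2\eta_T}+\frac{1}{2}\sum_{t=1}^{T}\eta_t\bigl[\tilde{h}_t(\hat{x}_t,\hat{z}_t)-q\bigr]^2 \le \frac{\adfee^2}{2\eta_T}+\frac{\Xi^2}{2}\sum_{t=1}^{T}\eta_t,
\]
where the last step uses $|\tilde{h}_t(\cdot)-q|\le\Xi$ from Definition~\ref{Definition: Xi} together with $\lambda^\star,\hat{\lambda}_t\in[0,\adfee]$.

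\textbf{Absorbing the fluctuation term, and balancing the step size.}
The second sum does \emph{not} decompose slotwise because $\tilde{h}_t(x_t^*,z_t^*)-q=-l_t$ involves the correlated hindsight trajectory. My plan is to apply Abel summation and rewrite it as $\sum_{t=1}^{T-1}(\hat{\lambda}_{t+1}-\hat{\lambda}_t)\bigl[t\bar{l}-\sum_{i=1}^{t}l_i\bigr]$ plus a boundary term; the particular choice $\lambda^\star=\bar\lambda$ (the optimal hindsight dual given by Lemma~\ref{Lemma: optimal primal}) annihilates the boundary via the identity $\sum_{i=1}^{T}l_i=T\bar l$. Since the update~(\ref{Equ: MU update lambda}) is non-expansive we have $|\hat{\lambda}_{t+1}-\hat{\lambda}_t|\le\eta_t|\tilde{h}_t-q|\le\eta_t\Xi$, and by Definition~\ref{Definition: Phi} the bracket satisfies $|t\bar{l}-\sum_{i\le t}l_i|\le\Psi$. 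Multiplying and summing yields the bound $\Xi\Psi\sum_{t=1}^{T}\eta_t$ on the fluctuation term. Adding this to the dual-regret bound and dividing by $T$ produces the first inequality of the theorem; substituting $\eta_t=\adfee/(\Xi\sqrt{T})$ equalizes the two contributions and gives the stated $\adfee(\Xi+\Psi)/\sqrt{T}$ rate.

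\textbf{Where the difficulty lies.}
The principal obstacle is that the overage cost $\adfee[\sum_t\tilde{h}_t-\dcap]^+$ is non-separable in time, so the gap cannot be written as a sum of per-slot losses and the standard Lyapunov drift-plus-penalty or OGD regret bounds do not apply off the shelf. The crux is to introduce the free comparator $\lambda^\star$ through the $\max$-representation of $[\,\cdot\,]^+$ and then to Abel-sum against the cumulative leftover $\psi_t$, so that the residual scales with $\Psi$ (the \emph{cumulative} fluctuation) rather than with $T$ directly. Once this non-separability is tamed by the correct choice of $\lambda^\star$, the remaining steps—projection non-expansiveness, the quadratic telescoping, the bound $|\tilde{h}_t-q|\le\Xi$, and the final step-size balancing—are routine.
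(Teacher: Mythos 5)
Your architecture matches the paper's: exploit the fact that $(\hat x_t,\hat z_t)=W_t(\hat\lambda_t)$ maximizes the per-slot Lagrangian $L_t(\cdot,\cdot,\hat\lambda_t)$, rewrite the overage terms via $\adfee[X]^+=\max_{\lambda\in[0,\adfee]}\lambda X$, split the gap into an OGD dual-regret piece and a fluctuation piece, and control the latter through $|\hat\lambda_{t+1}-\hat\lambda_t|\le\Xi\eta_t$ paired with the cumulative drift $\psi_t\le\Psi$. Your Abel-summation treatment of the fluctuation piece is actually cleaner than the paper's, which instead maximizes over all multiplier sequences with bounded increments and invokes LP duality with the hand-picked certificate $\mu^{\star}_t=t\bar l-\sum_{k\le t}l_k$; both routes yield $\Xi\Psi\sum_t\eta_t$.

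However, your central decomposition attaches the coefficients to the wrong trajectories, and as displayed the inequality is false. The correct split is
\[
\tilde S(\bm{x^*},\bm{z^*})-\tilde S(\bm{\hat{x}},\bm{\hat{z}})\le\max_{\lambda\in[0,\adfee]}\sum_{t=1}^{T}(\lambda-\hat{\lambda}_t)\bigl[\tilde{h}_t(\hat{x}_t,\hat{z}_t)-q\bigr]+\sum_{t=1}^{T}(\hat{\lambda}_t-\lambda^*)\bigl[\tilde{h}_t(x_t^*,z_t^*)-q\bigr],
\]
i.e., the \emph{online} trajectory's violation must be paired with $(\lambda-\hat\lambda_t)$ — the term $+\adfee[\sum_t\tilde h_t(\hat x_t,\hat z_t)-\dcap]^+$ enters the gap with a plus sign, so it must be kept as the maximum over $\lambda$ (attained at a trajectory-dependent point, not at a freely chosen comparator) — while the \emph{hindsight} trajectory's violation is paired with $(\hat\lambda_t-\lambda^*)$, where the minus sign in front of $[\cdot]^+$ lets you substitute the hindsight dual and use complementary slackness. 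Your version reverses both signs: e.g., if $\hat\lambda_t\equiv0$ throughout and the online decisions grossly overshoot the cap while the hindsight optimum does not, your right-hand side is nonpositive while the left-hand side is large and positive. Consistently with this, the textbook inequality for projected gradient \emph{ascent} bounds $\sum_t(\lambda^{\star}-\hat\lambda_t)g_t$, not $\sum_t(\hat\lambda_t-\lambda^{\star})g_t$. Every other ingredient — per-slot optimality of $W_t$, projection non-expansiveness, $|\tilde h_t-q|\le\Xi$, $\psi_t\le\Psi$, the nonpositivity of the boundary term under the hindsight dual, the step-size balancing — is correct, so the proof is repaired simply by swapping the two coefficients; but as written, the first display does not hold.
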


Theorem \ref{Theorem: User Online} shows that the performance of strategy $\mathcal{A}$ depends on the overage fee $\adfee$ besides the aforementioned {demand divergence} and {consumption fluctuation}.
This is because that the crucial uncertainty of MU's online J-CATO problem is the optimal shadow price, which is related to $\adfee$.
We will also illustrate this in Section \ref{Subsection: Evaluation on MU's JOCP Strategy}.

So far, we have intorduced the MU's online problem and proposed an online strategy.
The MU's online problem is closely related to the ESP's online pricing problem, since they all need to make the decisions sequentially in each slot.
 will be introduced next.

\section{ESP Pricing Problem}\label{Section: ESP Pricing Problem}

In this section, we study how ESP monetizes the edge computing service and propose a pricing policy.

Recall that ESP's total revenue from all the MUs $\mathcal{N}$ over the period $\mathcal{T}$ is given in (\ref{Equ: Revenue of ESP}).
Due to the asymmetric information, however, ESP cannot explicitly predict how the MUs response to its pricing decisions.
This motivates us to design a pricing policy $\mathcal{P}$ that iteratively learns how to monetize the edge computing service without relying on the market information.
Later on, we will choose the \textit{ex post optimal revenue} as our benchmark and compare the revenue generated by the pricing policy $\mathcal{P}$ with the \textit{ex post optimal revenue}.
Basically, the \textit{ex post optimal revenue} is given by
\begin{equation}\label{Equ: ESP Revenue opt}
	V^{\star}_{\text{ESP}}
	\triangleq
	\max\limits_{p\ge p_{\textit{min}}}
	\textstyle\sum\limits_{t=1}^{T}\sum\limits_{n=1}^{N}p\cdot\cycle_{n,t}\cdot z_{n,t}(p),
\end{equation}
where $z_{n,t}(p)$ represents the MU $n$'s executing decision given the price $p$.
In addition, $p_{\textit{min}}>0$ corresponds to the potential minimum price in the practical market.

Next we introduce the proposed pricing policy in Section \ref{Subsection: Pricing Policy}.
We then present the performance in Section \ref{Subsection: Performance Analysis}.

\subsection{Pricing Policy}\label{Subsection: Pricing Policy}
We now introduce our proposed pricing policy $\mathcal{P}$ based on the following three aspects.
\subsubsection{Price Discretization}
We consider a set $\mathcal{K}=\{1,2,...,K\}$ of price candidates, consisting of all powers of $1+\epsilon$ between the minimal price $p_{\textit{min}}$ and the potential maximal price $\bar{E}$ (defined in (\ref{Equ: E_bar})).
Mathematically, we denote the $k$-th price candidate as $p(k)=p_{\textit{min}}(1+\epsilon)^{k}$ for any $k\in\mathcal{K}$.
Accordingly, we have $K=\lfloor\log_{1+\epsilon}{\bar{E}}/{p_{\textit{min}}}\rfloor$, where $\lfloor\cdot\rfloor$ is the floor function.
It is obvious that the parameter $\epsilon>0$ affects both the number of price candidates and the performance of pricing policy $\mathcal{P}$.
We will discuss it in Section \ref{Subsection: Performance Analysis}.

\subsubsection{Basic Idea}
The basic idea of the pricing policy $\mathcal{P}$ is to iteratively exploit and explore the price candidates across the MU population $\mathcal{N}$ within the operation period $\mathcal{T}$.
More specifically, the pricing policy $\mathcal{P}$ maintains (in each slot $t$) a weight vector $\bm{\omega}_t=[\omega_{t}(1),\omega_{t}(2),...,\omega_{t}(K)]$ for all of the price candidates.
The weight of the $k$-th candidate in slot $t$, denoted by $\omega_{t}(k)$, is positively related to its previous performance (in terms of the generated revenue).
Overall, policy $\mathcal{P}$ tends to choose the price candidate with good performance (i.e., exploitation) and keep an eye on the other candidates that may perform better in the future (i.e., exploration).

\subsubsection{Policy Description}
Algorithm \ref{Algorithm: ESP} describes the pricing policy $\mathcal{P}$.
We elaborate it in details as follows.

\textit{Line 4:}
The pricing policy $\mathcal{P}$ will offer MU $n$ edge service in price $p(\kappa_{n,t})$ in slot $t$ by randomly selecting $\kappa_{n,t}$ according to the probability distribution $h_{t}(k)$, defined in (\ref{Equ: candidate probability}).
Specifically, the probability distribution $h_{t}(k)$ is the combination of two parts tuned by a parameter $\gamma\in(0,1)$.
The first part (with the coefficient $1-\gamma$), i.e., ${\omega_{t}(k)}/{\sum_{k=1}^{K}\omega_{t}(k)}$, represents the exploitation on the good-perform candidates.
Note that a larger weight corresponds to a higher probability to be selected.
The second part (with the coefficient $\gamma$), i.e., ${(1+\epsilon)^{k}}/{\sum_{i=1}^{K}(1+\epsilon)^{i}}$, represents the exploration over all candidates.
Moreover, the exploration scheme here pays more attention to the higher price candidates instead of the uniform exploration.

\textit{Line 5:}
ESP will receive the payment $V_{t,n}(\kappa_{t,n})$ from MU $n$ at the end of slot $t$.
Accordingly, we denote the total revenue generated by the pricing policy $\mathcal{P}$ as
\begin{equation}
	V_{\text{ESP}}(\mathcal{P})
	\triangleq\textstyle
	\sum\limits_{n=1}^{T}\sum\limits_{t=1}^{T} V_{t,n}(\kappa_{t,n}),
\end{equation}
where $\kappa_{t,n}$ is a random variable due to the randomness in Line 4 of Algorithm \ref{Algorithm: ESP}.
In later performance analysis, we will focus on the expected revenue, denoted by $\mathbb{E}\left[V_{\text{ESP}}(\mathcal{P})\right]$.

\textit{Lines 6-9:}
The pricing policy $\mathcal{P}$ updates the weight vector for the next slot based on the revenue in the current slot.
Specifically, in Line 7, we compute the actual revenue generated by each price candidate under the random vector $\bm{\kappa}_{t}=(\kappa_{t,n}:n\in\mathcal{N})$ in slot $t$, denoted by $V_{t}(k,\bm{\kappa}_{t})$.
In Line 8, we compute the virtual candidate revenue $\hat{V}_{t}(k,\bm{\kappa}_{t})$ by appropriately normalizing $V_{t}(k,\bm{\kappa})$.
Eventually in Line 9, we update the weight $\omega_{t+1}(k)$ by multiplying the current weight $\omega_{t}(k)$ by an exponential expression with the base $1+\delta$ and the exponent $\hat{V}_{t}(k,\bm{\kappa})$.

So far, we have introduced the pricing policy $\mathcal{P}$ with parameters $(\epsilon,\gamma,\delta)$.
Next we focus on its performance.

\begin{algorithm}[t]
	\caption{ESP's Dynamic Pricing Policy $\mathcal{P}$}
	\label{Algorithm: ESP}
	\SetKwInOut{Input}{Input}
	\SetKwInOut{Output}{Output}
	\textbf{Initial} $(\epsilon,\gamma,\delta)$ and the weight $\omega_{1}(k)=1, \forall k\in\mathcal{K}$.\\
	\For {$t=1$ \KwTo $T$ }
	{
		\For {MU $n\in\mathcal{N}$ }
		{	
			Offer MU $n$  the edge service in price $p(\kappa_{t,n})$, where $\kappa_{t,n}\in\mathcal{K}$ is randomly drawn based on
			\begin{equation}\label{Equ: candidate probability}
				h_{t}(k)
				\triangleq
				\textstyle
				\frac{(1-\gamma)\omega_{t}(k)}{\sum\limits_{k=1}^{K}\omega_{t}(k)} + \frac{\gamma\cdot(1+\epsilon)^{k}}{\sum\limits_{i=1}^{K}(1+\epsilon)^{i}},\ \forall k\in\mathcal{K}.
			\end{equation}
			\\
			Receive MU $n$'s payment $V_{t,n}(\kappa_{t,n})$ as follows
			\begin{equation}
				V_{t,n}(\kappa_{t,n})=p(\kappa_{t,n})\cdot \cycle_{n,t}\cdot z_{n,t}\big( p(\kappa_{t,n}) \big).
			\end{equation}\\
		}
		\For {$k\in\mathcal{K}$ }
		{
			\textbf{\textit{Compute}} candidate revenue $V_{t}(k,\bm{\kappa}_{t})$
			\begin{equation}\label{Equ: candidate revenue}
				V_{t}(k,\bm{\kappa}_{t})
				\triangleq  \textstyle
				\sum_{n\in\mathcal{N}} V_{t,n}(\kappa_{t,n}) \cdot \mathbf{1}_{\{\kappa_{t,n}=k\}}.
			\end{equation} 		\\
			\textbf{\textit{Compute}} virtual candidate revenue $\hat{V}_{t}(k,\bm{\kappa}_{t})$
			\begin{equation}\label{Equ: virtual candidate revenue}
				\hat{V}_{t}(k,\bm{\kappa}_{t})
				\triangleq\textstyle
				\frac{V_{t}(k,\bm{\kappa}_{t})}{ N\bar{\cycle} p_{\textit{min}} } \cdot \frac{\gamma}{h_{t}(k)\sum\limits_{i=1}^{K}(1+\epsilon)^{i}}.
			\end{equation}\\
			
			\textbf{\textit{Update}} weight $\omega_{t+1}(k)$
			\begin{equation}\label{Equ: update weight}
				\omega_{t+1}(k)
				\triangleq
				\omega_{t}(k)\cdot (1+\delta)^{ \hat{V}_{t}(k,\bm{\kappa}_{t}) }.
			\end{equation}
		}
	}
\end{algorithm}

\subsection{Performance Analysis}\label{Subsection: Performance Analysis}
We measure the performance of the pricing policy $\mathcal{P}$ in terms of the expected revenue $\mathbb{E}\left[V_{\text{ESP}}(\mathcal{P})\right]$.
Specifically, we will compare it with $V_{\text{ESP}}^{\star}$ defined in (\ref{Equ: ESP Revenue opt}), i.e., the revenue achieved by the optimal fixed pricing in hindsight.
Overall, we will demonstrate the performance of the following form:
\begin{equation}\label{Equ: performance form}
	\mathbb{E}\left[V_{\text{ESP}}(\mathcal{P})\right] \ge \frac{V_{\text{ESP}}^{\star}}{\alpha} - \mathcal{O}\left(N\bar{E}\bar{\cycle}\ln\left( \ln\left( \textstyle\frac{\bar{E}}{p_{\textit{min}}} \right)\right) \right),
\end{equation}
which means that the pricing policy $\mathcal{P}$ achieves a constant fraction of the ex post optimal revenue $V_{\text{ESP}}^{\star}$ with an extra loss term (that does not depend on the operation period length $T$).
The above performance structure is given in Theorem \ref{Theorem: ESP} based on Lemma \ref{Lemma: discrete loss} and Lemma \ref{Lemma: EXP3}.
All proofs are given in  \cite{report}. 
Before presenting the results, for notation simplicity, we denote ESP's revenue under the $k$-th price candidate as follows:
\begin{equation}
	V^T_{\text{ESP}}(k)
	=\textstyle
	\sum\limits_{n=1}^{T}\sum\limits_{t=1}^{T} p(k)\cdot \cycle_{n,t} \cdot z_{n,t}^{*}\big( p(k) \big),\quad\forall k\in\mathcal{K}.
\end{equation}
It is obvious that we have $V_{\text{ESP}}^{\star}\ge \max_{k\in\mathcal{K}}V^T_{\text{ESP}}(k) $ due to the price discretization.
Nevertheless, Lemma \ref{Lemma: discrete loss} shows that rounding down to a power of $1+\epsilon$ will reduce ESP's revenue at most a factor of $1+\epsilon$.

\begin{lemma}\label{Lemma: discrete loss}
	There exists a price candidate ${\kappa}\in\mathcal{K}$ such that the ESP's revenue under the price $p(\kappa)=p_{\textit{min}}(1+\epsilon)^{\kappa}$ satisfies
	\begin{equation}
		V^T_{\text{ESP}}(\kappa) \ge \frac{ V_{\text{ESP}}^{\star} }{1+\epsilon}.
	\end{equation}
\end{lemma}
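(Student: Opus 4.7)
The plan is a clean rounding argument. Let $p^\star$ be a fixed price achieving $V_{\text{ESP}}^\star$ in (\ref{Equ: ESP Revenue opt}). I would first observe that without loss of generality $p^\star \in [p_{\textit{min}}, \bar{E}]$: the lower bound is part of the feasible domain, while for any $p \ge \bar{E}$ one has $z_{n,t}^*(p)=0$ for every $(n,t)$ by the price upper bound argument around (\ref{Equ: E_bar}), making the induced revenue zero. By construction of the geometric grid $\mathcal{K}$, there then exists $\kappa \in \mathcal{K}$ with
\begin{equation*}
p(\kappa) \;\le\; p^\star \;<\; (1+\epsilon)\,p(\kappa).
\end{equation*}

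The key technical ingredient is monotonicity of each MU's cycle-weighted offloaded volume $\sum_{t=1}^{T} \cycle_{n,t}\, z_{n,t}^*(p)$ in $p$. Rather than inspecting the four regimes of Lemmas \ref{Lemma: primal-dual I}--\ref{Lemma: primal-dual IV} case-by-case (where one must also track how the MU's own shadow price $\lambda^*$ shifts with $p$), I would prove this via a short revealed-preference argument. Fix MU $n$ and two prices $p_1 < p_2$, and let $(\bm{x}^i,\bm{z}^i)$ denote MU $n$'s optimizer of its monthly payoff (\ref{Equ: MU payoff}) under price $p_i$. The only term of $S$ that depends on the ESP price is the edge-execution cost $-p\sum_t \cycle_{n,t} z_{n,t}$; hence, denoting by $S^{p}$ the payoff parameterized by $p$,
\begin{equation*}
S^{p_1}(\bm{x},\bm{z}) - S^{p_2}(\bm{x},\bm{z}) \;=\; (p_2-p_1)\sum_{t=1}^{T} \cycle_{n,t}\, z_{n,t}.
\end{equation*}
Summing the two optimality inequalities $S^{p_1}(\bm{x}^1,\bm{z}^1) \ge S^{p_1}(\bm{x}^2,\bm{z}^2)$ and $S^{p_2}(\bm{x}^2,\bm{z}^2) \ge S^{p_2}(\bm{x}^1,\bm{z}^1)$ and substituting the identity above gives $(p_2-p_1)\sum_{t}\cycle_{n,t}(z^1_{n,t}-z^2_{n,t}) \ge 0$, so $\sum_{t}\cycle_{n,t} z_{n,t}^*(p_1) \ge \sum_{t}\cycle_{n,t} z_{n,t}^*(p_2)$. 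Aggregating over $n \in \mathcal{N}$ yields the aggregate monotonicity needed.

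Given this monotonicity, the lemma follows in one chain: since $p(\kappa) \le p^\star$ and $p(\kappa) \ge p^\star/(1+\epsilon)$,
\begin{equation*}
V^T_{\text{ESP}}(\kappa) \;=\; p(\kappa)\sum_{n,t}\cycle_{n,t}\, z_{n,t}^*(p(\kappa)) \;\ge\; p(\kappa)\sum_{n,t}\cycle_{n,t}\, z_{n,t}^*(p^\star) \;=\; \frac{p(\kappa)}{p^\star}\,V_{\text{ESP}}^\star \;\ge\; \frac{V_{\text{ESP}}^\star}{1+\epsilon}.
\end{equation*}
The main obstacle is precisely the monotonicity step; the revealed-preference trick makes it almost free, and is, in my view, the cleanest route. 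A minor technicality worth flagging is whether the supremum defining $V_{\text{ESP}}^\star$ is attained. This is not essential: one may replace $p^\star$ by a near-maximizer within slack $\eta>0$, carry out the same rounding, and then let $\eta \to 0$ in the final inequality.
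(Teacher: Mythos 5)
Your proposal is correct and follows essentially the same route as the paper: round the optimal fixed price down to the nearest grid point $p(\kappa)$ and use the fact that lowering the price cannot decrease the aggregate offloaded volume, so the revenue drops by at most the factor $1+\epsilon$. The one place you go beyond the paper is the monotonicity of $\sum_{t}\cycle_{n,t}z_{n,t}^{*}(p)$ in $p$: the paper's chain of inequalities simply asserts this step with the remark that it is ``due to $p(\kappa)\le p_{\text{opt}}$,'' whereas your revealed-preference argument (summing the two optimality inequalities and using that the price enters the payoff only through the linear term $-p\sum_{t}\cycle_{n,t}z_{n,t}$) actually proves it, and does so without any case analysis over the four regimes or any assumption on uniqueness of the MU's optimizer; your remarks on restricting $p^{\star}$ to $[p_{\textit{min}},\bar{E}]$ and on attainment of the supremum are sound but inessential refinements.
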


Lemma \ref{Lemma: discrete loss} builds up the relationship between the revenue under the discrete price candidates and the optimal revenue under the continuous pricing space, depending on the parameter $\epsilon$.
Lemma \ref{Lemma: EXP3} further builds up the relationship between the expected revenue under the pricing policy $\mathcal{P}$ and the revenue under the discrete price candidate.
\begin{lemma}\label{Lemma: EXP3}
	With the parameters $(\epsilon,\delta,\gamma)$, the pricing policy $\mathcal{P}$ described in Algorithm \ref{Algorithm: ESP} can achieve
	\begin{equation}\label{Equ: EXP3 bound}
		\begin{aligned}\textstyle
			\mathbb{E}\left[ V^{T}_{\text{ESP}}(\mathcal{P}) \right]
			\ge
			{(1-\gamma)\left(1-\frac{\delta}{2}\right)} V^T_{\text{ESP}}(k)
			-\Phi(\epsilon,\delta,\gamma),\forall k\in\mathcal{K},
		\end{aligned}	
	\end{equation}
	where $\Phi(\epsilon,\delta,\gamma)$ is a constant and given by
	\begin{equation}
		\Phi(\epsilon,\delta,\gamma)
		=
		\frac{1-\gamma}{\gamma}\cdot \frac{1+\epsilon}{\epsilon} \cdot \frac{N\bar{E}\bar{\cycle}}{\delta} \cdot \ln\left( \frac{ \ln \left( {\bar{E}}/{p_{\textit{min}}} \right) }{ \ln(1+\epsilon) } \right).
	\end{equation}
\end{lemma}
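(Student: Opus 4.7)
The plan is to carry out a multiplicative-weights regret analysis in the spirit of EXP3, adapted to the non-uniform exploration distribution $q_k = (1+\epsilon)^k/\sum_i(1+\epsilon)^i$ used in (\ref{Equ: candidate probability}). I track the potential function $W_t = \sum_{k=1}^K \omega_t(k)$, initialized with $W_1 = K$, and I squeeze $\ln(W_{T+1}/W_1)$ between a sum of losses generated by the algorithm (upper bound) and the cumulative virtual reward of an arbitrary benchmark arm $k\in\mathcal{K}$ (lower bound).

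Two preparatory facts do most of the work. First, the virtual reward $\hat{V}_t(k,\bm{\kappa}_t)$ lies in $[0,1]$: the pathwise numerator satisfies $V_t(k,\bm{\kappa}_t)\le N p(k)\bar{\cycle} = N p_{\textit{min}}(1+\epsilon)^k \bar{\cycle}$, while the exploration term in (\ref{Equ: candidate probability}) forces $h_t(k)\ge \gamma(1+\epsilon)^k/\sum_i(1+\epsilon)^i$, so the normalization in (\ref{Equ: virtual candidate revenue}) gives $\hat{V}_t(k,\bm{\kappa}_t)\le 1$. Second, by the importance-sampling construction I obtain the pathwise identity $\sum_k h_t(k)\hat{V}_t(k,\bm{\kappa}_t)=\gamma V_t(\mathcal{P})/\bigl(N\bar{\cycle}p_{\textit{min}}\sum_i(1+\epsilon)^i\bigr)$, where $V_t(\mathcal{P})=\sum_n V_{t,n}(\kappa_{t,n})$; and in expectation, $\mathbb{E}[\hat{V}_t(k,\bm{\kappa}_t)]=\gamma V_t^{\star}(k)/\bigl(N\bar{\cycle}p_{\textit{min}}\sum_i(1+\epsilon)^i\bigr)$ with $V_t^{\star}(k)=\sum_n p(k)\cycle_{n,t}z_{n,t}(p(k))$.

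Now the two-sided squeeze. For the upper bound, convexity of $x\mapsto (1+\delta)^x$ yields $(1+\delta)^{\hat{V}_t(k,\bm{\kappa}_t)}\le 1+\delta\hat{V}_t(k,\bm{\kappa}_t)$, and the identity $\omega_t(k)/W_t=(h_t(k)-\gamma q_k)/(1-\gamma)$ combines to give $W_{t+1}/W_t\le 1+\tfrac{\delta}{1-\gamma}\sum_k(h_t(k)-\gamma q_k)\hat{V}_t(k,\bm{\kappa}_t)$. Applying $\ln(1+x)\le x$, telescoping, and dropping the non-negative $\gamma q_k\hat{V}_t$ term gives $\ln(W_{T+1}/W_1)\le \tfrac{\delta}{1-\gamma}\sum_t\sum_k h_t(k)\hat{V}_t(k,\bm{\kappa}_t)$. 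For the lower bound, $W_{T+1}\ge \omega_{T+1}(k)$ yields $\ln W_{T+1}\ge \ln(1+\delta)\sum_t\hat{V}_t(k,\bm{\kappa}_t)\ge (\delta-\delta^2/2)\sum_t\hat{V}_t(k,\bm{\kappa}_t)$. Subtracting $\ln W_1 = \ln K$, dividing by $\delta$, taking expectations and substituting the identities above isolates $\mathbb{E}[V_{\text{ESP}}^T(\mathcal{P})]\ge (1-\gamma)(1-\delta/2)\,V_{\text{ESP}}^T(k) - \tfrac{(1-\gamma)\,N\bar{\cycle}p_{\textit{min}}\sum_i(1+\epsilon)^i\,\ln K}{\gamma\delta}$ for every $k\in\mathcal{K}$.

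The last step, and the main bookkeeping hurdle, is to show the loss term equals $\Phi(\epsilon,\delta,\gamma)$. I bound the geometric sum by $\sum_{i=1}^K(1+\epsilon)^i\le \tfrac{1+\epsilon}{\epsilon}(1+\epsilon)^K\le \tfrac{1+\epsilon}{\epsilon}\cdot \bar{E}/p_{\textit{min}}$ using $K=\lfloor\log_{1+\epsilon}(\bar{E}/p_{\textit{min}})\rfloor$, and bound $\ln K\le \ln\bigl(\ln(\bar{E}/p_{\textit{min}})/\ln(1+\epsilon)\bigr)$. The factor $p_{\textit{min}}$ cancels and the remaining expression is exactly $\Phi(\epsilon,\delta,\gamma)$. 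The delicate point throughout is that the non-uniform geometric exploration $q_k\propto (1+\epsilon)^k$ is what simultaneously ensures $\hat{V}_t\le 1$ (by matching the worst-case scale of arm $k$'s payoff) and keeps the $\ln K/\gamma$-style penalty tame; both the boundedness and the identity $\omega_t/W_t=(h_t-\gamma q_k)/(1-\gamma)$ must rest on the very same exploration lower bound, so care is needed to invoke them consistently.
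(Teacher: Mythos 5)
Your proposal is correct and follows essentially the same route as the paper's proof: the same potential $\Omega_t=\sum_k\omega_t(k)$ squeezed between an upper bound via $(1+\delta)^x\le 1+\delta x$ on $[0,1]$ (which rests on the boundedness $\hat{V}_t(k,\bm{\kappa}_t)\le 1$ from the geometric exploration floor, exactly the paper's Proposition~\ref{Proposition: virtual candidate revenue}) and a lower bound via $\omega_{T+1}(k)$ and $\delta-\delta^2/2\le\ln(1+\delta)$, followed by the same importance-sampling expectation identity and the same bounds on $\sum_i(1+\epsilon)^i$ and $\ln K$ to recover $\Phi(\epsilon,\delta,\gamma)$. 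No gaps; the only cosmetic difference is that you phrase the intermediate bound through the pathwise identity $\sum_k h_t(k)\hat{V}_t(k,\bm{\kappa}_t)\propto V_t(\mathcal{P})$ where the paper carries $\sum_k V_t(k,\bm{\kappa}_t)$ until the final expectation step.
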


Lemma \ref{Lemma: EXP3} indicates that the pricing policy $\mathcal{P}$ can achieve at least a constant fraction of the revenue under arbitrary price candidate with an additional loss term.
Note that the loss term $\Phi(\epsilon,\delta,\gamma)$ in (\ref{Equ: EXP3 bound}) does not depend on the time slot number $T$.
This means that $\Phi(\epsilon,\delta,\gamma)$ is increasingly negligible as $T$ increases.
Furthermore, we  obtain Theorem \ref{Theorem: ESP} by combining Lemma \ref{Lemma: discrete loss} and Lemma \ref{Lemma: EXP3}.
\begin{theorem}
	\label{Theorem: ESP}
	With the parameters $(\epsilon,\delta,\gamma)$, the pricing policy $\mathcal{P}$ described in Algorithm \ref{Algorithm: ESP} can achieve
	\begin{equation}
		\begin{aligned}
			\mathbb{E}\left[ V_{\text{ESP}}^{T}(\mathcal{P}) \right]
			\ge
			\frac{(1-\gamma)\left(1-\frac{\delta}{2}\right)}{1+\epsilon} V^{\star}_{\text{ESP}}
			-\Phi(\epsilon,\delta,\gamma) .
		\end{aligned}	
	\end{equation}
\end{theorem}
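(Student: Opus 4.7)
The plan is to obtain Theorem \ref{Theorem: ESP} by straightforward composition of Lemma \ref{Lemma: discrete loss} and Lemma \ref{Lemma: EXP3}, since the two lemmas together cover the two sources of approximation error in the pricing policy $\mathcal{P}$: (i) discretizing the continuous price interval $[p_{\textit{min}},\bar{E}]$ into the geometric grid $\{p_{\textit{min}}(1+\epsilon)^{k}\}_{k\in\mathcal{K}}$, and (ii) learning a good candidate within that grid via the multiplicative-weight / exploration-exploitation rule in Algorithm \ref{Algorithm: ESP}. Lemma \ref{Lemma: discrete loss} controls (i) and Lemma \ref{Lemma: EXP3} controls (ii); chaining them yields the claimed multiplicative-plus-additive bound against the ex post optimum $V^{\star}_{\text{ESP}}$.

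Concretely, I would first invoke Lemma \ref{Lemma: discrete loss} to pick a particular index $\kappa\in\mathcal{K}$ whose discrete price $p(\kappa)$ satisfies $V^{T}_{\text{ESP}}(\kappa)\ge V^{\star}_{\text{ESP}}/(1+\epsilon)$. Then I would instantiate Lemma \ref{Lemma: EXP3} at $k=\kappa$ (the lemma holds for every $k\in\mathcal{K}$, so this instantiation is legitimate), yielding
\begin{equation*}
    \mathbb{E}\left[V^{T}_{\text{ESP}}(\mathcal{P})\right]
    \ge (1-\gamma)\left(1-\tfrac{\delta}{2}\right) V^{T}_{\text{ESP}}(\kappa) - \Phi(\epsilon,\delta,\gamma).
\end{equation*}
Finally, I would use the fact that the coefficient $(1-\gamma)(1-\delta/2)$ is nonnegative for the admissible parameter range $\gamma\in(0,1)$ and $\delta\in(0,2)$, so the first inequality is preserved when I substitute the lower bound on $V^{T}_{\text{ESP}}(\kappa)$ from Lemma \ref{Lemma: discrete loss}. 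Chaining the two inequalities gives
\begin{equation*}
    \mathbb{E}\left[V^{T}_{\text{ESP}}(\mathcal{P})\right]
    \ge \frac{(1-\gamma)(1-\delta/2)}{1+\epsilon}\,V^{\star}_{\text{ESP}} - \Phi(\epsilon,\delta,\gamma),
\end{equation*}
which is precisely the statement of Theorem \ref{Theorem: ESP}.

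Because this step is essentially a tautological combination of the two lemmas, there is no genuine obstacle in the proof of the theorem itself. The only small care I would take is to confirm explicitly that Lemma \ref{Lemma: EXP3} is a uniform statement over all $k\in\mathcal{K}$ (so we may freely specialize to the $\kappa$ supplied by Lemma \ref{Lemma: discrete loss}) and that the multiplicative factor we carry through is nonnegative, so that the direction of the inequality is preserved under the substitution. The technical difficulty of the whole result lives entirely in Lemma \ref{Lemma: EXP3}, whose proof must bound the exponential-weight potential function $\Gamma_{t}=\sum_{k}\omega_{t}(k)$ from above and below and handle the exploration term $\gamma(1+\epsilon)^{k}/\sum_{i}(1+\epsilon)^{i}$ that underweights low-price candidates; but this is the content of that lemma and is assumed here.
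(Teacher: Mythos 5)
Your proof is correct and follows exactly the paper's route: the paper likewise obtains Theorem \ref{Theorem: ESP} by instantiating Lemma \ref{Lemma: EXP3} at the candidate $\kappa$ supplied by Lemma \ref{Lemma: discrete loss} and chaining the two inequalities. Your explicit check that the factor $(1-\gamma)(1-\delta/2)$ is nonnegative, so the inequality direction is preserved under substitution, is the only point of care and you handle it properly.
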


Theorem \ref{Theorem: ESP} presents the performance form mentioned in (\ref{Equ: performance form}).
To have a better understanding on Theorem \ref{Theorem: ESP}, we further provide the following corollary.
\begin{corollary}\label{Corollary: ESP alpha}
	If the optimal revenue ESP satisfies $V^{\star}_{\text{ESP}}\ge \frac{8}{\alpha}\Phi\left(\frac{\alpha}{3},\frac{\alpha}{6},\frac{\alpha}{12}\right)$ for some constant $\alpha\in(0,1]$, then the pricing policy $\mathcal{P}$ with the parameters $(\epsilon,\delta,\gamma)=\left(\frac{\alpha}{3},\frac{\alpha}{6},\frac{\alpha}{12}\right)$ can achieve
	\begin{equation}
		\mathbb{E}\left[ V_{\text{ESP}}^{T}(\mathcal{P}) \right]
		\ge
		\frac{ V^{\star}_{\text{ESP}} }{1+\alpha}  .
	\end{equation}	
\end{corollary}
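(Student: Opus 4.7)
The corollary is purely an algebraic consequence of Theorem 3 once the parameters are specialized and the hypothesis on $V^{\star}_{\text{ESP}}$ is used to absorb the additive slack $\Phi$ into a multiplicative one. The plan is therefore: (i) substitute $(\epsilon,\delta,\gamma)=(\alpha/3,\alpha/6,\alpha/12)$ into the bound of Theorem 3, (ii) use the hypothesis to trade $\Phi(\alpha/3,\alpha/6,\alpha/12)$ for $(\alpha/8)V^{\star}_{\text{ESP}}$, and (iii) verify a one-variable scalar inequality on the remaining coefficient.

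Concretely, with the chosen parameters we have $1-\gamma=1-\alpha/12$, $1-\delta/2=1-\alpha/12$, and $1+\epsilon=1+\alpha/3$, so Theorem 3 yields
\begin{equation*}
\mathbb{E}\left[V^{T}_{\text{ESP}}(\mathcal{P})\right]\ \ge\ \frac{(1-\alpha/12)^{2}}{1+\alpha/3}\,V^{\star}_{\text{ESP}}\ -\ \Phi\!\left(\tfrac{\alpha}{3},\tfrac{\alpha}{6},\tfrac{\alpha}{12}\right).
\end{equation*}
The hypothesis $V^{\star}_{\text{ESP}}\ge (8/\alpha)\,\Phi(\alpha/3,\alpha/6,\alpha/12)$ is equivalent to $\Phi(\alpha/3,\alpha/6,\alpha/12)\le (\alpha/8)V^{\star}_{\text{ESP}}$, so the right-hand side is lower-bounded by $\bigl[\frac{(1-\alpha/12)^{2}}{1+\alpha/3}-\alpha/8\bigr]V^{\star}_{\text{ESP}}$.

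It then suffices to establish the deterministic scalar inequality
\begin{equation*}
\frac{(1-\alpha/12)^{2}}{1+\alpha/3}-\frac{\alpha}{8}\ \ge\ \frac{1}{1+\alpha}\quad\text{for all }\alpha\in(0,1].
\end{equation*}
To do this I would clear denominators by multiplying through by $8(1+\alpha/3)(1+\alpha)>0$ and verify the resulting polynomial inequality in $\alpha$. A quick sanity check shows the margin is tight only at the right endpoint: at $\alpha=1$ the left side equals $121/192-1/8=113/192$ while $1/(1+\alpha)=1/2=96/192$, a gap of $17/192>0$; at $\alpha=0$ both sides equal $1$, and the derivative gap at $0$ is $-\tfrac12-\tfrac18-(-1)=\tfrac{3}{8}>0$. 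Thus the polynomial formed by moving everything to one side is nonnegative on $(0,1]$, which is routine to confirm either by showing it has no roots in that interval or by a direct sign check of its coefficients after expansion.

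The only mildly delicate step is the scalar verification, since the margin shrinks as $\alpha\to 1$; the rest is a mechanical specialization of Theorem 3. Once the inequality is in hand, combining it with the earlier chain yields $\mathbb{E}[V^{T}_{\text{ESP}}(\mathcal{P})]\ge V^{\star}_{\text{ESP}}/(1+\alpha)$, completing the proof.
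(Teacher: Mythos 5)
Your proposal is correct and follows essentially the same route as the paper: specialize Theorem~\ref{Theorem: ESP} at $(\epsilon,\delta,\gamma)=(\alpha/3,\alpha/6,\alpha/12)$, use the hypothesis to replace $\Phi$ by $(\alpha/8)V^{\star}_{\text{ESP}}$, and reduce to the scalar inequality $\frac{(1-\alpha/12)^2}{1+\alpha/3}-\frac{\alpha}{8}\ge\frac{1}{1+\alpha}$; the paper settles that inequality by a short chain of elementary bounds ($(1-\alpha/12)^2>1-\alpha/6$, then $\frac{7\alpha+\alpha^2}{24}\le\frac{\alpha}{3}$, then $\frac{1-\alpha/3}{1+\alpha/3}\ge\frac{1}{1+\alpha}$) rather than by clearing denominators, but your polynomial route also works: the difference, after multiplying by $8(1+\alpha/3)(1+\alpha)$, equals $\frac{\alpha}{18}\left(54-47\alpha-5\alpha^2\right)\ge 0$ on $(0,1]$. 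One arithmetic slip in your endpoint check: at $\alpha=1$ the left side is $\frac{121}{192}-\frac{24}{192}=\frac{97}{192}$, so the margin over $\frac{96}{192}$ is only $\frac{1}{192}$, not $\frac{17}{192}$ --- the inequality is far tighter at $\alpha=1$ than you state, though it still holds.
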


Corollary \ref{Corollary: ESP alpha} shows that under some mild condition the pricing policy $\mathcal{P}$ with appropriate parameters is $(1+\alpha)$-competitive.
Moreover, the condition corresponds to a lower bound of the \textit{ex post optimal revenue} $V^{\star}_{\text{ESP}}$.
Although the lower bound $\frac{8}{\alpha}\cdot\Phi\left(\frac{\alpha}{3},\frac{\alpha}{6},\frac{\alpha}{12}\right)$ increases in $\alpha$, it does not scale in $T$.
This means that the condition will hold as ESP's the edge service monetization goes on.

\begin{figure*}[t]
	\setlength{\abovecaptionskip}{0pt}
	\setlength{\belowcaptionskip}{0pt}
	\centering	
	\subfigure[Impact of edge service price $p$.]
	{\label{fig: User_p}\includegraphics[width=0.27 \linewidth]{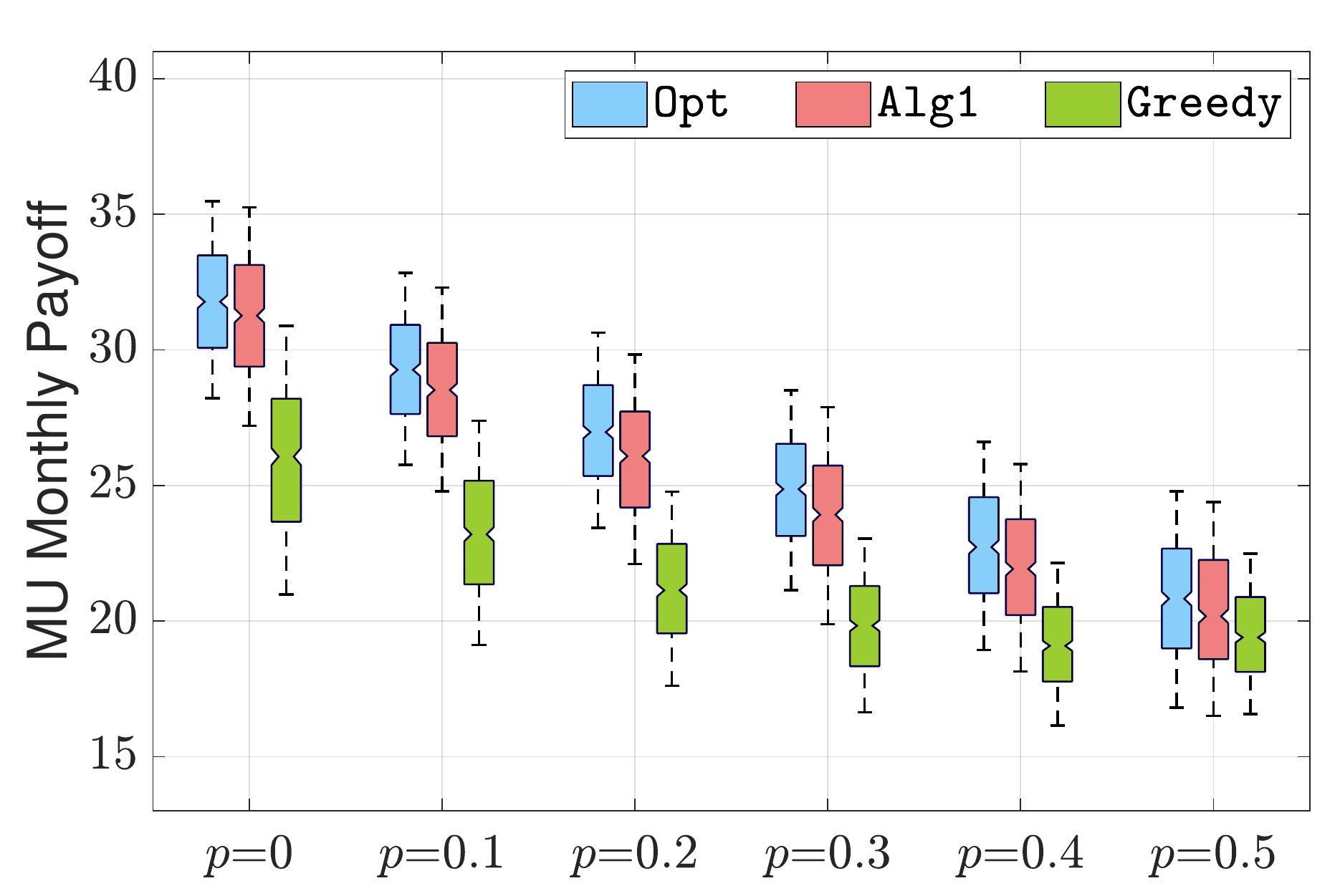}}\qquad
	\subfigure[Impact of monthly data cap $\dcap$.]
	{\label{fig: User_Q}\includegraphics[width=0.27 \linewidth]{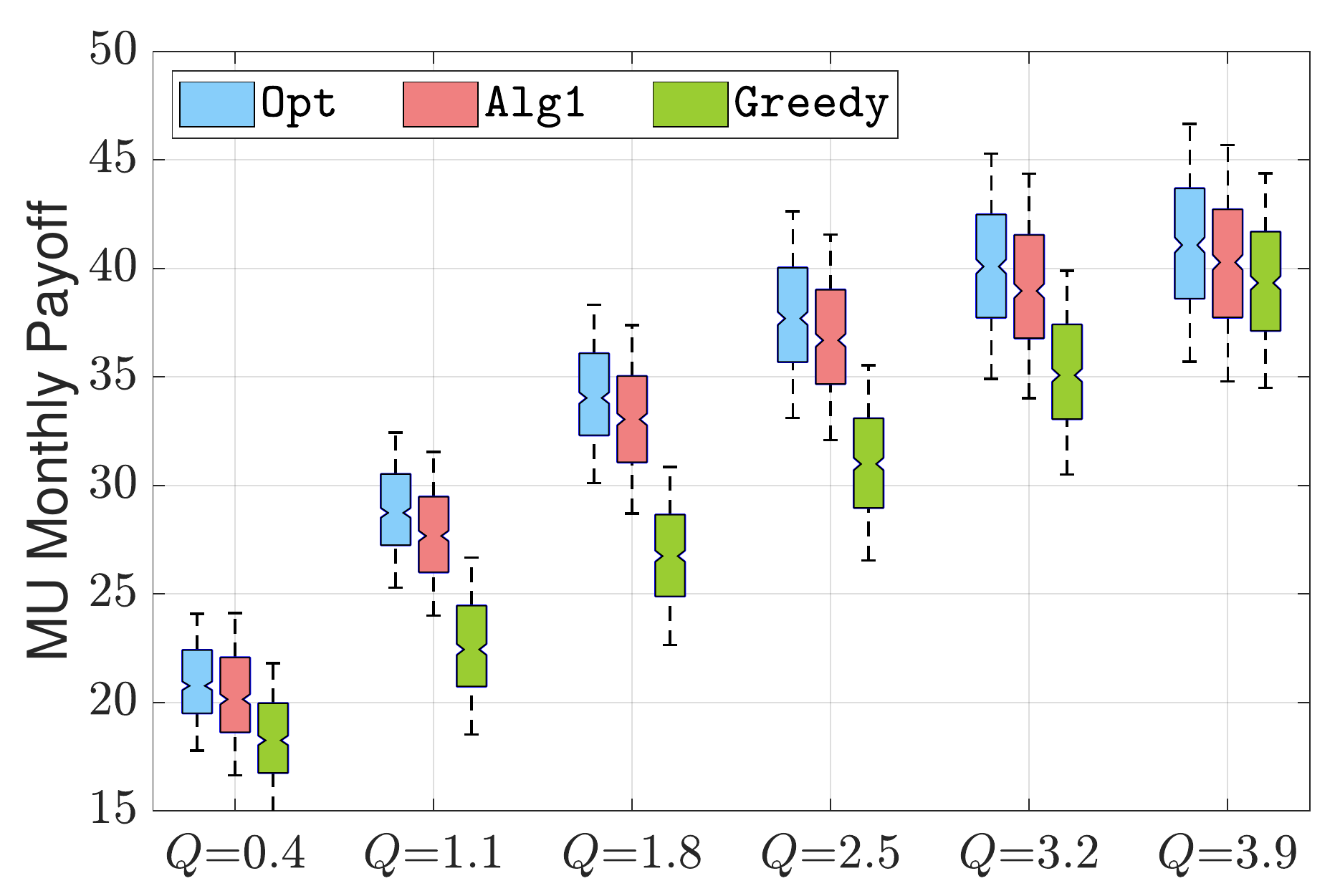}}\qquad
	\subfigure[Impact of overage fee $\adfee$.]
	{\label{fig: User_pi}\includegraphics[width=0.27 \linewidth]{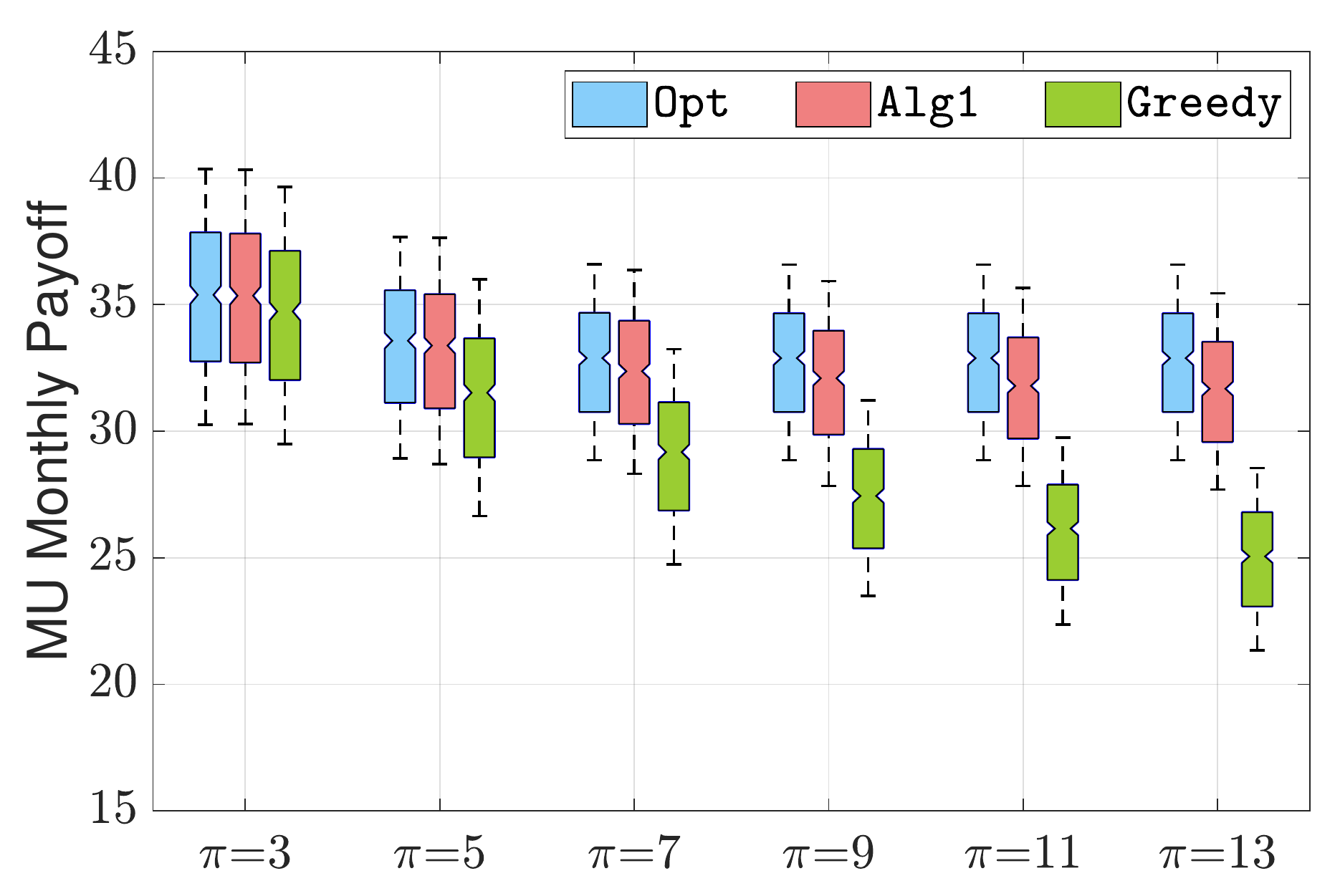}}
	\caption{Performance evaluation on the MU's  strategy $\mathcal{A}$.}
	\label{fig: Performance user}
\vspace{-5mm}
\end{figure*}

\section{Numerical Results}\label{Section: Numerical}
We carry out extensive evaluations on the mobile Internet ecosystem based on our previous analysis.
We will quantify the MU's online strategy $\mathcal{A}$ and ESP's pricing policy $\mathcal{P}$ in Section \ref{Subsection: Evaluation on MU's JOCP Strategy} and Section \ref{Subsection: Evaluation on ESP's Pricing Policy}, respectively.
We then evaluate the economic impact of the edge service on the entire ecosystem in Section \ref{Subsection: Economic Impact of Edge Service}.

\subsection{Evaluation on MU's JOCP Strategy}\label{Subsection: Evaluation on MU's JOCP Strategy}
We will evaluate the performance of the MU's strategy $\mathcal{A}$ in terms of the monthly payoff.
We first introduce the evaluation set-up, and then demonstrate the results.

\textit{Evaluation Set-Up:}
Recall that each MU is associated with the random content service model.
We specify the MU's content service model by randomly generating the three parameters $\data_t$, $\indata_{t}$, and $\cycle_t$ according to truncated normal distributions on their supports $[0,\bar{\data}]$, $[0,\bar{\cycle}]$, and $[0,\bar{\indata}]$, respectively.
In addition, the preference of MU is characterized based on the satisfaction and dissatisfaction.
We adopt the well-known alpha-fair utility to represent the MU's satisfaction, i.e., $u(x)=\frac{x^{1-a}}{1-a}$.
Moreover, we use a quadratic function to model MU's cost of local execution, i.e., $e(s)={s^2}/{2}$.
Furthermore, we randomly generate the MU's valuation parameter ${\val}_{t}$ and sensitivity parameter ${\cost}_{t}$ according to a truncated normal distribution with the range $[0,2]$ for normalization consideration, i.e., $\mathbb{E}[\val_{t}]=\mathbb{E}[\cost_{t}]=1$.

Under the above evaluation set-up, we will compare the MU's monthly payoff in the following three cases:
\begin{itemize}
	\item The case of $\mathtt{Opt}$ corresponds to the off-line optimal outcome discussed in Theorem \ref{Theorem: MU KKT}.
		
	\item The case of $\mathtt{Alg1}$ corresponds to the proposed online strategy $\mathcal{A}$ defined in Algorithm \ref{Algorithm: MU}.
	
	\item The case of $\mathtt{Greedy}$ corresponds to the greedy strategy that tends to maximize the daily payoff without taking into account the potential future over usage.\footnote{The greedy policy means that the MU is myopic. It is a reasonable benchmark for two reasons.
		First, it is easy to implement, since it does not requires that MUs should know the future information.
		Second, it also captures the bounded rationality behavior for MUs \cite{arthur1994inductive}.}
\end{itemize}

We visualize the MU's monthly payoff through box-plots in Fig. \ref{fig: Performance user}, where the three sub-figures investigate the impact of the monthly data cap, the edge service price, and the overage fee, respectively.

Fig. \ref{fig: User_p} shows how the edge service price $p$ affects the MU's monthly payoff in the aforementioned three cases.
For the visualization purpose, we set the same daily edge service price here.
Overall the MU's monthly payoff decreases in the edge service price, as the higher price means that it is less likely for MU to utilize the edge service to reduce the local execution cost.
The small gaps between the red boxes and the blue boxes indicate the good performance of the strategy $\mathcal{A}$, which achieves 95\% of the payoff in case $\mathtt{Opt}$.

Fig. \ref{fig: User_Q} shows that the MU's monthly payoff increases in his monthly data cap.
Comparing the blue boxes and the red boxes in Fig. \ref{fig: User_Q}, we find that the MU's payoff achieved by the strategy $\mathcal{A}$ is very close to the optimal payoff in hindsight.
Comparing the green boxes and the red boxes in Fig. \ref{fig: User_Q}, we find that the payoff gap between our strategy $\mathcal{A}$ and the greedy strategy differs in the monthly data cap.
The intuitions are as follows:
\begin{itemize}
	\item A small data cap (e.g., $\dcap=0.4$GB in Fig. \ref{fig: User_Q}) implies that the MU's wireless data demand will exceed the monthly data cap in most cases.
		Thus the shadow price updating in strategy $\mathcal{A}$ will quickly converge to $\hat{\lambda}_t=\adfee$, which is similar to the case of greedy strategy.
		Hence there is a little gap between $\mathtt{Alg1}$ and $\mathtt{Greedy}$ when the data cap is small.
	\item A large data cap (e.g., $\dcap=3.9$GB in Fig. \ref{fig: User_Q}) implies that the MU's wireless data demand is less than the monthly data cap in most cases.
		Thus the shadow price in strategy $\mathcal{A}$ will quickly converge to $\hat{\lambda}_t=0$, which is similar to the case of greedy strategy.
		Hence there is a little payoff gap between $\mathtt{Alg1}$ and $\mathtt{Greedy}$ when the data cap is small.
	\item A medium data cap (e.g., $\dcap=2.5$GB in Fig. \ref{fig: User_Q}) means that the MU's wireless data demand and the monthly data cap are comparable, thus there is a great uncertainty for the MU's cap-acquiring outcome.
		In this case, the greedy strategy fails in learning the optimal shadow price, thus the payoff gap between $\mathtt{Alg1}$ and $\mathtt{Greedy}$ is large.
		
\end{itemize}

Fig. \ref{fig: User_pi} plots how the overage fee $\adfee$ affects the MU's monthly payoff in the three cases.
Overall a larger overage fee corresponds to lower MU payoff.
However, comparing the green boxes and the red boxes in Fig. \ref{fig: User_pi}, we note that our strategy $\mathcal{A}$ is less sensitive to the overage fee than the greedy strategy.
This is because that our strategy $\mathcal{A}$ can iteratively learn the shadow price (based on MU's previous decisions), which helps the MU avoid great monetary cost for exceeding the monthly data cap.
While the greedy policy, aiming at the myopic benefit, cannot prevent it.
Moreover, the MU's monthly payoff under the strategy $\mathcal{A}$ is 93\% (on average) of the payoff in the case of $\mathtt{Opt}$.

\subsection{Evaluation on ESP's Pricing Policy}\label{Subsection: Evaluation on ESP's Pricing Policy}
We evaluate the performance of ESP's pricing policy $\mathcal{P}$ in terms of its total revenue from a group of MUs.
Specifically, we take into account a total of five hundred MUs with randomly generated parameters as in Section \ref{Subsection: Evaluation on MU's JOCP Strategy}.
Recall that the ESP's pricing policy $\mathcal{P}$ defined in Algorithm \ref{Algorithm: ESP} depends on three parameters (i.e., $\epsilon$, $\gamma$, and $\delta$), which will jointly affect the theoretic performance of the pricing policy $\mathcal{P}$.
In our evaluation, we set the three parameters according to Corollary \ref{Corollary: ESP alpha} with $\alpha=1$.
That is, the ESP's total revenue achieved by the pricing policy $\mathcal{P}$ is at least 50\% of the ex post optimal revenue if the condition in Corollary \ref{Corollary: ESP alpha} holds (which is true in our evaluation setup).
We run the evaluation for one hundred times and show the results in  Fig. \ref{fig: Performance ESP data} and Fig. \ref{fig: Performance ESP indata}.

\begin{figure*}
	\vspace{-3mm}
	\begin{minipage}{0.49\textwidth}
		\centering
		\setlength{\abovecaptionskip}{0pt}
		\setlength{\belowcaptionskip}{0pt}
		\subfigure[Average ESP revenue.]
		{\label{fig: ESPrevenues_data}\includegraphics[width=0.47\linewidth]{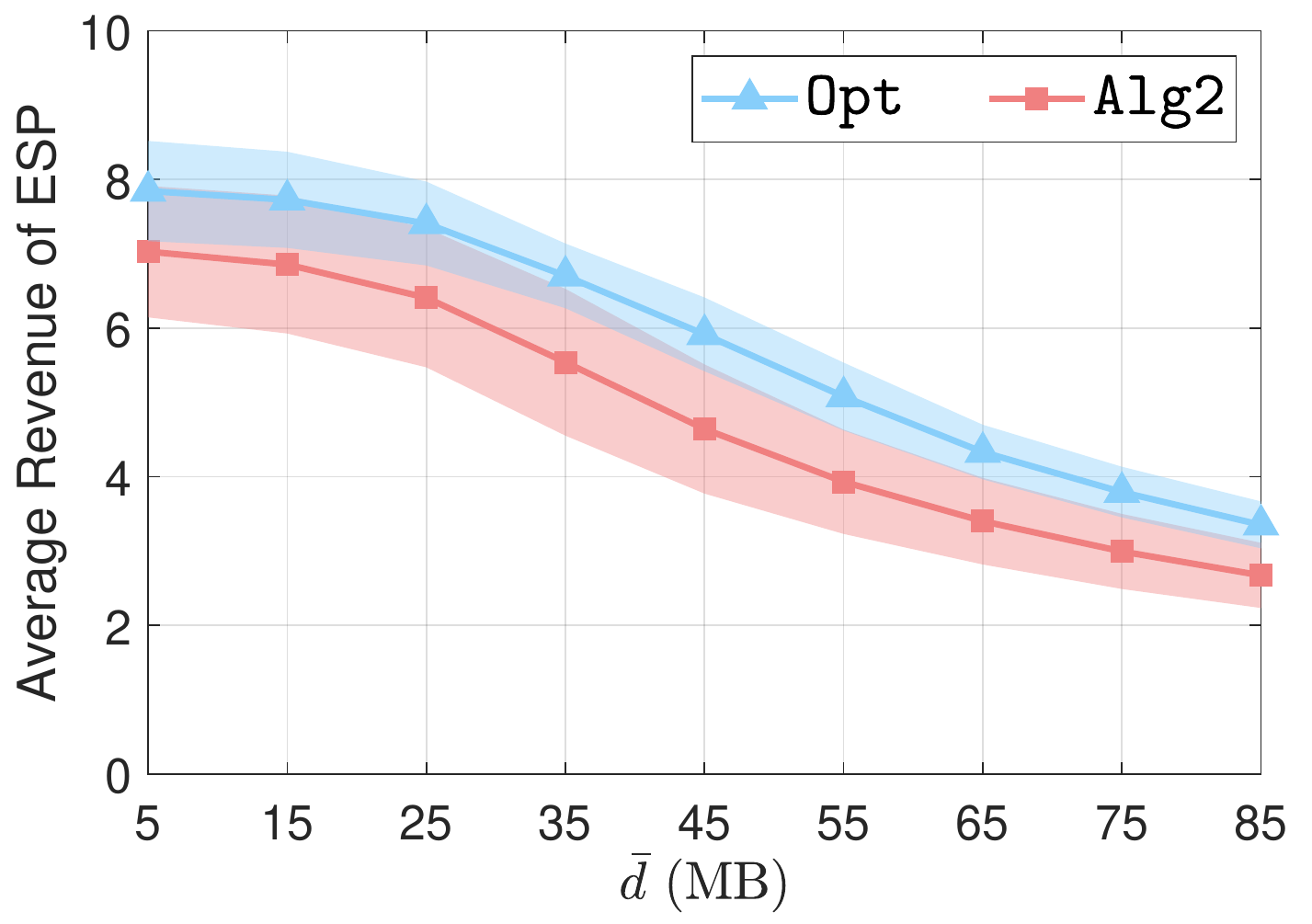}}\quad
		\subfigure[Ratio of the optimal revenue.]
		{\label{fig: ESPfraction_data}\includegraphics[width=0.47\linewidth]{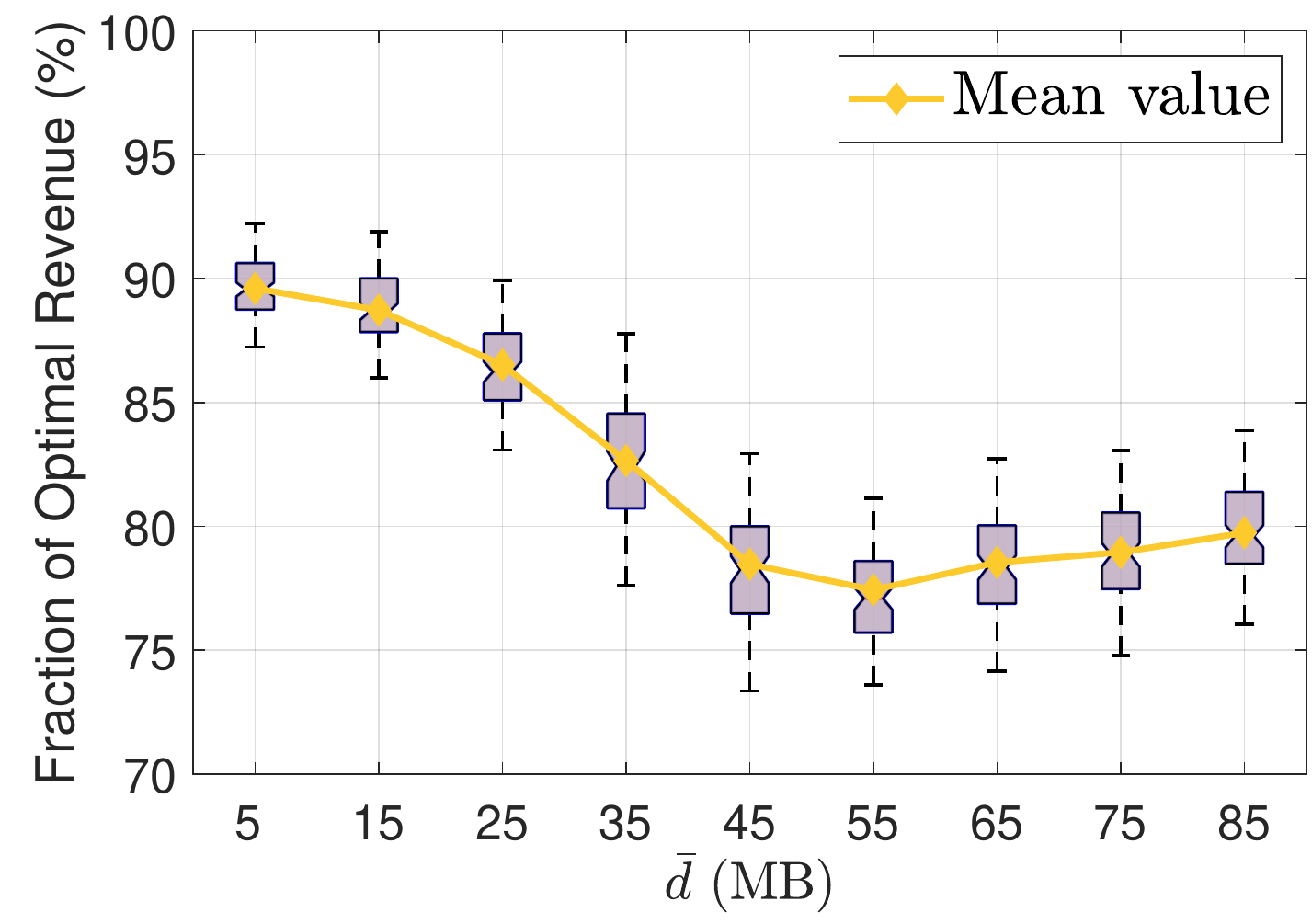}}
		\caption{ESP's revenue versus maximal usage level $\bar{\data}$.}
		\label{fig: Performance ESP data}
	\end{minipage}
	\begin{minipage}{0.49\textwidth}
		\setlength{\abovecaptionskip}{0pt}
		\setlength{\belowcaptionskip}{0pt}
		\centering
		\subfigure[Average ESP revenue.]
		{\label{fig: ESPrevenues_indata}\includegraphics[width=0.47\linewidth]{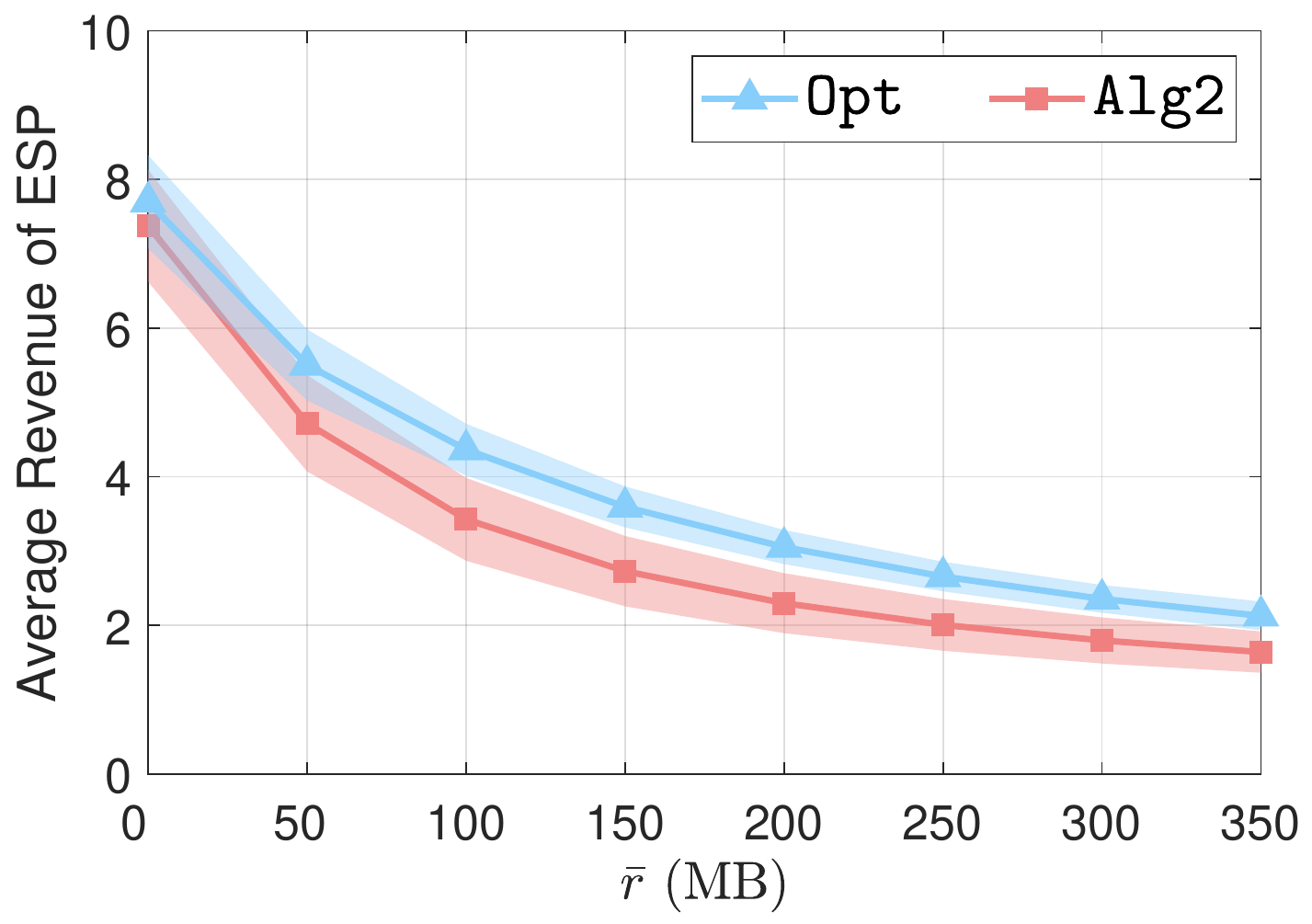}}\quad
		\subfigure[Ratio of the optimal revneue.]
		{\label{fig: ESPfraction_indata}\includegraphics[width=0.47\linewidth]{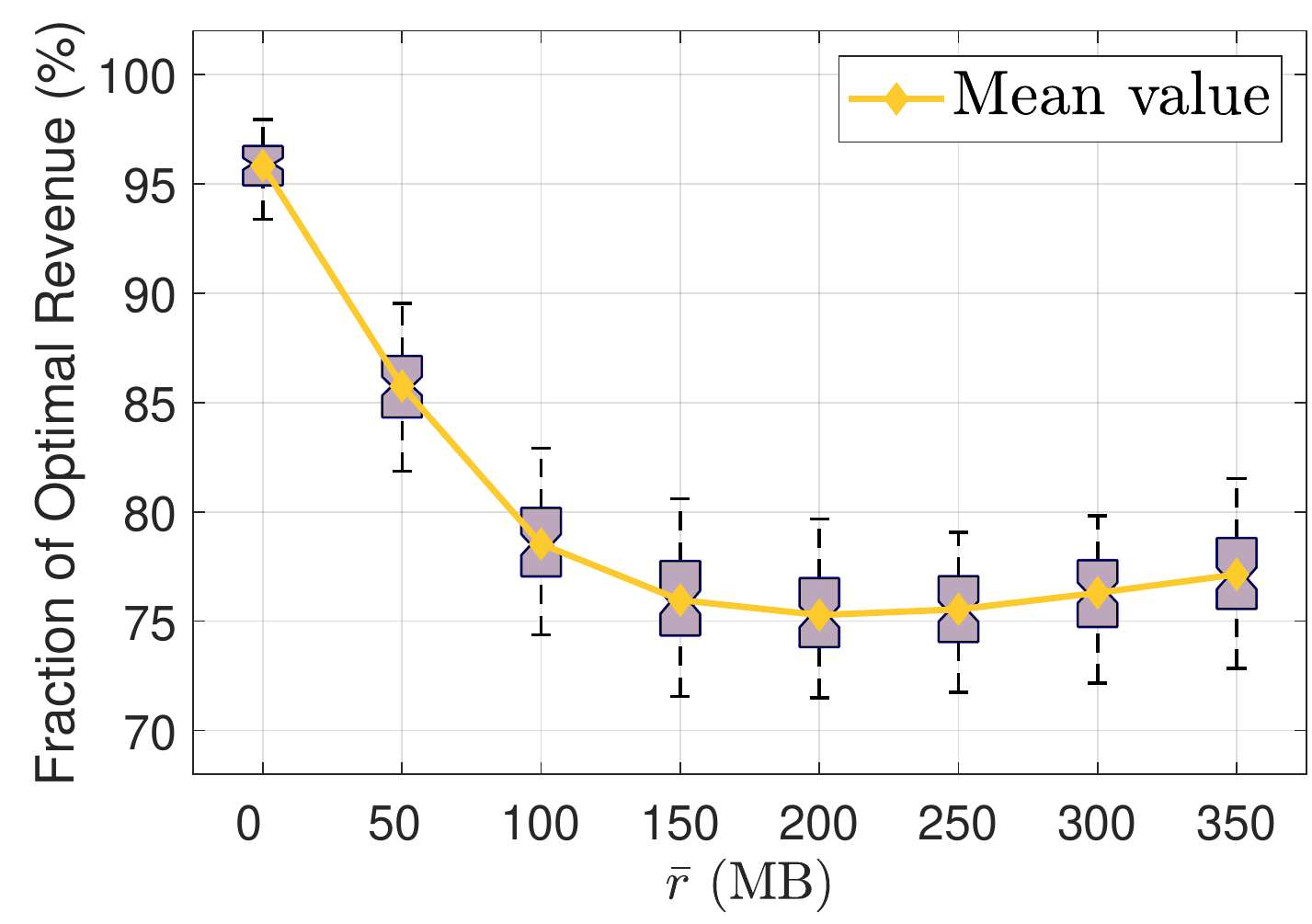}}
		\caption{ESP's average revenue versus maximal raw data $\bar{\indata}$.}
		\label{fig: Performance ESP indata}
	\end{minipage}
	\vspace{-3mm}
\end{figure*}

\begin{figure*}
	\centering
	\setlength{\abovecaptionskip}{0pt}
	\setlength{\belowcaptionskip}{0pt}
	\subfigure[MUs' payoffs.]
	{\label{fig: Impact_MU}\includegraphics[width=0.23\linewidth]{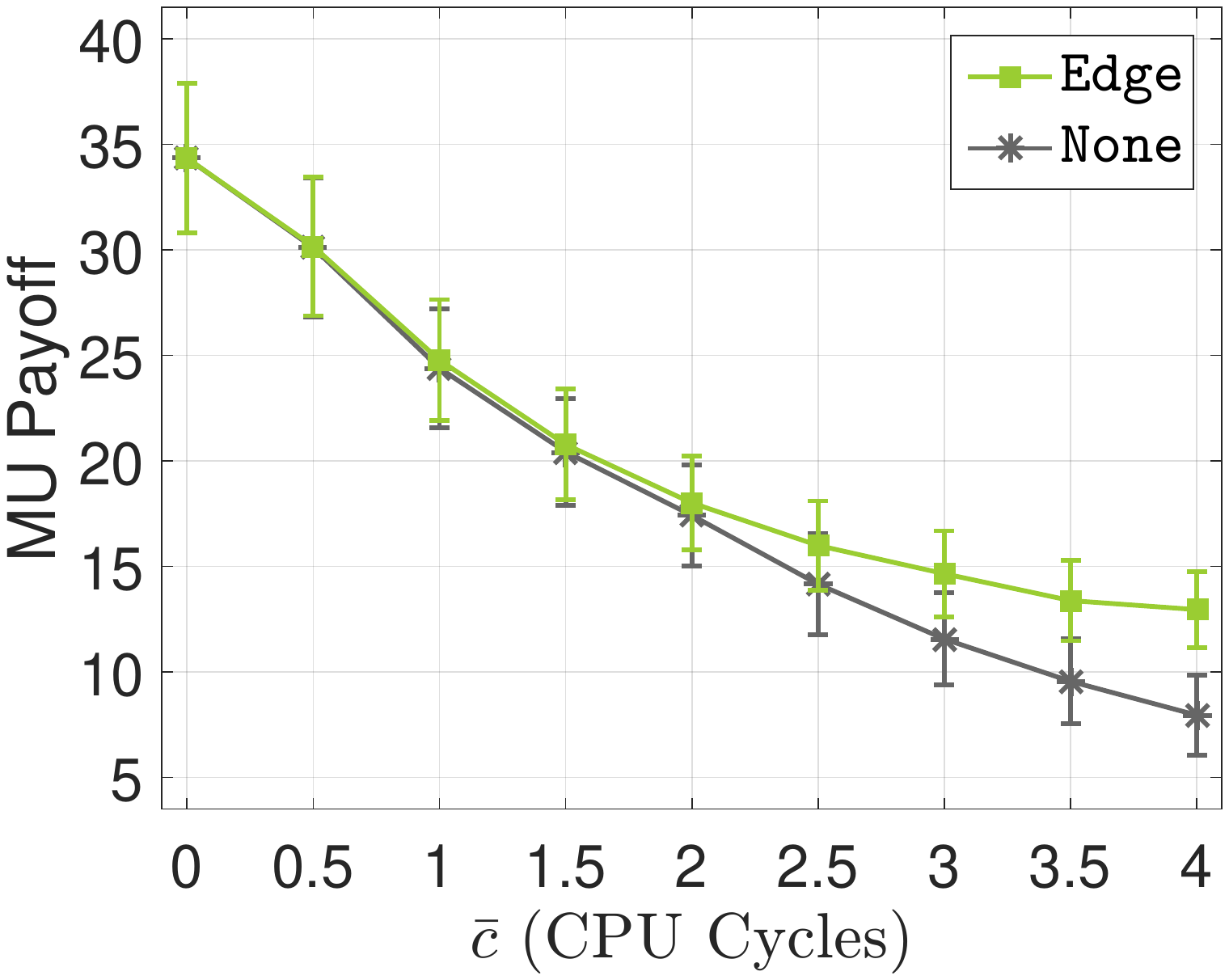}}\quad
	\subfigure[CP's revenue.]
	{\label{fig: Impact_CP}\includegraphics[width=0.23\linewidth]{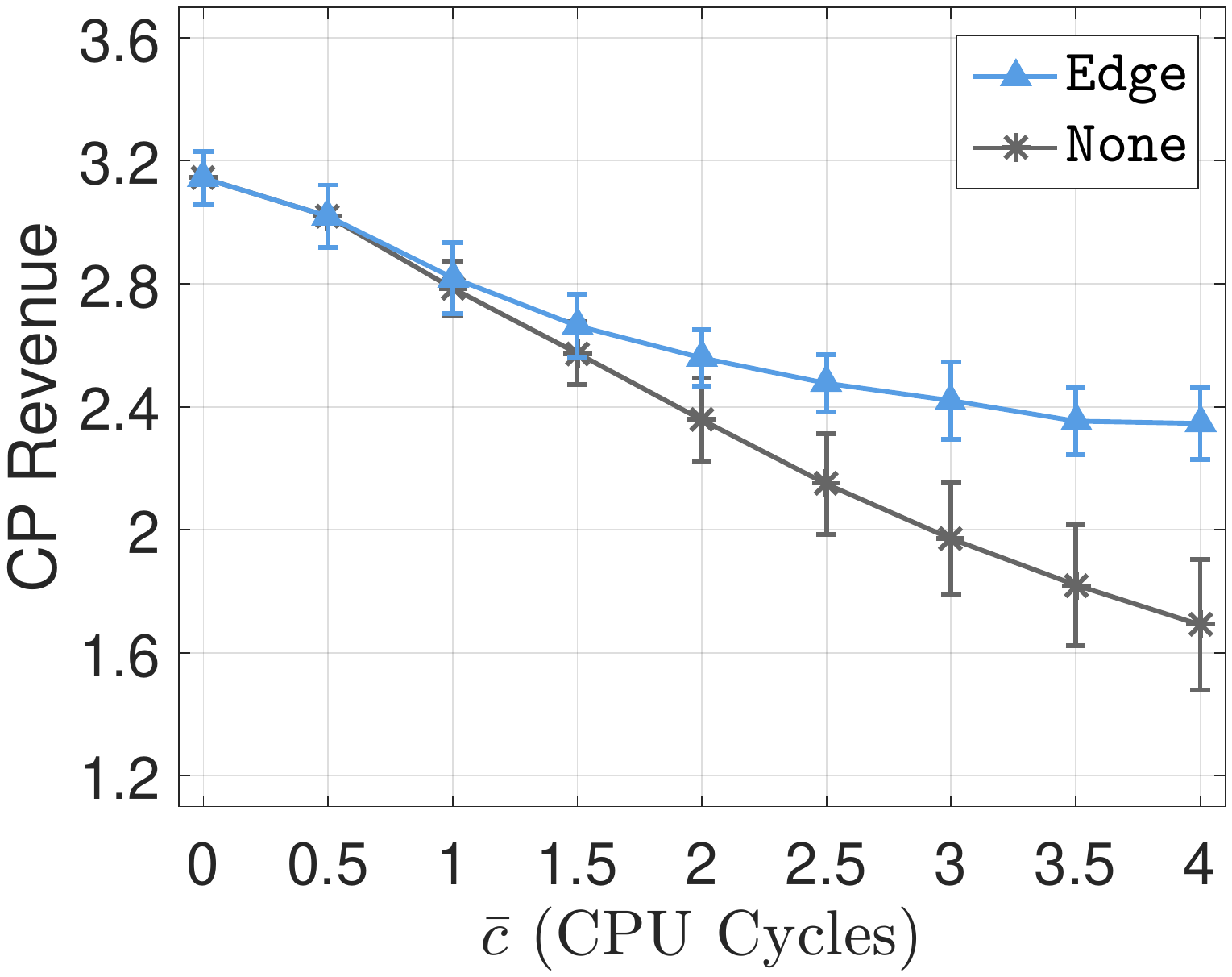}}\quad
	\subfigure[ISP's revenue.]
	{\label{fig: Impact_ISP}\includegraphics[width=0.23\linewidth]{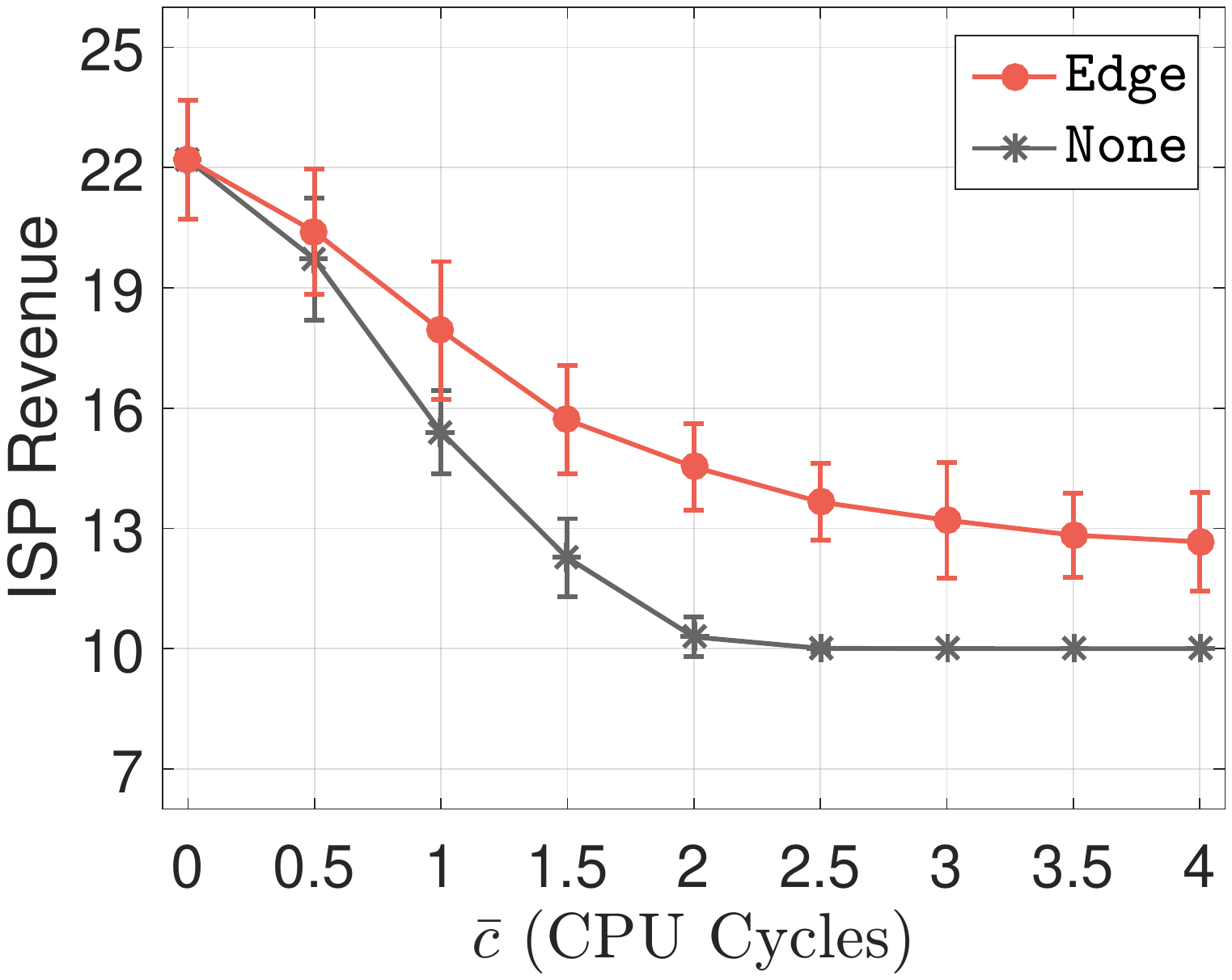}}\quad
	\subfigure[Improvement of edge service.]
	{\label{fig: Impact_Social}\includegraphics[width=0.23\linewidth]{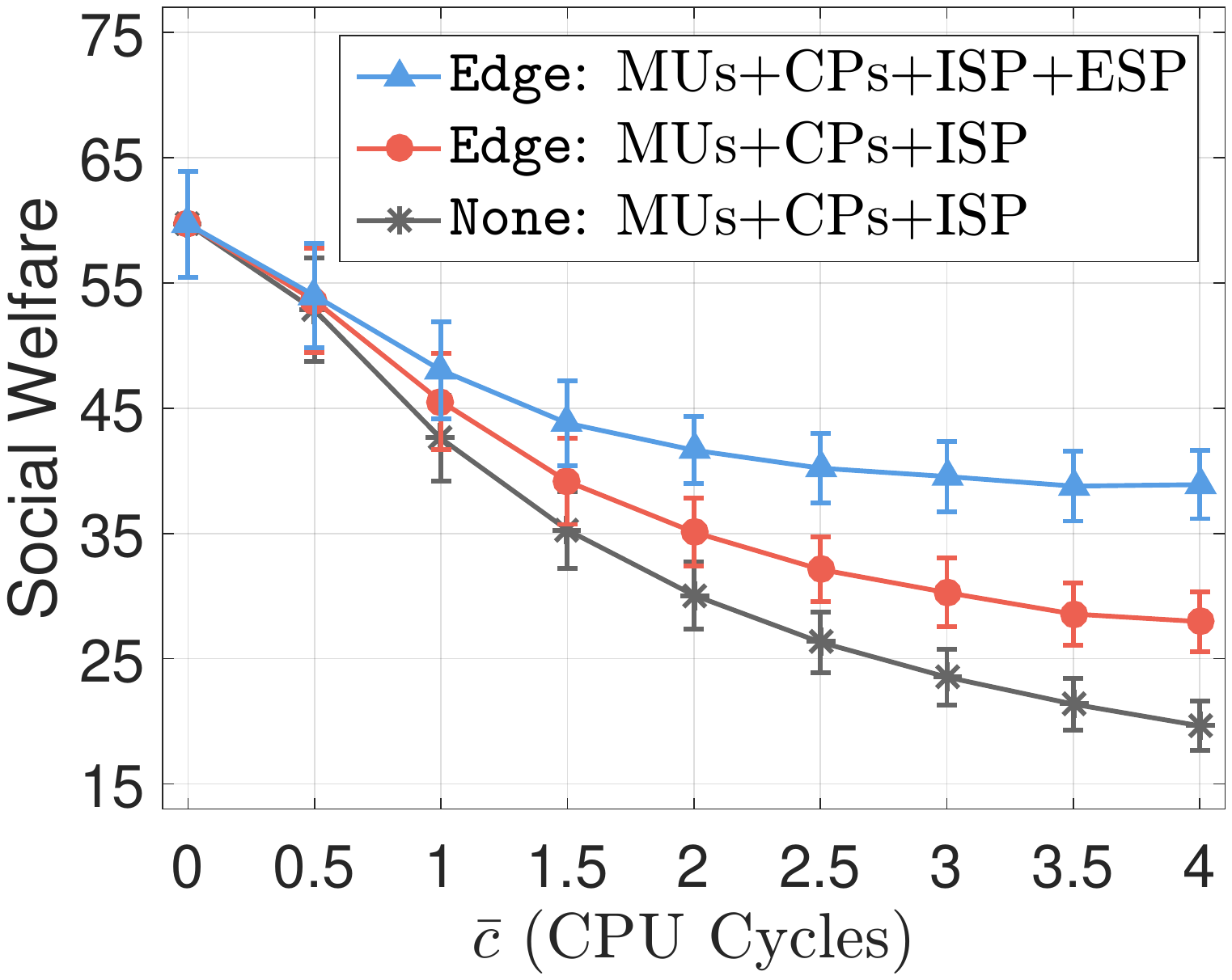}}
	\caption{Economic effect of edge service on the ecosystem.\vspace{-10pt}}
	\label{fig: Ecpnomic effect of edge service}
	\vspace{-3mm}
\end{figure*}

Fig. \ref{fig: Performance ESP data} plots the performance of the pricing policy $\mathcal{P}$ under different values of the maximal data-usage $\bar{\data}$.
We compare ESP's revenue under the pricing policy $\mathcal{P}$ (labeled by $\mathtt{Alg2}$) with the ex post optimal revenue (labeled by $\mathtt{Opt}$).
\begin{itemize}
	\item Fig. \ref{fig: ESPrevenues_data} plots the ESP's average revenue in the two cases.
		Specifically, the two curves with markers represent the average results (over multiple runs).
		The red and blue regions correspond to the three-sigma ranges.
		Overall ESP's revenue decreases in the maximal data usage level $\bar{\data}$.
		Intuitively, a larger $\bar{\data}$ value means that MUs have greater wireless data demand from the content delivery, thus less wireless data quota to offloading computation tasks.

	\item Fig. \ref{fig: ESPfraction_data} shows the ESP's revenue ratio of  case $\mathtt{Alg2}$ to case $\mathtt{Opt}$ through box-plot.
		The yellow curve represents the mean results over multiple runs.
		We note that the pricing policy $\mathcal{P}$ actually achieves (on average) a fraction 83\% of the ex post optimal revenue, which is much better than the theoretic lower bound (i.e., 50\% in our evaluation setup).
\end{itemize}

Fig. \ref{fig: Performance ESP indata} shows the impact of the maximal raw data amount $\bar{\indata}$ on the pricing policy $\mathcal{P}$.
Similarly, we compare the ESP's revenue under the pricing policy $\mathcal{P}$ (labeled by $\mathtt{Alg2}$) with the ex post optimal revenue (labeled by $\mathtt{Opt}$).
\begin{itemize}
	\item Fig. \ref{fig: ESPrevenues_indata} shows that the ESP's revenue decreases in the maximal raw data amount.
	This is because that a larger $\bar{\indata}$ value means a greater offloading cost for the MUs, leading to a lower demand on the edge service.
	\item Fig. \ref{fig: ESPfraction_indata} shows that the pricing policy $\mathcal{P}$ can achieve (on average) a fraction 79\% of the ex post optimal revenue, which is much better than the theoretic lower bound.
\end{itemize}

So far, we have illustrated the performance of the MU's strategy $\mathcal{A}$ and the ESP's pricing policy $\mathcal{P}$.
Next we further evaluate the economic effect of the edge service on the entire mobile Internet ecosystem.

\subsection{Economic Impact of Edge Service}\label{Subsection: Economic Impact of Edge Service}
We evaluate how the edge service affect the entire ecosystem, including MUs' payoffs, ISP's revenue, and CP's revenue.
Recall that the ISP's revenue, defined in (\ref{Equ: Revenue of ISP}), consists of the subscription fee and the overage fee.
In our evaluation, we consider the wireless data plan with data cap $\dcap=1$GB, subscription fee $\pcap=\$10$, and per-unit overage fee $\adfee=\$15$/GB.
Furthermore, the CPs' revenue, defined in (\ref{Equ: Revenue of CP}), mainly comes from displaying advertisements, thus is positively related to MUs' total content acquisitions.
In our evaluation, we follow \cite{joe2018sponsoring} and suppose that the advertising revenue takes the form $v(x)\triangleq x^{1-\tau}/(1-\tau)$.
Based on the above setup, we will compare the following two cases.
\begin{itemize}
	\item The case of $\mathtt{None}$ represents the classic mobile Internet ecosystem without edge service.
		MUs will execute the computation tasks locally.
	\item The case of $\mathtt{Edge}$ represents the ecosystem with ESP offering edge service based on the pricing policy $\mathcal{P}$.
\end{itemize}

Fig. \ref{fig: Ecpnomic effect of edge service} plots the economic impact of the edge service on the ecosystem.
The horizontal axis of the four sub-figures represents the maximal computation-intensity $\bar{\cycle}$.
Moreover, Fig. \ref{fig: Impact_MU}, Fig. \ref{fig: Impact_CP}, and Fig. \ref{fig: Impact_ISP} shows the average benefits (i.e., revenue or payoff) of ISP, CPs, and MUs, respectively.
Fig. \ref{fig: Impact_Social} shows the social welfare.

In Fig. \ref{fig: Impact_MU}, the black star curve and green square curve show the MUs' payoffs in case $\mathtt{None}$ and case $\mathtt{Edge}$, respectively.
Overall, the MUs' payoffs decrease in $\bar{\cycle}$.
This is because that the greater computation-intensity reduces MUs' content acquisitions due to the resource-limited devices and the costly edge service.
However, comparing the two curves in Fig. \ref{fig: Impact_MU}, we find that the edge service improves MUs' payoffs (up to 63\%), as it alleviates the local execution cost.

In Fig. \ref{fig: Impact_CP}, the black star curve and blue triangle curve show the revenue of CPs in case $\mathtt{None}$ and case $\mathtt{Edge}$, respectively.
As mentioned, the greater computation-intensity reduces MUs' content acquisitions, which eventually leads to the decreasing revenue for CPs.
Nevertheless, the edge service can stimulate the MUs' content acquisitions, which improves the revenue of CPs (up to 37\%).

In Fig. \ref{fig: Impact_ISP}, the black star curve and red circle curve show the revenue of ISP in case $\mathtt{None}$ and case $\mathtt{Edge}$, respectively.
ISP's revenue decreases in the computation-intensity $\bar{\cycle}$, as the greater computation-intensity reduces MUs' content acquisitions.
However, comparing the two curves in Fig. \ref{fig: Impact_ISP}, we find that the edge service increases the revenue of ISP (up to 40\%) as it stimulates MUs' content acquisitions.

Fig. \ref{fig: Impact_Social} compares the social welfares.
The black star curve corresponds to the case of $\mathtt{None}$.
The red circle curve represents the total welfare of MUs, CPs, and ISP under the case of $\mathtt{Edge}$.
The blue triangle curve plots the social welfare of MUs, CPs, ISP, and ESP under the case of $\mathtt{Edge}$.
Comparing the three curves, it is obvious that the edge service can significantly increase the social welfare of the mobile Internet ecosystem.

\section{Conclusion and Future Work}\label{Section: Conclusion}
This paper studies the economic interactions in the mobile Internet ecosystem.
Specifically, we investigate the MUs' Joint Content Acquisition and Task Offloading (J-CATO) problem and design an online strategy with provable performance based on the off-line insights.
Moreover, we propose for ESP an edge service pricing policy  that does not rely on any market information.
We find that the edge service helps the MUs reduce the local execution cost, leading to higher average MU payoff.
In addition, the edge service also stimulates MUs' content acquisitions, which increases the ISP's benefit from wireless data service and CPs' benefit from content service.
Therefore, the edge service leads to higher social benefit for the ecosystem.

As this is the first study on monetizing edge service, there are some open problems that deserve investigation.
\begin{itemize}
	\item It is necessary to study different business models.
		We view the edge service provider as a third-part.
		In practice, both the ISP and giant CP can deploy edge data center and monetize the edge computing service.
		This leads to different value chains together with different economic insights.
		
%
	\item It is also interesting to take into account the ESP's investment cost on the edge servers.
		In that case, the ESP faces a new issue, i.e., whether it can recoup the investment within the lifetime of the edge servers.
		
\end{itemize}

\section*{Acknowledgements}

This work is supported by the National Natural Science Foundation of
China (Grant No. 61972113 and 61801145),  Shenzhen Science and Technology
Program (Grant No. JCYJ20190806112215116,
JCYJ20180306171800589, and KQTD20190929172545139),  Guangdong Science and
Technology Planning Project under Grant
2018B030322004.
This work is also supported in part by the funding
from Shenzhen Institute of Artificial
Intelligence and Robotics for Society.

\bibliography{ref}

\begin{thebibliography}{10}
\providecommand{\url}[1]{#1}
\csname url@samestyle\endcsname
\providecommand{\newblock}{\relax}
\providecommand{\bibinfo}[2]{#2}
\providecommand{\BIBentrySTDinterwordspacing}{\spaceskip=0pt\relax}
\providecommand{\BIBentryALTinterwordstretchfactor}{4}
\providecommand{\BIBentryALTinterwordspacing}{\spaceskip=\fontdimen2\font plus
\BIBentryALTinterwordstretchfactor\fontdimen3\font minus
  \fontdimen4\font\relax}
\providecommand{\BIBforeignlanguage}[2]{{%
\expandafter\ifx\csname l@#1\endcsname\relax
\typeout{** WARNING: IEEEtran.bst: No hyphenation pattern has been}%
\typeout{** loaded for the language `#1'. Using the pattern for}%
\typeout{** the default language instead.}%
\else
\language=\csname l@#1\endcsname
\fi
#2}}
\providecommand{\BIBdecl}{\relax}
\BIBdecl

\bibitem{sen2013survey}
S.~Sen, C.~Joe-Wong, S.~Ha, and M.~Chiang, ``A survey of smart data pricing:
  Past proposals, current plans, and future trends,'' \emph{ACM Computing
  Surveys (CSUR)}, vol.~46, no.~2, p.~15, 2013.

\bibitem{dhamdhere2011twelve}
A.~Dhamdhere and C.~Dovrolis, ``Twelve years in the evolution of the internet
  ecosystem,'' \emph{IEEE/ACM Transactions on Networking (ToN)}, vol.~19,
  no.~5, pp. 1420--1433, 2011.

\bibitem{zheng2017customized}
L.~Zheng, C.~Joe-Wong, C.~W. Tan, S.~Ha, and M.~Chiang, ``Customized data plans
  for mobile users: Feasibility and benefits of data trading,'' \emph{IEEE
  journal on selected areas in communications}, vol.~35, no.~4, pp. 949--963,
  2017.

\bibitem{wang2019multi}
Z.~Wang, L.~Gao, and J.~Huang, ``Multi-cap optimization for wireless data plans
  with time flexibility,'' \emph{IEEE Transactions on Mobile Computing}, 2019.

\bibitem{hu2015mobile}
Y.~C. Hu, M.~Patel, D.~Sabella, N.~Sprecher, and V.~Young, ``Mobile edge
  computing—a key technology towards 5g,'' \emph{ETSI white paper}, vol.~11,
  no.~11, pp. 1--16, 2015.

\bibitem{shakkottai2006economics}
S.~Shakkottai and R.~Srikant, ``Economics of network pricing with multiple
  isps,'' \emph{IEEE/ACM Transactions On Networking}, vol.~14, no.~6, pp.
  1233--1245, 2006.

\bibitem{ma2013public}
R.~T. Ma and V.~Misra, ``The public option: a nonregulatory alternative to
  network neutrality,'' \emph{IEEE/ACM Transactions on Networking}, vol.~21,
  no.~6, pp. 1866--1879, 2013.

\bibitem{he2006pricing}
L.~He and J.~Walrand, ``Pricing and revenue sharing strategies for internet
  service providers,'' \emph{IEEE Journal on Selected Areas in Communications},
  vol.~24, no.~5, pp. 942--951, 2006.

\bibitem{ma2010internet}
R.~T. Ma, D.~M. Chiu, J.~Lui, V.~Misra, and D.~Rubenstein, ``Internet
  economics: The use of shapley value for isp settlement,'' \emph{IEEE/ACM
  Transactions on Networking (TON)}, vol.~18, no.~3, pp. 775--787, 2010.

\bibitem{hande2010pricing}
P.~Hande, M.~Chiang, R.~Calderbank, and J.~Zhang, ``Pricing under constraints
  in access networks: Revenue maximization and congestion management,'' in
  \emph{IEEE INFOCOM}, 2010, pp. 1--9.

\bibitem{ma2016usage}
R.~T. Ma, ``Usage-based pricing and competition in congestible network service
  markets,'' \emph{IEEE/ACM Transactions on Networking (TON)}, vol.~24, no.~5,
  pp. 3084--3097, 2016.

\bibitem{zheng2018optimizing}
L.~Zheng, C.~Joe-Wong, M.~Andrews, and M.~Chiang, ``Optimizing data plans:
  Usage dynamics in mobile data networks,'' in \emph{IEEE INFOCOM}, 2018, pp.
  2474--2482.

\bibitem{wang2019economic}
Z.~Wang, L.~Gao, J.~Huang, and B.~Shou, ``Economic viability of data trading
  with rollover,'' in \emph{IEEE INFOCOM}, 2019, pp. 1846--1854.

\bibitem{wu2011revenue}
Y.~Wu, H.~Kim, P.~H. Hande, M.~Chiang, and D.~H. Tsang, ``Revenue sharing among
  isps in two-sided markets,'' in \emph{IEEE INFOCOM}, 2011, pp. 596--600.

\bibitem{joe2018sponsoring}
C.~Joe-Wong, S.~Sen, and S.~Ha, ``Sponsoring mobile data: analyzing the impact
  on internet stakeholders,'' \emph{IEEE/ACM Transactions on Networking},
  vol.~26, no.~3, pp. 1179--1192, 2018.

\bibitem{mao2017survey}
Y.~Mao, C.~You, J.~Zhang, K.~Huang, and K.~B. Letaief, ``A survey on mobile
  edge computing: The communication perspective,'' \emph{IEEE Communications
  Surveys \& Tutorials}, vol.~19, no.~4, pp. 2322--2358, 2017.

\bibitem{8736011}
Z.~{Zhou}, X.~{Chen}, E.~{Li}, L.~{Zeng}, K.~{Luo}, and J.~{Zhang}, ``Edge
  intelligence: Paving the last mile of artificial intelligence with edge
  computing,'' \emph{Proceedings of the IEEE}, vol. 107, no.~8, pp. 1738--1762,
  2019.

\bibitem{you2016energy}
C.~You, K.~Huang, H.~Chae, and B.-H. Kim, ``Energy-efficient resource
  allocation for mobile-edge computation offloading,'' \emph{IEEE Transactions
  on Wireless Communications}, vol.~16, no.~3, pp. 1397--1411, 2016.

\bibitem{sun2017emm}
Y.~Sun, S.~Zhou, and J.~Xu, ``Emm: Energy-aware mobility management for mobile
  edge computing in ultra dense networks,'' \emph{IEEE Journal on Selected
  Areas in Communications}, vol.~35, no.~11, pp. 2637--2646, 2017.

\bibitem{zhou2019energy}
Z.~Zhou, J.~Feng, Z.~Chang, and X.~Shen, ``Energy-efficient edge computing
  service provisioning for vehicular networks: A consensus admm approach,''
  \emph{IEEE Transactions on Vehicular Technology}, vol.~68, no.~5, pp.
  5087--5099, 2019.

\bibitem{mao2017stochastic}
Y.~Mao, J.~Zhang, S.~Song, and K.~B. Letaief, ``Stochastic joint radio and
  computational resource management for multi-user mobile-edge computing
  systems,'' \emph{IEEE Transactions on Wireless Communications}, vol.~16,
  no.~9, pp. 5994--6009, 2017.

\bibitem{yang2019joint}
L.~Yang, B.~Liu, J.~Cao, Y.~Sahni, and Z.~Wang, ``Joint computation
  partitioning and resource allocation for latency sensitive applications in
  mobile edge clouds,'' \emph{IEEE Transactions on Services Computing}, 2019.

\bibitem{wang2016mobile}
Y.~Wang, M.~Sheng, X.~Wang, L.~Wang, and J.~Li, ``Mobile-edge computing:
  Partial computation offloading using dynamic voltage scaling,'' \emph{IEEE
  Transactions on Communications}, vol.~64, no.~10, pp. 4268--4282, 2016.

\bibitem{mao2016dynamic}
Y.~Mao, J.~Zhang, and K.~B. Letaief, ``Dynamic computation offloading for
  mobile-edge computing with energy harvesting devices,'' \emph{IEEE Journal on
  Selected Areas in Communications}, vol.~34, no.~12, pp. 3590--3605, 2016.

\bibitem{wang2017joint}
F.~Wang, J.~Xu, X.~Wang, and S.~Cui, ``Joint offloading and computing
  optimization in wireless powered mobile-edge computing systems,'' \emph{IEEE
  Transactions on Wireless Communications}, vol.~17, no.~3, pp. 1784--1797,
  2017.

\bibitem{xu2017online}
J.~Xu, L.~Chen, and S.~Ren, ``Online learning for offloading and autoscaling in
  energy harvesting mobile edge computing,'' \emph{IEEE Transactions on
  Cognitive Communications and Networking}, vol.~3, no.~3, pp. 361--373, 2017.

\bibitem{poularakis2020service}
K.~Poularakis, J.~Llorca, A.~M. Tulino, I.~Taylor, and L.~Tassiulas, ``Service
  placement and request routing in mec networks with storage, computation, and
  communication constraints,'' \emph{IEEE/ACM Transactions on Networking},
  2020.

\bibitem{xu2018joint}
J.~Xu, L.~Chen, and P.~Zhou, ``Joint service caching and task offloading for
  mobile edge computing in dense networks,'' in \emph{IEEE INFOCOM}, 2018.

\bibitem{tang2018enabling}
M.~Tang, L.~Gao, and J.~Huang, ``Enabling edge cooperation in tactile internet
  via 3c resource sharing,'' \emph{IEEE Journal on Selected Areas in
  Communications}, vol.~36, no.~11, pp. 2444--2454, 2018.

\bibitem{chen2015efficient}
X.~Chen, L.~Jiao, W.~Li, and X.~Fu, ``Efficient multi-user computation
  offloading for mobile-edge cloud computing,'' \emph{IEEE/ACM Transactions on
  Networking}, vol.~24, no.~5, pp. 2795--2808, 2015.

\bibitem{liu2017price}
M.~Liu and Y.~Liu, ``Price-based distributed offloading for mobile-edge
  computing with computation capacity constraints,'' \emph{IEEE Wireless
  Communications Letters}, vol.~7, no.~3, pp. 420--423, 2017.

\bibitem{xiong2019joint}
Z.~Xiong, S.~Feng, D.~Niyato, P.~Wang, A.~Leshem, and Z.~Han, ``Joint sponsored
  and edge caching content service market: A game-theoretic approach,''
  \emph{IEEE Transactions on Wireless Communications}, vol.~18, no.~2, pp.
  1166--1181, 2019.

\bibitem{heckman1974shadow}
J.~Heckman \emph{et~al.}, ``Shadow prices, market wages, and labor supply,''
  \emph{Econometrica}, vol.~42, no.~4, pp. 679--694, 1974.

\bibitem{report}
\emph{Online Technical Report at arXiv},
  \url{https://arxiv.org/abs/2003.04481}.

\bibitem{neely2010stochastic}
M.~J. Neely, ``Stochastic network optimization with application to
  communication and queueing systems,'' \emph{Synthesis Lectures on
  Communication Networks}, vol.~3, no.~1, pp. 1--211, 2010.

\bibitem{mahdavi2012trading}
M.~Mahdavi, R.~Jin, and T.~Yang, ``Trading regret for efficiency: online convex
  optimization with long term constraints,'' \emph{Journal of Machine Learning
  Research}, vol.~13, no. Sep, pp. 2503--2528, 2012.

\bibitem{liakopoulos2019cautious}
N.~Liakopoulos, A.~Destounis, G.~Paschos, T.~Spyropoulos, and P.~Mertikopoulos,
  ``Cautious regret minimization: Online optimization with long-term budget
  constraints,'' in \emph{International Conference on Machine Learning}, 2019,
  pp. 3944--3952.

\bibitem{arthur1994inductive}
W.~B. Arthur, ``Inductive reasoning and bounded rationality,'' \emph{The
  American economic review}, vol.~84, no.~2, pp. 406--411, 1994.

\end{thebibliography}
\bibliographystyle{IEEEtran}

\vspace{-30pt}
\begin{IEEEbiography}[{\includegraphics[width=1in,height=1.25in,clip,keepaspectratio]{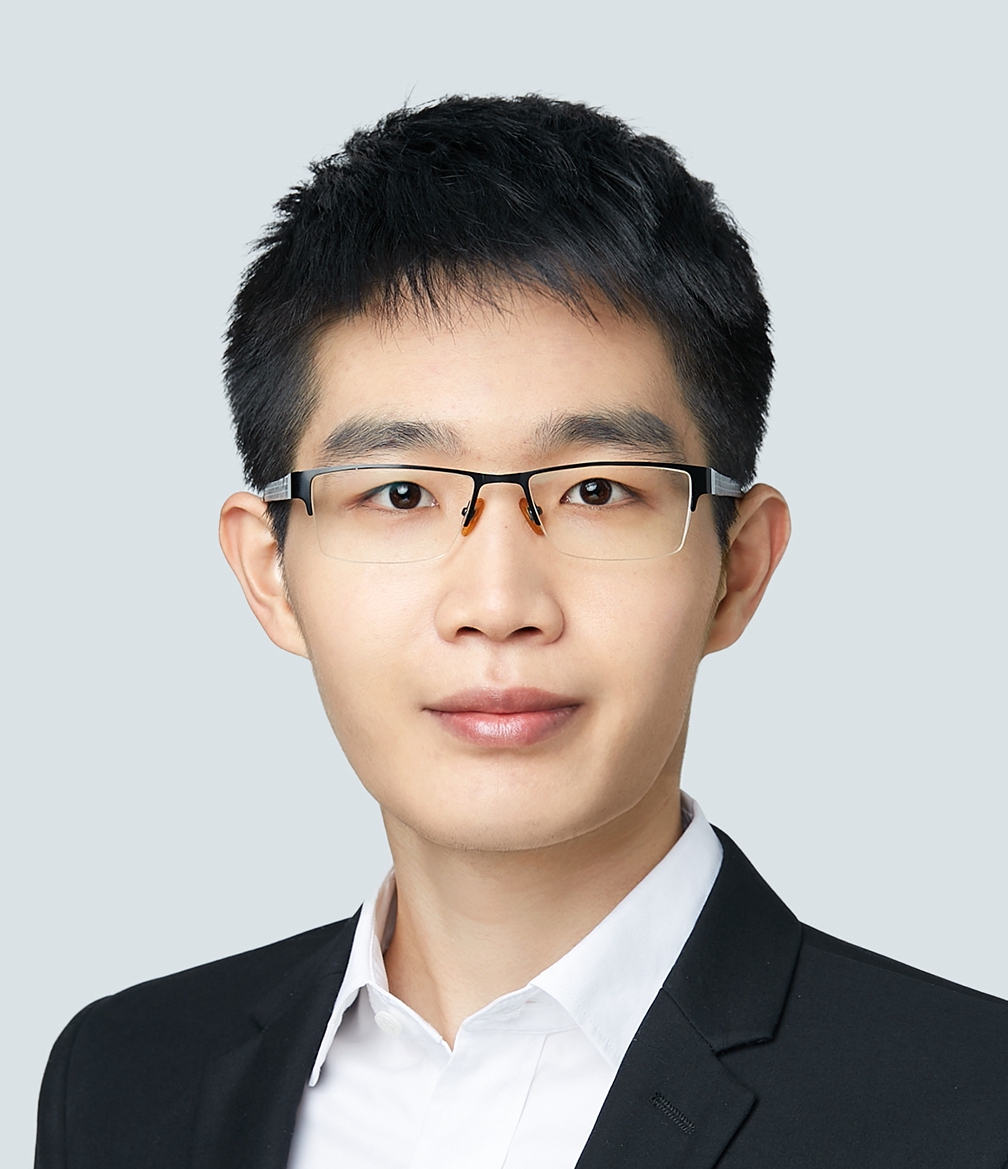}	}]{Zhiyuan Wang}
	is currently a Post-Doctoral Fellow in Department of Computer Science and Engineering, The Chinese University of Hong Kong.
	He received his Ph.D. degree in Information Engineering, from The Chinese University of Hong Kong, in 2019.
	He received the B.S. degree in Information Engineering, from Southeast University, Nanjing, in 2016.
	His research interests include the field of network science and game theory, with current emphasis on mean field analysis and edge computing.
\end{IEEEbiography}

\vspace{-30pt}
\begin{IEEEbiography}[{\includegraphics[width=1in,height=1.25in,clip,keepaspectratio]{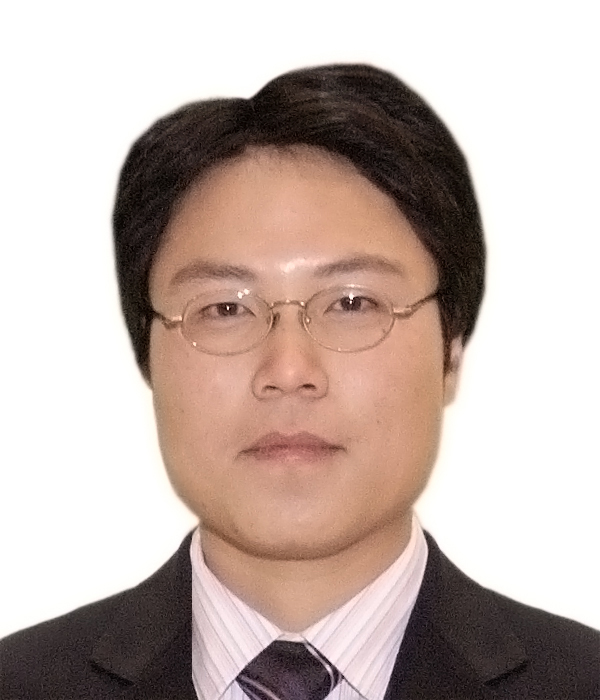}	}]	{Lin Gao}
	(S'08-M'10-SM'16) is an Associate Professor with the School of Electronic and Information Engineering, Harbin Institute of Technology, 	Shenzhen, China. He received the Ph.D. degree in Electronic Engineering from Shanghai Jiao Tong University in 2010. His main research
	interests are in the area of network economics 	and games, with applications in wireless communications and networking.
	He is the recipient of the 11th IEEE ComSoc AsiaPacific Outstanding Young Researcher Award in 2016.
\end{IEEEbiography}

\vspace{-30pt}
\begin{IEEEbiography}[{\includegraphics[width=1in,height=1.25in,clip,keepaspectratio]{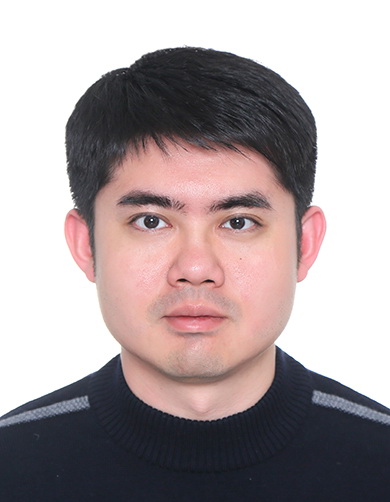}	}]	{Tong Wang}
(M'15) received the B.E. degree in electrical engineering and automation from
Beihang University, China, in 2006, and the M.S. degree in communications engineering and the Ph.D. degree in electronic engineering from The University of York, U.K., in 2008 and 2012, respectively.
From 2012 to 2015, he was a Research Associate with the Institute for Theoretical Information Technology, RWTH Aachen University, Aachen, Germany.
From 2014 to 2015, he was a Research Fellow of the Alexander von Humboldt Foundation.
He is an Assistant Professor in Department of Electronic and Information Engineering, Harbin Institute of Technology, Shenzhen, China. His research interests include sensor networks, cooperative communications, adaptive filtering, and resource optimization.
\end{IEEEbiography}

\vspace{-30pt}
\begin{IEEEbiography}[{\includegraphics[width=1in,height=1.25in,clip,keepaspectratio]{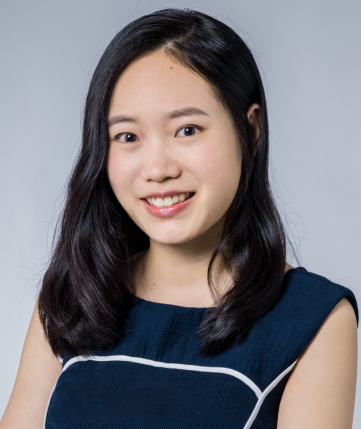}	}]	{Jingjing Luo}
 received the B.S. and Ph.D. degrees from the Department of Electronics and Information Engineering, Huazhong University of Science and Technology, Wuhan, China, in 2010 and 2015, respectively. Since 2016, she has been a Post-Doctoral Fellow with the Department of Information Engineering, Chinese University of
Hong Kong. Her current research interest includes scheduling and performance analysis in data center networks and content-centric networks.
\end{IEEEbiography}



\newpage
\appendices

\section{}\label{Appendix: MU offline}
\begin{proof}[\textbf{Proof of Theorem \ref{Theorem: MU KKT}}]
We prove this theorem based on the KKT conditions of Problem \ref{Problem: Reformulated JCOP}.
In this process, we actually prove Proposition \ref{Proposition: lambda} and Lemmas \ref{Lemma: primal-dual I}$\sim$\ref{Lemma: primal-dual IV}.
For notation clarity, we express Problem \ref{Problem: Reformulated JCOP} as follows:
\begin{subequations}
	\begin{align}
	\max &\quad \sum_{t=1}^{T}
				\Big[U_{t}(x_{t}) - \enCPT_{t}\big((x_{t}-z_{t})\cycle_{t}\big) - p_{t}z_{t} \cycle_{t}\Big] - \adfee s \\
	\textit{s.t.} 	
	&\quad 0\le z_{t} \le x_{t},\ \forall t\in\mathcal{T}, \label{Proof Equ: Reformulation Constraint r} \\
	&\quad s\ge \sum_{t=1}^{T}\left(\data_{t} x_{t} + \indata_{t} z_{t}\right) -\dcap. \label{Proof Equ: Reformulation Constraint s 2} \\
	&\quad s\ge 0, 														\label{Proof Equ: Reformulation Constraint s 1}	\\
	&\quad x_{t}\in[0,1],\ \forall t\in\mathcal{T},\\
	&\quad z_{t}\in[0,1],\ \forall t\in\mathcal{T},\label{Proof Equ: Reformulated JCOP integer}\\
	\textit{var.}	&\quad \bm{x},\bm{z},s.
	\end{align}
\end{subequations}
We let $\mu_t$ denote the Lagrangian multiplier associated with the constraint (\ref{Proof Equ: Reformulation Constraint r}).
We denote $\lambda$ and $\Lambda$ as the Lagrangian multipliers associated with the constraints (\ref{Proof Equ: Reformulation Constraint s 2}) and (\ref{Proof Equ: Reformulation Constraint s 1}), respectively.
Moreover, we let $\xi_{t}$ and $\psi_{t}$ denote the Lagrangian multipliers associated with the constraints $x_{t}\le1$ and $z_{t}\le1$ for any $t\in\mathcal{T}$, respectively.
Accordingly, the Lagrangian of Problem \ref{Problem: Reformulated JCOP} is given by
\begin{equation}
\begin{aligned}
& L(\bm{x},\bm{z},s,\lambda,\Lambda,\bm{\mu}) \\
=& \sum_{t=1}^{T}
\Big[U_{t}(x_{t}) - \enCPT_{t}\big((x_{t}-z_{t})\cycle_{t}\big) - p_{t}z_{t} \cycle_{t}\Big] - \adfee s \\
& +\lambda\left[ s+\dcap-\sum_{t=1}^{T}\left(\data_{t} x_{t}+ \indata_{t} z_{t}\right) \right] + \Lambda s ,\\
& + \sum_{t=1}^{T} \big[ \mu_t(x_t-z_t) + \xi_t(1-x_{t}) +  \psi_t(1-z_t) \big].
\end{aligned}
\end{equation}

We first express all the KKT conditions that are related to the variable $s$ as follows:
\begin{subequations}\label{Proof Equ: KKT s}
\begin{align}
	\frac{\partial L}{\partial s}=\lambda+\Lambda-\adfee&=0, \\
	\lambda\left[ s+\dcap-\sum_{t=1}^{T}\left(\data_{t} x_{t}+ \indata_{t} z_{t}\right) \right]&=0,\\
	\Lambda s&=0,\\
	\lambda,\Lambda&\ge0.
\end{align}
\end{subequations}

Note that equations (\ref{Proof Equ: KKT s}) imply $0\le\lambda\le\adfee$, which proves Proposition \ref{Proposition: lambda}.
Furthermore,  we express all the KKT conditions that are related to the variables $x_t$ and $z_t$ as follows:
\begin{subequations}\label{Proof Equ: KKT x z}
\begin{align}
\val_{t}u'_{t}(x_{t}) - \cost_{t}\cycle_{t}e'_{t}\big((x_t-z_t)\cycle_t\big) - \lambda\data_{t} + \mu_t -\xi_t &=0,\\
\cost_{t}\cycle_{t}e'_{t}\big((x_t-z_t)\cycle_t\big) - p_{t}\cycle_{t} - \lambda\indata_{t} - \mu_t -\psi_t&=0,\\
\mu_{t}(x_{t}-z_{t})&=0,\\
\xi_{t}(1-x_{t})&=0,\\
\psi_{t}(1-z_{t})&=0,\\
\mu_t,\xi_t,\psi_t&\ge0.
\end{align}
\end{subequations}
Next we derive the optimal primal-dual solutions based on (\ref{Proof Equ: KKT x z}) in the four cases discussed in Lemmas \ref{Lemma: primal-dual I}$\sim$\ref{Lemma: primal-dual IV}.

According to the definition of set $\Omega_{t}^{\text{I}}(\lambda)$ in (\ref{Equ: set I}), the case of $(\cost_{t},\val_{t})\in\Omega_{t}^{\text{I}}(\lambda^*)$ indicates the following inequalities
\begin{subequations}
	\begin{align}
	\val_{t}u'_{t}(1)-\cost_{t}\cycle_{t}e'_{t}(\cycle_{t})-\lambda^*\data_{t} &> 0,\\
	\cost_{t}\cycle_{t}e'_{t}(\cycle_{t}) - p_{t}\cycle_{t}-\lambda^*\indata_{t}&<0,
	\end{align}
\end{subequations}
which imply $x_{t}^*=1$, $z_{t}^*=0$, $\mu_t^*=0$, $\xi_t^*=0$, and $\psi_t^*=0$.
This completes the proof of Lemma \ref{Lemma: primal-dual I}.

According to the definition of set $\Omega_{t}^{\text{II}}(\lambda)$ in (\ref{Equ: set II}), the case of $(\cost_{t},\val_{t})\in\Omega_{t}^{\text{II}}(\lambda^*)$ indicates the following inequalities
\begin{subequations}
\begin{align}
\val_{t}u'_{t}(1)-\cost_{t}\cycle_{t}e'_{t}(\cycle_{t})-\lambda^*\data_{t} &> 0,\\
\cost_{t}\cycle_{t}e'_{t}(\cycle_{t}) - p_{t}\cycle_{t}-\lambda^*\indata_{t}&\ge0,
\end{align}
\end{subequations}
which imply $x_{t}^*=1$ and $z_{t}^*$ satisfies
\begin{equation}
\cost_{t}\cycle_{t}e'_{t}\big((1-z_t^*)\cycle_t\big) - p_{t}\cycle_{t} - \lambda^*\indata_{t} =0.
\end{equation}
Hence we have
\begin{equation}\label{Proof Equ: z II}
\textstyle
z_{t}^{\text{II}}(\lambda)
\triangleq
1 - {\encpt'^{-1}_{t}\left( \frac{p_{t}\cycle_{t}+\indata_{t}\lambda}{\cost_{t}\cycle_{t}} \right)}\Big/{\cycle_{t}},
\end{equation}
which completes the proof of Lemma \ref{Lemma: primal-dual II}

According to the definition of set $\Omega_{t}^{\text{III}}(\lambda)$ in (\ref{Equ: set III}), the case of $(\cost_{t},\val_{t})\in\Omega_{t}^{\text{III}}(\lambda^*)$ indicates the following inequalities
\begin{subequations}
	\begin{align}
	\val_{t}u'_{t}(1) - \cost_{t}\cycle_{t}e'_{t}(\cycle_{t})-\data_{t}\lambda^* & \le 0, \\
	\textstyle u'^{-1}_{t}\left( \frac{p_{t}\cycle_{t}+(\data_{t}+\indata_{t})\lambda}{\val_{t}} \right) &> \textstyle\encpt'^{-1}_{t}\left( \textstyle\frac{p_{t}\cycle_{t}+\indata_{t}\lambda}{\cost_{t}\cycle_{t}} \right)\Big/\cycle_{t},
	\end{align}
\end{subequations}
which imply that the variables $x^*_t$ and $z^*_t$ satisfy the following conditions:
\begin{subequations}
	\begin{align}
	\val_{t}u'_{t}(x^*_{t}) - \cost_{t}\cycle_{t}e'_{t}\big((x^*_t-z^*_t)\cycle_t\big) - \lambda^*\data_{t}  &=0,\\
	\cost_{t}\cycle_{t}e'_{t}\big((x^*_t-z^*_t)\cycle_t\big) - p_{t}\cycle_{t} - \lambda^*\indata_{t} &=0,
	\end{align}
\end{subequations}
and $\xi^*_{t}=\psi^*_{t}=\mu^*_{t}=0$.
Hence we have
\begin{subequations}
	\begin{align}
	x_{t}^{\text{III}}(\lambda)
	&\textstyle=  u'^{-1}_{t}\left( \frac{p_{t}\cycle_{t}+(\data_{t}+\indata_{t})\lambda}{\val_{t}} \right),\\
	z_{t}^{\text{III}}(\lambda)
	&\textstyle= x_{t}^{\text{III}}(\lambda)-\encpt'^{-1}_{t}\left( \frac{p_{t}\cycle_{t}+\indata_{t}\lambda}{\cost_{t}\cycle_{t}} \right)\Big/\cycle_{t},
	\end{align}	
\end{subequations}	
which completes the proof of Lemma \ref{Lemma: primal-dual III}.

According to the definition of set $\Omega_{t}^{\text{IV}}(\lambda)$ in (\ref{Equ: set IV}), the case of $(\cost_{t},\val_{t})\in\Omega_{t}^{\text{IV}}(\lambda^*)$ indicates the following inequalities
\begin{subequations}
	\begin{align}
	\val_{t}u'_{t}(1) - \cost_{t}\cycle_{t}e'_{t}(\cycle_{t})-\data_{t}\lambda^* & \le 0, \\
	\textstyle u'^{-1}_{t}\left( \frac{p_{t}\cycle_{t}+(\data_{t}+\indata_{t})\lambda}{\val_{t}} \right) &\le \textstyle\encpt'^{-1}_{t}\left( \textstyle\frac{p_{t}\cycle_{t}+\indata_{t}\lambda}{\cost_{t}\cycle_{t}} \right)\Big/\cycle_{t},
	\end{align}
\end{subequations}
which imply that $z_t^*=0$ and  $x^*_t$ satisfies
\begin{equation}
\textstyle
\val_{t}u'_{t}\left(x_{t}^*\right) -\cost\cycle_{t} \encpt'_{t}\left(x_{t}^{*}\cycle_{t}\right) =\data_{t}\lambda^* .
\end{equation}
This completes the proof of Lemma \ref{Lemma: primal-dual IV}.

The remaining part of proving this theorem is to derive the optimal Lagrangian multiplier $\lambda^*$.

Based on the above discussions and the definition (\ref{Equ: A_lambda}), we have
\begin{equation}
	\sum_{t=1}^{T} \data_{t} x_{t}^*(\lambda)+ \indata_{t} z_{t}^*(\lambda)= A(\lambda).
\end{equation}
Hence the optimal Lagrangian multiplier $\lambda^*$ has two possibilities:
\begin{itemize}
	\item Based on (\ref{Proof Equ: KKT s}), the case of $A(\pi)\ge\dcap$ indicates that $\lambda^*=\adfee$, $s^*=A(\pi)-\dcap\ge0$.
	\item Based on (\ref{Proof Equ: KKT s}), the case of $A(\pi)<\dcap$ indicates that $s^*=0$ and $\lambda^*=\lambda^{\dag}$.
\end{itemize}
This completes the proof of this theorem.
\end{proof}

\section{}\label{Appendix: MU online}
\begin{proof}[\textbf{Proof of Theorem \ref{Theorem: User Online}}]
	We prove this theorem by deriving a upper bound on the MU's payoff gap.
	For notation simplicity, we let $\lambda^{\star}$ denote the shadow price associate with the online decisions $(\bm{\hat{x}},\bm{\hat{z}})$.
	Then we have
	\begin{equation}\label{Proof Equ: regret}
		\begin{aligned}
			& S(\bm{x^*},\bm{z^*})-S(\bm{\hat{x}},\bm{\hat{z}}) \\
			=& \sum_{t=1}^{T}L_{t}(x^*_{t},z^*_{t},\lambda^*)  - \sum_{t=1}^{T}L_{t}(\hat{x}_{t},\hat{z}_{t},\hat{\lambda}^{\star}) \\
			=& \sum_{t=1}^{T}\left[L_{t}(x^*_{t},z^*_{t},\lambda^*)
			- L_{t}(x^*_{t},z^*_{t},\hat{\lambda}_{t}) \right] \\
			&\quad +\sum_{t=1}^{T}\left[L_{t}(x^*_{t},z^*_{t},\hat{\lambda}_{t})
			- L_{t}(\hat{x}_{t},\hat{z}_{t},\hat{\lambda}^{\star}) \right] ,\\
			\le& \sum_{t=1}^{T}\left[L_{t}(x^*_{t},z^*_{t},\lambda^*)
			- L_{t}(x^*_{t},z^*_{t},\hat{\lambda}_{t}) \right] \\
			&\quad +\sum_{t=1}^{T}\left[L_{t}(\hat{x}_{t},\hat{z}_{t},\hat{\lambda}_{t})
			- L_{t}(\hat{x}_{t},\hat{z}_{t},\hat{\lambda}^{\star}) \right].
		\end{aligned}
	\end{equation}
	Furthermore, for notation simplicity, we define $G(\bm{\hat{\lambda}})$ and $H(\bm{\hat{\lambda}})$ as follows:
	\begin{equation}
		G(\bm{\hat{\lambda}}) \triangleq
		\sum_{t=1}^{T}
		\left[L_{t}(x^*_{t},z^*_{t},\lambda^*) - L_{t}(x^*_{t},z^*_{t},\hat{\lambda}_{t}) \right],
	\end{equation}
	
	\begin{equation}
		H(\bm{\hat{\lambda}}) \triangleq
		\sum_{t=1}^{T}
		\left[L_{t}(\hat{x}_{t},\hat{z}_{t},\hat{\lambda}_{t}) - L_{t}(\hat{x}_{t},\hat{z}_{t},\hat{\lambda}^{\star}) \right],
	\end{equation}
	and we have $S(\bm{x^*},\bm{z^*})-S(\bm{\hat{x}},\bm{\hat{z}})= G(\bm{\hat{\lambda}})+H(\bm{\hat{\lambda}})$.
	Moreover, combining the following Lemma \ref{Lemma: H} and Lemma \ref{Lemma: G} proves this theorem.	
\end{proof}

\begin{lemma}\label{Lemma: H}
	The shadow price sequence $\bm{\hat{\lambda}}=\{\hat{\lambda}_{t},t\in\mathcal{T}\}$ generated by strategy $\mathcal{A}$ in Algorithm \ref{Algorithm: MU} satisfies
	\begin{equation}
	H(\bm{\hat{\lambda}})
	\le \frac{\adfee^2}{2} \frac{1}{\eta_{T}} +\Xi^2\sum_{t=1}^{T}\frac{\eta_{t}}{2}.
	\end{equation}
\end{lemma}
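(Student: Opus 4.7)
The plan is to reduce the bound on $H(\bm{\hat{\lambda}})$ to a standard online gradient descent regret analysis on the scalar dual variable. Since $\lambda$ enters $L_t$ only through the linear penalty $-\lambda(\tilde{h}_t(x_t,z_t) - \dcap/T)$, the primal terms cancel and we can rewrite
\begin{equation}
H(\bm{\hat{\lambda}}) = \sum_{t=1}^{T}(\hat{\lambda}^{\star} - \hat{\lambda}_t)\, g_t, \qquad g_t \triangleq \tilde{h}_t(\hat{x}_t,\hat{z}_t) - \dcap/T.
\end{equation}
Observe also that $\hat{\lambda}^{\star}\in[0,\adfee]$ by the same argument that yields Proposition \ref{Proposition: lambda} (i.e.\ the dual feasibility for the three-part tariff), and that $\hat{\lambda}_t\in[0,\adfee]$ by construction of the projection in (\ref{Equ: MU update lambda}).

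The key step is a one-slot drift inequality. Using the update rule (\ref{Equ: MU update lambda}) and the non-expansiveness of $\mathcal{P}_{[0,\adfee]}(\cdot)$ about $\hat{\lambda}^{\star}\in[0,\adfee]$, I would square and expand:
\begin{equation}
(\hat{\lambda}_{t+1}-\hat{\lambda}^{\star})^2 \le (\hat{\lambda}_t + \eta_t g_t - \hat{\lambda}^{\star})^2 = (\hat{\lambda}_t-\hat{\lambda}^{\star})^2 + 2\eta_t g_t(\hat{\lambda}_t-\hat{\lambda}^{\star}) + \eta_t^2 g_t^2.
\end{equation}
Rearranging and dividing by $2\eta_t$ produces the per-slot regret bound
\begin{equation}
(\hat{\lambda}^{\star}-\hat{\lambda}_t)\, g_t \le \frac{(\hat{\lambda}_t-\hat{\lambda}^{\star})^2 - (\hat{\lambda}_{t+1}-\hat{\lambda}^{\star})^2}{2\eta_t} + \frac{\eta_t}{2}g_t^2.
\end{equation}

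Summing over $t=1,\dots,T$ and applying Abel (summation-by-parts) rearrangement to the first sum, together with the uniform bound $(\hat{\lambda}_t-\hat{\lambda}^{\star})^2\le\adfee^2$ (since both lie in $[0,\adfee]$), the telescoping-like terms collapse to at most $\adfee^2/(2\eta_T)$, yielding the first piece of the claimed bound. For the second sum, the definition of $\tilde{h}_t$ in (\ref{Equ: tilde g}) together with $\hat{x}_t,\hat{z}_t\in[0,1]$ gives $\tilde{h}_t(\hat{x}_t,\hat{z}_t)\in[0,\data_t+\indata_t]\subseteq[0,\bar{\data}+\bar{\indata}]$, so $|g_t|\le\max(|q-\bar{\data}-\bar{\indata}|,q)=\Xi$ by Definition \ref{Definition: Xi}. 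Thus $\sum_t \frac{\eta_t}{2}g_t^2\le \Xi^2\sum_t \frac{\eta_t}{2}$, which finishes the proof.

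The main obstacle is the Abel-summation step under a general (potentially non-constant) step-size sequence $\bm{\eta}$: one needs the coefficients $1/\eta_t$ to behave monotonically enough so that the telescoping remainders can be absorbed into $\adfee^2/(2\eta_T)$. For a non-increasing $\eta_t$ (which is the usual assumption and is consistent with the choice $\eta_t=\adfee/(\Xi\sqrt{T})$ later used in Theorem \ref{Theorem: User Online}), the differences $\tfrac{1}{2\eta_t}-\tfrac{1}{2\eta_{t-1}}\ge 0$ and each is multiplied by a quantity at most $\adfee^2$, so the rearrangement telescopes cleanly. The other subtle point, namely that the comparator $\hat{\lambda}^{\star}$ lies in the projection set, is a direct consequence of (\ref{Equ: U=Lt}) and Proposition \ref{Proposition: lambda}, so it poses no real difficulty.
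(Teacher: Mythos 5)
Your proof is correct and follows essentially the same route as the paper's: the identical linear rewriting of $H(\bm{\hat{\lambda}})$ as $\sum_{t}(\hat{\lambda}^{\star}-\hat{\lambda}_{t})g_{t}$, the same projection non-expansiveness and per-slot drift inequality, and the same Abel/telescoping summation bounded by $\adfee^2/(2\eta_{T})$ plus $\Xi^2\sum_{t}\eta_{t}/2$. The only difference is that you make explicit two points the paper leaves implicit, namely the non-increasing step-size requirement needed for the telescoping coefficients $\tfrac{1}{\eta_{t}}-\tfrac{1}{\eta_{t-1}}$ to be nonnegative, and the verification that $|g_{t}|\le\Xi$.
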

\begin{proof}[\textbf{Proof Lemma \ref{Lemma: H}}]
	We prove this lemma by showing an upper bound of the defined $H(\bm{\hat{\lambda}})$.
	According to the definition of $L(\cdot)$, we have
	\begin{equation}\label{Proof Equ: convexity}
		L_{t}(\hat{x}_{t},\hat{z}_{t},\hat{\lambda}_{t})-L_{t}(\hat{x}_{t},\hat{z}_{t},\hat{\lambda}^{\star})
		= \left(\textstyle \frac{\dcap}{T}-\tilde{h}_{t}(\hat{x}_{t},\hat{z}_{t}) \right)  \left( \hat{\lambda}_{t}-\hat{\lambda}^{\star} \right).
	\end{equation}
	
	Furthermore, the projection operation in Algorithm \ref{Algorithm: MU} indicates the following inequality
	\begin{equation}
		\begin{aligned}
			\left( \hat{\lambda}_{t+1}-\hat{\lambda}^{\star} \right)^2
			=&\left[\mathcal{P}_{[0,\adfee]}\left(\hat{\lambda}_{t} - \eta_{t}{\textstyle\left(\frac{\dcap}{T}-\tilde{h}_{t}(\hat{x}_{t},\hat{z}_{t})\right)}\right)-\hat{\lambda}^{\star}  \right]^2 \\
			\le& \left[ \hat{\lambda}_{t} - \eta_{t}{\textstyle\left(\frac{\dcap}{T}-\tilde{h}_{t}(\hat{x}_{t},\hat{z}_{t})\right)}-\hat{\lambda}^{\star} \right]^2 \\
			=&
			\left(\hat{\lambda}_{t}-\hat{\lambda}^{\star}\right)^2
			+ \eta_{t}^2 {\textstyle\left(\frac{\dcap}{T}-\tilde{h}_{t}(\hat{x}_{t},\hat{z}_{t})\right)}^2 \\
			&-2\eta_{t}\big( \hat{\lambda}_{t}-\hat{\lambda}^{\star} \big)\cdot{\textstyle\left(\frac{\dcap}{T}-\tilde{h}_{t}(\hat{x}_{t},\hat{z}_{t})\right)},
		\end{aligned}
	\end{equation}
	which implies that
	\begin{equation}\label{Proof Equ: projection}
		\begin{aligned}
			\big( \hat{\lambda}_{t}-\hat{\lambda}^{\star} \big)\cdot {\textstyle\left(\frac{\dcap}{T}-\tilde{h}_{t}(\hat{x}_{t},\hat{z}_{t})\right)}
			\le&\textstyle \frac{  (\hat{\lambda}_{t}-\hat{\lambda}^{\star}) ^2
				- (\hat{\lambda}_{t+1}-\hat{\lambda}^{\star}) ^2 }{2\eta_{t}}\\
			&\textstyle + \frac{\eta_{t}}{2} {\left(\frac{\dcap}{T}-\tilde{h}_{t}(\hat{x}_{t},\hat{z}_{t})\right)}^2 .
		\end{aligned}	
	\end{equation}
	Combining (\ref{Proof Equ: convexity}) and (\ref{Proof Equ: projection}), we obtain
	\begin{equation}\label{Proof Equ: combine}
		\begin{aligned}
			L_{t}(\hat{x}_{t},\hat{z}_{t},\hat{\lambda}_{t})-L_{t}(\hat{x}_{t},\hat{z}_{t},\hat{\lambda}^{\star})
			\le&\textstyle\frac{  \left(\hat{\lambda}_{t}-\hat{\lambda}^{\star}\right) ^2
				- \left(\hat{\lambda}_{t+1}-\hat{\lambda}^{\star}\right) ^2 }{2\eta_{t}}\\
			& + \textstyle\frac{1}{2}\eta_{t} {\textstyle\left(\frac{\dcap}{T}-\tilde{h}_{t}(\hat{x}_{t},\hat{z}_{t})\right)}^2 .
		\end{aligned}	
	\end{equation}
	Moreover, summing (\ref{Proof Equ: combine}) from $t = 1$ to $T$ leads to
	\begin{equation}
		\begin{aligned}
			&\sum_{t=1}^{T}L_{t}(\hat{x}_{t},\hat{z}_{t},\hat{\lambda}_{t})-L_{t}(\hat{x}_{t},\hat{z}_{t},\hat{\lambda}^{\star})\\
			\le&\sum_{t=1}^{T} \frac{  (\hat{\lambda}_{t}-\hat{\lambda}^{\star}) ^2
				- (\hat{\lambda}_{t+1}-\hat{\lambda}^{\star})^2 }{2\eta_{t}}
			+ \frac{\eta_{t}}{2} {\left(\frac{\dcap}{T}-\tilde{h}_{t}(\hat{x}_{t},\hat{z}_{t})\right)}^2\\
			\le& \sum_{t=1}^{T} \frac{ (\hat{\lambda}_{t}-\hat{\lambda}^{\star})^2}{2}\left(\frac{1}{\eta_{t}}-\frac{1}{\eta_{t-1}}\right)
			+\sum_{t=1}^{T} \frac{\eta_{t}}{2}{\left(\frac{\dcap}{T}-\tilde{h}_{t}(\hat{x}_{t},\hat{z}_{t})\right)}^2\\
			\le& \frac{\adfee^2}{2}\sum_{t=1}^{T}\left(\frac{1}{\eta_{t}}-\frac{1}{\eta_{t-1}}\right)  +\Xi^2\sum_{t=1}^{T}\frac{\eta_{t}}{2}\\
			\le& \frac{\adfee^2}{2} \frac{1}{\eta_{T}} +\Xi^2\sum_{t=1}^{T}\frac{\eta_{t}}{2}	 ,
		\end{aligned}
	\end{equation}
	where $\Xi$ is defined in Definition \ref{Definition: Xi}.
	This completes the proof of this lemma.
\end{proof}

\begin{lemma}\label{Lemma: G}
	The shadow price sequence $\bm{\tilde{\lambda}}=\{\tilde{\lambda}_{t},t\in\mathcal{T}\}$ generated by strategy $\mathcal{A}$ in Algorithm \ref{Algorithm: MU} satisfies
	\begin{equation}
	G(\bm{\tilde{\lambda}})
	\le
	\Xi\Psi\sum_{t=1}^{T}\eta_{t}.
	\end{equation}
\end{lemma}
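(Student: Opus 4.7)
My plan is to expand $G(\bm{\hat{\lambda}})$ using the definition of the augmented Lagrangian in (\ref{Equ: augmented Lagrangian}). Since the primal part $\tilde{f}_{t}(x_t^*,z_t^*)$ is identical in both terms, only the shadow-price contribution survives, giving
\begin{equation*}
G(\bm{\hat{\lambda}}) = \sum_{t=1}^{T}(\hat{\lambda}_{t}-\lambda^*)\bigl(\tilde{h}_{t}(x_t^*,z_t^*)-q\bigr) = -\sum_{t=1}^{T}(\hat{\lambda}_{t}-\lambda^*)\,l_{t},
\end{equation*}
where $l_t = q-\tilde{h}_{t}(x_t^*,z_t^*)$ and $q=Q/T$. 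I would then split $l_t = \bar{l} + (l_t-\bar{l})$ to separate a ``mean'' piece from a ``fluctuation'' piece, writing
\begin{equation*}
G(\bm{\hat{\lambda}}) = -\bar{l}\sum_{t=1}^{T}(\hat{\lambda}_{t}-\lambda^*) \;-\; \sum_{t=1}^{T}(\hat{\lambda}_{t}-\lambda^*)(l_{t}-\bar{l}).
\end{equation*}

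The first step is to show the mean term $-\bar{l}\sum_{t}(\hat{\lambda}_{t}-\lambda^*)$ is non-positive by a case analysis driven by the offline KKT conditions (\ref{Proof Equ: KKT s}) in Appendix \ref{Appendix: MU offline}. If $\lambda^*=0$, then $A(\adfee)\le Q$ implies $\bar{l}\ge 0$, while $\hat{\lambda}_t\ge 0=\lambda^*$, so the term is $\le 0$. If $0<\lambda^*<\adfee$, complementary slackness forces $s^*=0$ and $\sum_t \tilde{h}_t(x_t^*,z_t^*)=Q$, hence $\bar{l}=0$ and the term vanishes. If $\lambda^*=\adfee$, then $s^*\ge 0$ gives $\bar{l}\le 0$ while $\hat{\lambda}_t\le\adfee=\lambda^*$, so again the term is $\le 0$. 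Hence $G(\bm{\hat{\lambda}})\le -\sum_{t=1}^{T}(\hat{\lambda}_{t}-\lambda^*)(l_{t}-\bar{l})$.

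The second step is a summation-by-parts (Abel) argument. Letting $R_t = \sum_{i=1}^{t}(l_i - \bar{l})$ so that $l_t-\bar{l}=R_t-R_{t-1}$ with $R_0=R_T=0$, I obtain
\begin{equation*}
-\sum_{t=1}^{T}(\hat{\lambda}_{t}-\lambda^*)(l_{t}-\bar{l}) = \sum_{t=1}^{T-1}(\hat{\lambda}_{t+1}-\hat{\lambda}_{t})\,R_t.
\end{equation*}
By Definition \ref{Definition: Phi}, $|R_t|=\psi_t\le\Psi$. The update rule (\ref{Equ: MU update lambda}) and the non-expansiveness of the projection give $|\hat{\lambda}_{t+1}-\hat{\lambda}_{t}|\le\eta_t|\tilde{h}_{t}(\hat{x}_t,\hat{z}_t)-q|$, and since $\tilde{h}_{t}(\hat{x}_t,\hat{z}_t)\in[0,\data_t+\indata_t]$ with $\data_t\le\bar{\data}$ and $\indata_t\le\bar{\indata}$, Definition \ref{Definition: Xi} yields $|\tilde{h}_{t}(\hat{x}_t,\hat{z}_t)-q|\le\Xi$. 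Combining these three bounds gives the desired inequality $G(\bm{\hat{\lambda}})\le\Xi\Psi\sum_{t=1}^{T}\eta_t$.

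The main obstacle I expect is the case analysis in the first step: showing $-\bar{l}\sum_t(\hat{\lambda}_t-\lambda^*)\le 0$ cleanly depends on matching the sign of $\bar{l}$ with the range of admissible $\hat{\lambda}_t-\lambda^*$, which in turn hinges on the three complementary-slackness regimes of $\lambda^*$ inherited from the offline KKT analysis. Once this is handled, the remaining Abel-summation bound is routine and parallels the telescoping argument used in the proof of Lemma \ref{Lemma: H}.
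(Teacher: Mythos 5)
Your proposal is correct, and it reaches the bound by a genuinely more direct route than the paper. The paper first substitutes the value of $\lambda^*$ to rewrite $G(\bm{\hat{\lambda}})=-\sum_{t}\hat{\lambda}_{t}l_{t}-\adfee\bigl[\sum_{t}(-l_{t})\bigr]^{+}$, then upper-bounds this by \emph{maximizing} over all shadow-price sequences whose increments obey $|\lambda_{t}-\lambda_{t+1}|\le\Xi\eta_{t}$, and evaluates that relaxed problem by weak duality with the hand-picked multiplier $\mu^{\star}_{t}=\bar{l}\,t-\sum_{k\le t}l_{k}$ (which is exactly $-R_{t}$ in your notation), so that $\hat{G}\le\sum_{t}|\mu^{\star}_{t}|\eta_{t}\Xi\le\Xi\Psi\sum_{t}\eta_{t}$. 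You instead keep the actual sequence $\bm{\hat{\lambda}}$, split $l_{t}=\bar{l}+(l_{t}-\bar{l})$, kill the mean term by a sign argument from the three complementary-slackness regimes of $\lambda^{*}$ (this is precisely the work the paper's identity $\sum_{t}[-l_{t}-(\bm{\Delta}^{\top}\bm{\mu}^{\star})_{t}]^{+}=[\sum_{t}-l_{t}]^{+}$ does implicitly), and finish with Abel summation, which is the primal shadow of the paper's dual-multiplier choice. The two proofs rest on the same two structural facts ($|R_{t}|=\psi_{t}\le\Psi$ and the projection-based increment bound $|\hat{\lambda}_{t+1}-\hat{\lambda}_{t}|\le\Xi\eta_{t}$), but yours avoids the auxiliary optimization and Lagrangian machinery entirely; the cost is that you must invoke the offline KKT conditions explicitly to pin down the sign of $\bar{l}$ relative to $\hat{\lambda}_{t}-\lambda^{*}$, whereas the paper's relaxation is agnostic to which regime $\lambda^{*}$ falls in. One small point worth recording in your case analysis: when $\lambda^{*}=\adfee$, the inequality $\bar{l}\le 0$ follows from complementary slackness ($\lambda^{*}>0$ forces $s^{*}=\sum_{t}\tilde{h}_{t}(x^{*}_{t},z^{*}_{t})-\dcap$, and $s^{*}\ge0$), exactly as you indicate, so the boundary case $A(\adfee)=\dcap$ is covered with $\bar{l}=0$.
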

\begin{proof}[\textbf{Proof of Lemma \ref{Lemma: G}}]
	We prove this lemma by showing an upper bound for $G(\bm{\hat{\lambda}})$.
	According to the definition of $L_{t}(\cdot)$, we have
	\begin{equation}
		\begin{aligned}
			G(\bm{\hat{\lambda}})
			&=  \sum_{t=1}^{T} L_{t}(x^*_{t},z^*_{t},\lambda^*) - L_{t}(x^*_{t},z^*_{t},\hat{\lambda}_{t})\\
			&=  \sum_{t=1}^{T} \left( \hat{\lambda}_{t} -\lambda^* \right)\left(  \tilde{h}_{t}(x^*_{t},z^*_{t}) - \frac{\dcap}{T} \right)\\
			&=  -\sum_{t=1}^{T} \hat{\lambda}_{t}  l_{t} - \lambda^{*}\sum_{t=1}^{T}-l_{t}\\
			&=  -\sum_{t=1}^{T} \hat{\lambda}_{t}  l_{t} - \adfee\left[ \sum_{t=1}^{T}-l_{t} \right]^+.
		\end{aligned}
	\end{equation}
	where $l_t$ is defined in Definition \ref{Definition: Phi}.
	
	Note that the shadow price sequence $\bm{\hat{\lambda}}$ generated by Algorithm \ref{Algorithm: MU} satisfies the following inequalities
	\begin{equation}\label{Proof Equ: lambda range}
		|\hat{\lambda}_{t}-\hat{\lambda}_{t+1}|\le \Xi\eta_{t},\quad\forall t\in\{1,2,...,T-1\}.
	\end{equation}

	Hence we can derive an upper bound for $G(\bm{\hat{\lambda}})$ by maximizing it over $\bm{\hat{\lambda}}=(\hat{\lambda}_{t},t\in\mathcal{T})$ under the inequality constraints (\ref{Proof Equ: lambda range}).
	That is, we have $G(\bm{\hat{\lambda}})\le\hat{G}$, where $\hat{G}$ is defined by
	\begin{equation}
		\begin{aligned}
			\hat{G}\triangleq \arg
			\max&\  -\sum_{t=1}^{T} {\lambda}_{t}  l_{t} - \adfee\left[ \sum_{t=1}^{T}-l_{t} \right]^+ \\
			\textit{s.t.}&\ |\lambda_{t}-\lambda_{t+1}|\le \Xi\eta_{t},\quad\forall t\in\{1,2,...,T-1\},\\
			\textit{var.}&\ \bm{\lambda}=\{\lambda_{t}, 1\le t\le T\}.
		\end{aligned}
	\end{equation}
	For notation simplicity, we define $\bm{\Delta}\in\mathbb{R}^{(T-1)\times T}$ as follows
	\begin{equation}
		\Delta_{i,j}=\left\{
		\begin{aligned}
			& 1,	&\textit{if } i=j,\\
			& -1,	&\textit{if } i+1=j,\\
			& 0,	&\textit{otherwise}.
		\end{aligned}
		\right.
	\end{equation}
	and introduce a set of auxiliary variables $\bm{s}\in\mathbb{R}^{T-1}$.	
	\begin{equation}
		\begin{aligned}
			\hat{G}\triangleq \arg
			\max&\  -\sum_{t=1}^{T} {\lambda}_{t}  l_{t} - \adfee\left[ \sum_{t=1}^{T}-l_{t} \right]^+ \\
			\textit{s.t.}&\ |s_{t}| \le \Xi\eta_{t},\quad\forall t\in\{1,2,...,T-1\},\\
			&\ \bm{s}=\bm{\Delta}\bm{\lambda},\\
			\textit{var.}&\ \bm{\lambda}=\{\lambda_{t},1\le t\le T\},\ \bm{s}=\{s_{t},1\le t\le T-1\}.
		\end{aligned}
	\end{equation}

	We express the Lagrangian as follows:
	\begin{equation}
		\begin{aligned}
			&\mathcal{L}(\bm{\lambda},\bm{s};\bm{\mu},\bm{\nu})\\
			=&-\sum_{t=1}^{T} {\lambda}_{t}l_{t} - \adfee\left[ \sum_{t=1}^{T}-l_{t} \right]^+
			+ \bm{\mu}^{\top}(\bm{s}-\bm{\Delta}\bm{\lambda}) \\
			&+ \sum_{t=1}^{T-1}\nu_{t}\left(\Xi\eta_{t}-|s_{t}|\right) \\
		\end{aligned}
	\end{equation}
	Moreover, the duality theory implies that
	\begin{equation}
		\hat{G} \le \min\limits_{\bm{\mu},\bm{\nu}}	\max\limits_{\bm{\lambda},\bm{s}} \mathcal{L}(\bm{\lambda},\bm{s};\bm{\mu},\bm{\nu}).
	\end{equation}
	We maximize $\mathcal{L}(\bm{\lambda},\bm{s};\bm{\mu},\bm{\nu})$ over $\bm{\lambda}$ and $\bm{s}$, and obtain
	\begin{equation}
		\begin{aligned}
			&\mathcal{L}(\bm{\lambda}^{*},\bm{s}^{*},\bm{\mu},\bm{\nu}) \\
			=& \left\{
			\begin{aligned}
				&\textstyle - \adfee\left[ \sum\limits_{t=1}^{T}-l_{t} \right]^+
				+ \adfee\sum\limits_{t=1}^{T}\left[-l_{t}-\left(\bm{\Delta}^{\top}\bm{\mu}\right)_{t}\right]^+\\
				& \qquad\qquad+ \sum\limits_{t=1}^{T-1}\nu_{t}\eta_{t}\Xi,		\qquad\qquad\textit{if } \nu_{t}\ge|\mu_{t}|,\ \forall t,\\
				& +\infty,		\qquad\qquad\qquad\qquad\qquad\qquad\textit{otherwise}.
			\end{aligned}
			\right.
		\end{aligned}	
	\end{equation}
	Then we minimize $\mathcal{L}(\bm{\lambda}^{*},\bm{s}^{*},\bm{\mu},\nu)$ over $\bm{\nu}$ and obtain
	\begin{equation}
		\begin{aligned}
			&\mathcal{L}(\bm{\lambda}^{*},\bm{s}^{*},\bm{\mu},\bm{\nu^{*}})\\
			=&\textstyle
			 - \adfee\left[ \sum\limits_{t=1}^{T}-l_{t} \right]^+
			+ \adfee\sum\limits_{t=1}^{T}\left[-l_{t}-\left(\bm{\Delta}^{\top}\bm{\mu}\right)_{t}\right]^+ \\
			&\textstyle + \sum\limits_{t=1}^{T-1}|\mu_{t}|\eta_{t}\Xi.
		\end{aligned}	
	\end{equation}
	
	Now we know that the following inequality holds
	\begin{equation}
		\hat{G}\le \mathcal{L}(\bm{\lambda}^{*},\bm{s}^{*},\bm{\mu},\nu^{*}),\quad\forall \bm{\mu}.
	\end{equation}
	Hence we define $\bm{\mu^{\star}}\in\mathbb{R}^{T-1}$ as follows:
	\begin{equation}
		\mu^{\star}_{t} = \bar{l}^{*}t - \sum_{k=1}^{t}l_{k} , \ \forall\ t\in\{1,2,...,T-1\},
	\end{equation}
	and obtain
	\begin{equation}
		\begin{aligned}
			\sum_{t=1}^{T}\left[-l_{t}-\left(\bm{\Delta}^{\top}\bm{\mu^{\star}}\right)_{t}\right]^+
			&=\sum_{t=1}^{T}\left[-l_{t}-\left( \bar{l}-r^{*}_{t} \right)\right]^+\\
			&=\sum_{t=1}^{T}\left[-\bar{l}\right]^+ \\
			&=  T\left[-\bar{l}\right]^+ = \left[\sum_{t=1}^{T}-l_{t}\right]^+.
		\end{aligned}
	\end{equation}
	Therefore, $\mathcal{L}(\bm{\lambda}^{*},\bm{s}^{*},\bm{\mu^{\star}},\nu^{*})$ is given by
	\begin{equation}
		\mathcal{L}(\bm{\lambda}^{*},\bm{s}^{*},\bm{\mu^{\star}},\nu^{*}) =
		\sum_{t=1}^{T-1}|\mu^{\star}_{t}|\eta_{t}\Xi \le \Xi\sum_{t=1}^{T}\psi_{t}\eta_{t} \le \Xi\Psi\sum_{t=1}^{T}\eta_{t}.
	\end{equation}
	
	This completes the proof of this lemma.	
\end{proof}

\section{}\label{Appendix: ESP}
\begin{proof}[\textbf{Proof of Lemma \ref{Lemma: discrete loss}}]
	Recall that $V_{\text{ESP}}^{\star}$ is the optimal revenue under fixed pricing in hindsight.
	We let $p_{\text{opt}}\in\mathbb{R}$ denote the corresponding optimal price.
	That is,
	\begin{equation}
		p_{\text{opt}}
		\triangleq
		\arg\max\limits_{p\ge p_{\textit{min}}}\sum_{t=1}^{T}\sum_{n=1}^{N}p\cdot\cycle_{n,t}\cdot x_{n,t}^{*}(p)\cdot y_{n,t}^{*}(p).
	\end{equation}
	According to the definition of the discrete price candidates, we suppose that $p_{\text{opt}}$ satisfies
	\begin{equation}
		p(\kappa) \le p_{\text{opt}} \le p(\kappa+1).
	\end{equation}
	Therefore, we have
	\begin{subequations}
		\begin{align}
			V_{\text{ESP}}(\kappa)
			&=p(\kappa)\cdot \sum_{n=1}^{N}\sum_{t=1}^{T}\cycle_{n,t}z_{n,t}^{*}\big( p(\kappa) \big)	\\
			&=\frac{p(\kappa+1)}{1+\epsilon}\cdot \sum_{n=1}^{N}\sum_{t=1}^{T}\cycle_{n,t}z_{n,t}^{*}\big( p(\kappa) \big)	\label{Proof Equ: discrete loss price denifition}\\
			&\ge \frac{p(\kappa+1)}{1+\epsilon}\cdot \sum_{n=1}^{N}\sum_{t=1}^{T}\cycle_{n,t}z_{n,t}^{*}\big(p_{\text{opt}}\big) \label{Proof Equ: discrete loss increasing price} \\
			&\ge \frac{p_{\text{opt}}}{1+\epsilon} \cdot \sum_{n=1}^{N}\sum_{t=1}^{T}\cycle_{n,t}z_{n,t}^{*}\big(p_{\text{opt}}\big)\\
			&=\frac{ V_{\text{ESP}}^{\star} }{1+\epsilon},
		\end{align}
	\end{subequations}
	where (\ref{Proof Equ: discrete loss increasing price}) is due to $p(\kappa)\le p_{\text{opt}}$.
	This completes the proof of this lemma.
\end{proof}


\begin{proof}[\textbf{Proof of Lemma \ref{Lemma: EXP3}}]

	We let $\Omega_{t}\triangleq\sum_{k=1}^{K}\omega_{t}(k)$ denote the total weight in slot $t$.
	
	The remaining proof consists of three parts.
	
	\textbf{Part I:} Derive for $\ln(\Omega_{T+1})-\ln(\Omega_{1})$ an upper bound that is related to $\sum_{k=1}^{K}V_{t}(k,\bm{\kappa}_{t})$.

	According to weight updating in (\ref{Equ: update weight}), we have
	\begin{equation}\label{Proof Equ: omega Omega}
		\begin{aligned}
			\frac{\Omega_{t+1}}{\Omega_{t}}
			=&\sum_{k=1}^{K}\frac{\omega_{t+1}(k)}{\Omega_{t}}
			=\sum_{k=1}^{K}\frac{\omega_{t}(k)}{\Omega_{t}}\cdot(1+\delta)^{ {\color{black}\hat{V}_{t}}(k,\bm{\kappa}_{t})}\\
			\le&\sum_{k=1}^{K}\frac{\omega_{t}(k)}{\Omega_{t}}\cdot \left[1+\delta\cdot{\color{black}\hat{V}_{t}}(k,\bm{\kappa}_{t})\right], \\
		\end{aligned}	
	\end{equation}
	where the last inequality follows $\hat{V}_{t}(k,\bm{\kappa}_{t})\le1$ (shown in Proposition \ref{Proposition: virtual candidate revenue}) together with the fact that $(1+\delta)^{x}\le 1+\delta x$ holds for any $x\in[0,1]$.
	
	Moreover, the definition of $h_{t}(k)$ implies that we can express $\frac{\omega_{t}(k)}{\Omega_{t}}$ as follows:
	\begin{equation}\label{Proof Equ: omega Omega ht}
		\frac{\omega_{t}(k)}{\Omega_{t}} = \frac{h_{t}(k)-{\gamma(1+\epsilon)^{k}}/{\mathcal{P}}}{1-\gamma},
	\end{equation}
	where $\mathcal{P}\triangleq\sum_{i=1}^{K}(1+\epsilon)^{i}$ is a constant.
	
	Combining (\ref{Proof Equ: omega Omega}) and (\ref{Proof Equ: omega Omega ht}), we obtain
	\begin{equation}\label{Proof Equ: omega Omega final}
		\begin{aligned}
			\frac{\Omega_{t+1}}{\Omega_{t}}
			&\le \sum_{k=1}^{K}\frac{h_{t}(k)-{\gamma(1+\epsilon)^{k}}/{\mathcal{P}}}{1-\gamma}\cdot \left[ 1 + \delta\cdot{\color{black}\hat{V}_{t}}(k,\bm{\kappa}_{t}) \right] \\
			&=1+ \sum_{k=1}^{K}\frac{h_{t}(k)-{\gamma(1+\epsilon)^{k}}/{\mathcal{P}}}{1-\gamma}\cdot\delta\cdot{\color{black}\hat{V}_{t}}(k,\bm{\kappa}_{t})  \\
			&\le 1
			+ \frac{\delta}{1-\gamma}\sum_{k=1}^{K} h_{t}(k)\cdot{\color{black}\hat{V}_{t}}(k,\bm{\kappa}_{t})\\
			&\le 1
			+ \frac{\gamma\delta}{(1-\gamma)N\bar{\cycle} p_{\textit{min}}\mathcal{P}}\sum_{k=1}^{K} V_{t}(k,\bm{\kappa}_{t}).
		\end{aligned}	
	\end{equation}
	We take the logarithmic operation of (\ref{Proof Equ: omega Omega final}) on both sides, and obtain
	\begin{equation}
		\begin{aligned}
			\ln(\Omega_{t+1})-\ln(\Omega_{t})
			&\le \ln\left( 1
			+ \frac{\gamma\delta\sum_{k=1}^{K} V_{t}(k,\bm{\kappa}_{t})}{(1-\gamma)N\bar{\cycle} p_{\textit{min}}\mathcal{P}}	\right)\\
			&\le \frac{\gamma\delta}{(1-\gamma)N\bar{\cycle} p_{\textit{min}}\mathcal{P}}\sum_{k=1}^{K}V_{t}(k,\bm{\kappa}_{t}),
		\end{aligned}
	\end{equation}
	where the second inequality follows that $\ln(1+x)\le x$ for any $x\ge0$.
	We then sum over $t=1$ to $t=T$  and obtain the desired upper bound in Part I as follows:
	\begin{equation}
		\begin{aligned}
			\ln(\Omega_{T+1})-\ln(K)
			&\le \frac{\gamma\delta}{(1-\gamma)N\bar{\cycle} p_{\textit{min}}\mathcal{P}} \sum_{t=1}^{T}\sum_{k=1}^{K} V_{t}(k,\bm{\kappa}_{t}).
		\end{aligned}
	\end{equation}
	
	\textbf{Part II:} Derive for $\ln(\Omega_{T+1})-\ln(\Omega_{1})$ a lower bound that is related to $\sum_{t=1}^{T}{\color{black}\hat{V}_{t}}(k,\bm{\kappa}_{t})$.
	
	Now we derive for $\ln(\Omega_{T+1})-\ln(K)$ a lower bound based on the definition of $\Omega_{t}$.
	Specifically, for any $k\in\mathcal{K}$ the following holds
	\begin{subequations}
		\begin{align}
			\ln(\Omega_{T+1})-\ln(K)
			&\ge \ln\left( \omega_{T+1}(k) \right) -\ln(K) ,\\
			&= \ln\left( (1+\delta)^{ \sum_{t=1}^{T}{\color{black}\hat{V}_{t}}(k,\bm{\kappa}_{t}) } \right) -\ln(K) ,\\
			&= \left( \sum_{t=1}^{T}{\color{black}\hat{V}_{t}}(k,\bm{\kappa}_{t}) \right) \ln(1+\delta)-\ln(K) ,\\
			&\ge \left( \sum_{t=1}^{T}{\color{black}\hat{V}_{t}}(k,\bm{\kappa}_{t}) \right) \left(\delta-\frac{\delta^2}{2}\right)-\ln(K) ,
		\end{align}	
	\end{subequations}
	where the last inequality follows that $x-\frac{x^2}{2}\le\ln(1+x)$ for any $x\ge0$.

	\textbf{Part III:}
	
	Combine the two inequities derived in Part I and Part III, we know that the following is true
	\begin{equation}\label{Proof Equ: Part III}
		\begin{aligned}
			&\sum_{t=1}^{T}{\color{black}\hat{V}_{t}}(k,\bm{\kappa}_{t})\cdot\left(\delta-\frac{\delta^2}{2}\right) -\ln(K) \le \\
			&\qquad \frac{\gamma\delta}{(1-\gamma)N\bar{\cycle} p_{\textit{min}}\mathcal{P}} \sum_{t=1}^{T}\sum_{k=1}^{K} V_{t}(k,\bm{\kappa}_{t}), \quad\forall\kappa\in\mathcal{K}.
		\end{aligned}	
	\end{equation}
	
	We take the expectation of (\ref{Proof Equ: Part III}) over all the random variables $[\bm{\kappa}_{1},\bm{\kappa}_{2},...,\bm{\kappa}_{T}]$, and obtain
	\begin{equation}\label{Proof Equ: Part III expected}
		\begin{aligned}
			&\mathbb{E}\left[\sum_{t=1}^{T}{\color{black}\hat{V}_{t}}(k,\bm{\kappa}_{t})\right]\cdot\left(\delta-\frac{\delta^2}{2}\right) - \ln(K) \le\\
			& \qquad
			\frac{\gamma\delta}{(1-\gamma)N\bar{\cycle} p_{\textit{min}}\mathcal{P}}\mathbb{E}\left[\sum_{t=1}^{T}\sum_{k=1}^{K}V_{t}(k,\bm{\kappa}_{t})\right],\quad\forall\kappa\in\mathcal{K}.
		\end{aligned}	
	\end{equation}
	Mathematically, we note that
	\begin{equation}\textstyle
		\mathbb{E}\left[\sum_{t=1}^{T}\sum_{k=1}^{K}V_{t}(k,\bm{\kappa}_{t})\right] = \mathbb{E}\left[ V_{\text{ESP}}(\mathcal{P}) \right].
	\end{equation}
	Moreover, we have
	\begin{equation}
		\begin{aligned}
			&\mathbb{E}\left[\sum_{t=1}^{T}{\color{black}\hat{V}_{t}}(k,\bm{\kappa}_{t})\right]\\
			=&\frac{\gamma}{N\bar{\cycle} p_{\textit{min}}\mathcal{P}} \sum_{t=1}^{T}\mathbb{E}\left[\frac{\sum_{n\in\mathcal{N}} V_{t,n}(\kappa_{t,n})\mathbb{I}(\kappa_{t,n}=k)}{h_{t}(k)} \right] \\
			=&\frac{\gamma}{N\bar{\cycle} p_{\textit{min}}\mathcal{P}} \sum_{t=1}^{T}\left[\frac{\sum_{n\in\mathcal{N}} V_{t,n}(k)}{h_{t}(k)}\cdot h_{t}(k) \right]\\
			=&\frac{\gamma}{N\bar{\cycle} p_{\textit{min}}\mathcal{P}} \sum_{t=1}^{T}\sum_{n=1}^{N}  V_{t,n}(k) \\
			=& \frac{\gamma }{N\bar{\cycle} p_{\textit{min}}\mathcal{P}} V_{\text{ESP}}(k) .
		\end{aligned}	
	\end{equation}	
	Substituting the above two expectation results into (\ref{Proof Equ: Part III expected}), we know that the following inequality holds
	\begin{equation}
		\begin{aligned}
			&\mathbb{E}\left[ V_{\text{ESP}}(\mathcal{P}) \right]\ge \\
			& (1-\gamma)\left(1-\frac{\delta}{2}\right)V_{\text{ESP}}(k) - N\bar{\cycle} p_{\textit{min}}\mathcal{P}\frac{1-\gamma}{\gamma\delta}\ln(K)	,	\quad\forall k\in\mathcal{K} .
		\end{aligned}	
	\end{equation}
	Accordingly, the remaining proof is to show that $\Phi(\epsilon,\delta,\gamma)$ is the upper bound of the loss term, as following
	\begin{equation}
		\begin{aligned}
			& N\bar{\cycle} p_{\textit{min}}\mathcal{P}\frac{1-\gamma}{\gamma\delta}\ln(K) \\
			=& \frac{1-\gamma}{\gamma\delta}\cdot N\bar{\cycle} p_{\textit{min}}\cdot\frac{[(1+\epsilon)^{K}-1](1+\epsilon)}{\epsilon} \cdot \ln(K) \\
			<& \frac{1-\gamma}{\gamma\delta}\cdot N\bar{\cycle} \cdot\frac{p_{\textit{min}}(1+\epsilon)^{K}(1+\epsilon)}{\epsilon} \cdot \ln(K) \\
			=& \frac{1-\gamma}{\gamma}\cdot \frac{1+\epsilon}{\epsilon} \cdot \frac{N\bar{E}\bar{\cycle}}{\delta} \cdot \ln\left( K\right) \\
			\le& \frac{1-\gamma}{\gamma}\cdot \frac{1+\epsilon}{\epsilon} \cdot \frac{N\bar{E}\bar{\cycle}}{\delta} \cdot \ln\left( \frac{ \ln \left({\bar{E}}/{p_{\textit{min}}}\right) }{ \ln(1+\epsilon) } \right) = \Phi(\epsilon,\delta,\gamma),\\
		\end{aligned}	
	\end{equation}
	which completes the proof.
\end{proof}

\begin{proposition}\label{Proposition: virtual candidate revenue}
	Given the sequence $\bm{\kappa}_{t}$ generated in Algorithm \ref{Algorithm: ESP}, we have
	\begin{equation}
		\hat{V}_{t}(k,\bm{\kappa}_{t})\le1,\quad\forall k\in\mathcal{K},\ t\in\mathcal{T}.
	\end{equation}
\end{proposition}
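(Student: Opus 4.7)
The plan is to prove the bound by upper-bounding the numerator $V_{t}(k,\bm{\kappa}_{t})$ and lower-bounding $h_{t}(k)$, and then observing that the two price-dependent factors $(1+\epsilon)^{k}$ cancel exactly. Both bounds are essentially one-line estimates that follow from the definitions in Algorithm \ref{Algorithm: ESP}, so no deep structural argument is needed.

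First, I would upper-bound the candidate revenue. By the definition of $V_{t}(k,\bm{\kappa}_{t})$ in (\ref{Equ: candidate revenue}), only the MUs for which $\kappa_{t,n}=k$ contribute, and each such MU contributes $p(k)\,\cycle_{n,t}\,z_{n,t}(p(k))$. Using $z_{n,t}(\cdot)\in[0,1]$, $\cycle_{n,t}\le\bar{\cycle}$, $|\mathcal{N}|=N$, and $p(k)=p_{\textit{min}}(1+\epsilon)^{k}$, we get
\begin{equation}
V_{t}(k,\bm{\kappa}_{t}) \;\le\; N\,p_{\textit{min}}\,(1+\epsilon)^{k}\,\bar{\cycle}.
\end{equation}

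Second, I would lower-bound the selection probability $h_{t}(k)$. From its definition in (\ref{Equ: candidate probability}), the exploitation term $(1-\gamma)\omega_{t}(k)/\sum_{k}\omega_{t}(k)$ is nonnegative, so
\begin{equation}
h_{t}(k) \;\ge\; \frac{\gamma(1+\epsilon)^{k}}{\sum_{i=1}^{K}(1+\epsilon)^{i}}.
\end{equation}

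Plugging both estimates into the definition (\ref{Equ: virtual candidate revenue}) of $\hat{V}_{t}(k,\bm{\kappa}_{t})$, the factors $N$, $\bar{\cycle}$, $p_{\textit{min}}$, $\gamma$, and $\sum_{i=1}^{K}(1+\epsilon)^{i}$ cancel, and we are left with $(1+\epsilon)^{k}/(1+\epsilon)^{k}=1$, establishing the claim. The only thing to double-check is that $z_{n,t}(\cdot)\le 1$ — which is immediate since the MU's executing decision satisfies $z_{n,t}\le x_{n,t}\le 1$ by (\ref{Equ: Reformulation Constraint r}) — so there is really no obstacle here; the lemma is designed so that the exploration floor on $h_{t}(k)$ precisely compensates for the worst-case payment at price $p(k)$.
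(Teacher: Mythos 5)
Your proposal is correct and follows essentially the same route as the paper's proof: upper-bound $V_{t}(k,\bm{\kappa}_{t})$ by $N\bar{\cycle}\,p_{\textit{min}}(1+\epsilon)^{k}$ using $z_{n,t}\le 1$ and $\cycle_{n,t}\le\bar{\cycle}$, lower-bound $h_{t}(k)$ by its exploration floor $\gamma(1+\epsilon)^{k}/\sum_{i=1}^{K}(1+\epsilon)^{i}$, and observe that everything cancels to $1$. No gaps.
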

\begin{proof}[\textbf{Proof of Proposition \ref{Proposition: virtual candidate revenue}}]
	We prove this lemma by showing the upper bound of $\hat{V}_{t}(k,\bm{\kappa}_{t})$.
	According to the definition of $\hat{V}_{t}(k,\bm{\kappa}_{t})$ in (\ref{Equ: virtual candidate revenue}), we have
	\begin{subequations}
		\begin{align}
			\hat{V}_{t}(k,\bm{\kappa}_{t})
			&=\frac{V_{t}(k,\bm{\kappa}_{t})}{ N\bar{\cycle} p_{\textit{min}} } \cdot \frac{\gamma}{h_{t}(k)\sum_{i=1}^{K}(1+\epsilon)^{i}}	\\
			&\le\frac{V_{t}(k,\bm{\kappa}_{t})}{ N\bar{\cycle} p_{\textit{min}} } \cdot \frac{\gamma}{\frac{\gamma\cdot(1+\epsilon)^{k}}{\sum_{i=1}^{K}(1+\epsilon)^{i}}\cdot\sum_{i=1}^{K}(1+\epsilon)^{i}} \label{Proof Equ: virtual candidate revenue - ht} \\
			&=\frac{V_{t}(k,\bm{\kappa}_{t})}{ N\bar{\cycle} p_{\textit{min}} } \cdot \frac{1}{(1+\epsilon)^{k}} \\
			&\le\frac{p_{\textit{min}}(1+\epsilon)^{k}N\bar{\cycle}}{ N\bar{\cycle} p_{\textit{min}} } \cdot \frac{1}{(1+\epsilon)^{k}}=1, \label{Proof Equ: virtual candidate revenue - Vt}
		\end{align}
	\end{subequations}
	where (\ref{Proof Equ: virtual candidate revenue - ht}) is due to the fact $h_{t}(k)\ge \frac{\gamma\cdot(1+\epsilon)^{k}}{\sum_{i=1}^{K}(1+\epsilon)^{i}}$, and (\ref{Proof Equ: virtual candidate revenue - Vt}) follows $V_{t}(k,\bm{\kappa}_{t}) \le p_{\textit{min}}(1+\epsilon)^{k}N\bar{\cycle}$.
	
	This completes the proof of this proposition.
\end{proof}

\begin{proof}[\textbf{Proof of Corollary \ref{Corollary: ESP alpha}}]
	We prove this corollary based on Theorem \ref{Theorem: ESP} under the inequality condition $V^{\star}_{\text{ESP}}\ge \frac{8}{\alpha}\cdot\Phi\left(\frac{\alpha}{3},\frac{\alpha}{6},\frac{\alpha}{12}\right)$ for some $\alpha\in(0,1]$.
	
	Theorem \ref{Theorem: ESP} under the parameters $(\epsilon,\delta,\gamma)=\left(\frac{\alpha}{3},\frac{\alpha}{6},\frac{\alpha}{12}\right)$ indicates
	\begin{equation}
		\begin{aligned}
			\mathbb{E}\left[ V_{\text{ESP}}^{T}(\mathcal{P}) \right]
			&\ge\frac{(1-\frac{\alpha}{12})\left(1-\frac{\alpha}{12}\right)}{1+\frac{\alpha}{3}} V^{\star}_{\text{ESP}}
			-\Phi\left(\frac{\alpha}{3},\frac{\alpha}{6},\frac{\alpha}{12}\right) \\
			&\ge\frac{(1-\frac{\alpha}{12})^2}{1+\frac{\alpha}{3}} V^{\star}_{\text{ESP}} - \frac{\alpha}{8}V^{\star}_{\text{ESP}} \\
			&=\frac{(1-\frac{\alpha}{12})^2 - \left(1+\frac{\alpha}{3}\right)\frac{\alpha}{8}}{1+\frac{\alpha}{3}} V^{\star}_{\text{ESP}}.
		\end{aligned}	
	\end{equation}
	Note that we have
	\begin{subequations}
		\begin{align}
			\frac{(1-\frac{\alpha}{12})^2 - \left(1+\frac{\alpha}{3}\right)\frac{\alpha}{8}}{1+\frac{\alpha}{3}}
			&>\frac{\left(1-\frac{\alpha}{6}\right) - \left(1+\frac{\alpha}{3}\right)\frac{\alpha}{8} }{1+\frac{\alpha}{3}}\\
			&=\frac{1 - \frac{7\alpha+\alpha^2}{24} }{1+\frac{\alpha}{3}}\\
			&\ge\frac{1 - \frac{\alpha}{3} }{1+\frac{\alpha}{3}} \label{Proof Equ: corollary 1} \\
			&>\frac{1 }{1+\alpha}, \label{Proof Equ: corollary 2}
		\end{align}
	\end{subequations}
	where (\ref{Proof Equ: corollary 1}) follows that $\frac{7\alpha+\alpha^2}{24}\le\frac{\alpha}{3}$ for any $\alpha\in(0,1]$.
	And (\ref{Proof Equ: corollary 2}) follows that $\frac{1+\alpha/3}{1-\alpha/3}\le1+\alpha$ holds for any $\alpha\in(0,1]$.
	
	This completes the proof of this corollary.
\end{proof}

\section{Different Utility and Cost Functions}
This section extend the numerical results in Section \ref{Section: Numerical} by taking into account different utility and cost functions.
Specifically, Section \ref{Section: Numerical} focuses on $\alpha=0.5$ and $\beta=1$ given the series of utility $u(x)=\frac{x^{1-\alpha}}{1-\alpha}$ and the cost $e(s)=\frac{s^{1+\beta}}{1+\beta}$, respectively.
This section further considers the case of $\alpha=0.6$ and $\beta=2$.
Fig. \ref{fig: Different utility and cost} plots the utility and cost functions.
It is obvious that the concaveness or convexness is different.

	\begin{figure}[t]
		\setlength{\abovecaptionskip}{0pt}
		\setlength{\belowcaptionskip}{0pt}
		\centering	
		\subfigure[Utility with $\alpha\in\{0.5,0.6\}$.]
		{\label{fig: Appendix_utility}\includegraphics[height=0.35 \linewidth]{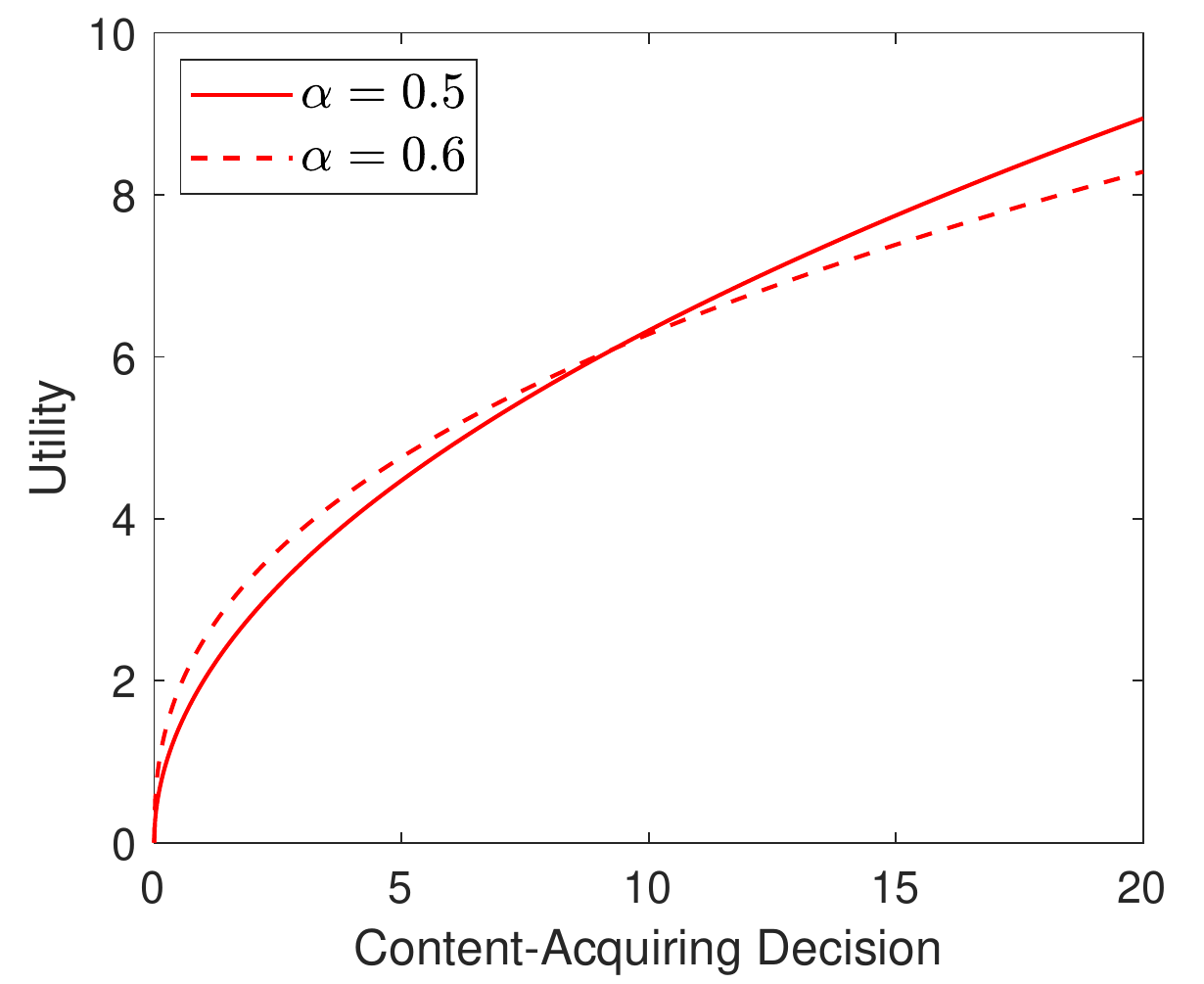}}
		\subfigure[Cost with $\beta\in\{1,2\}$.]
		{\label{fig: Appendix_cost}\includegraphics[height=0.35 \linewidth]{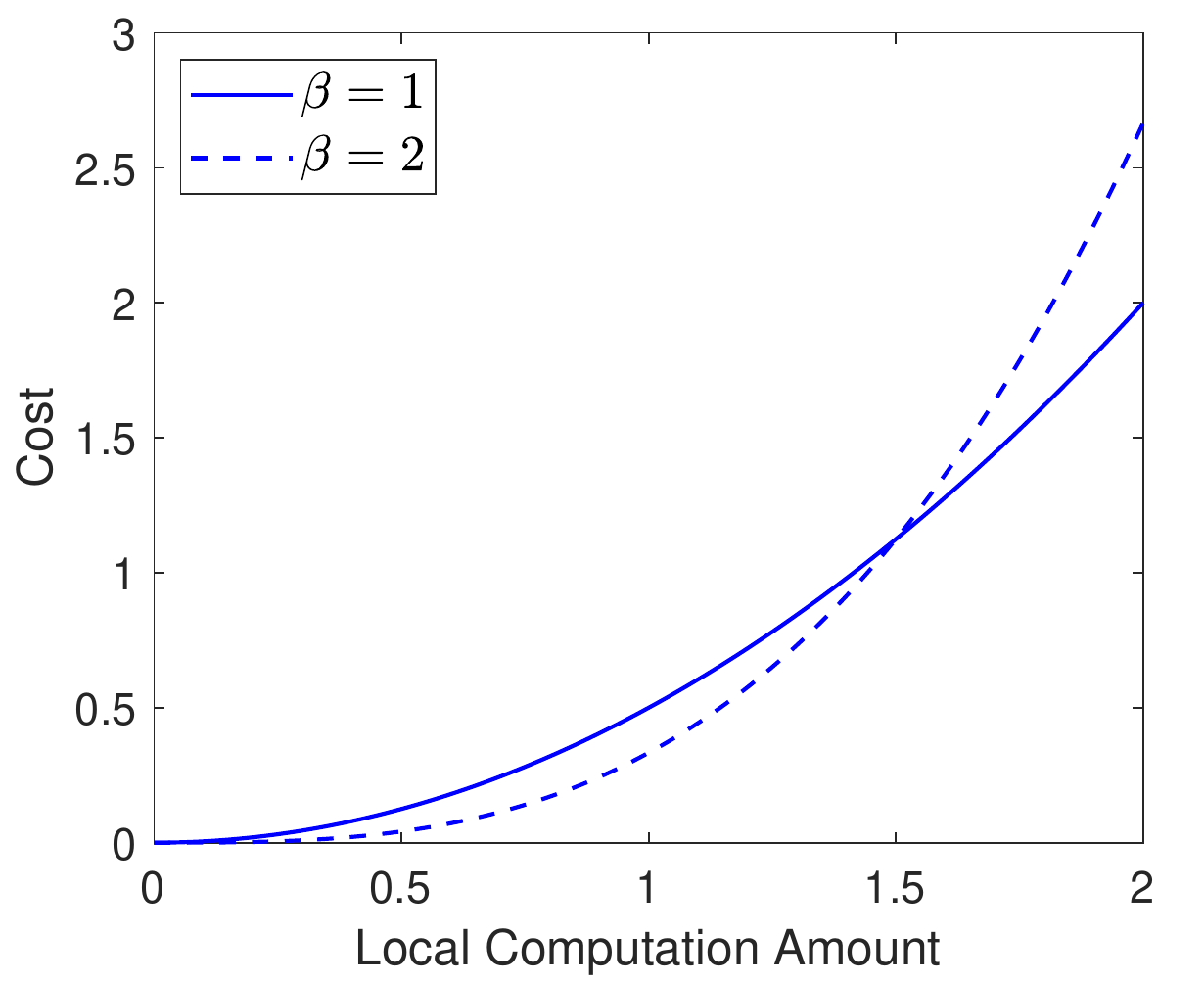}}
		\caption{Different utility and cost functions}
		\label{fig: Different utility and cost}
	\end{figure}

\begin{figure}
	\centering
	\includegraphics[height=0.4\linewidth]{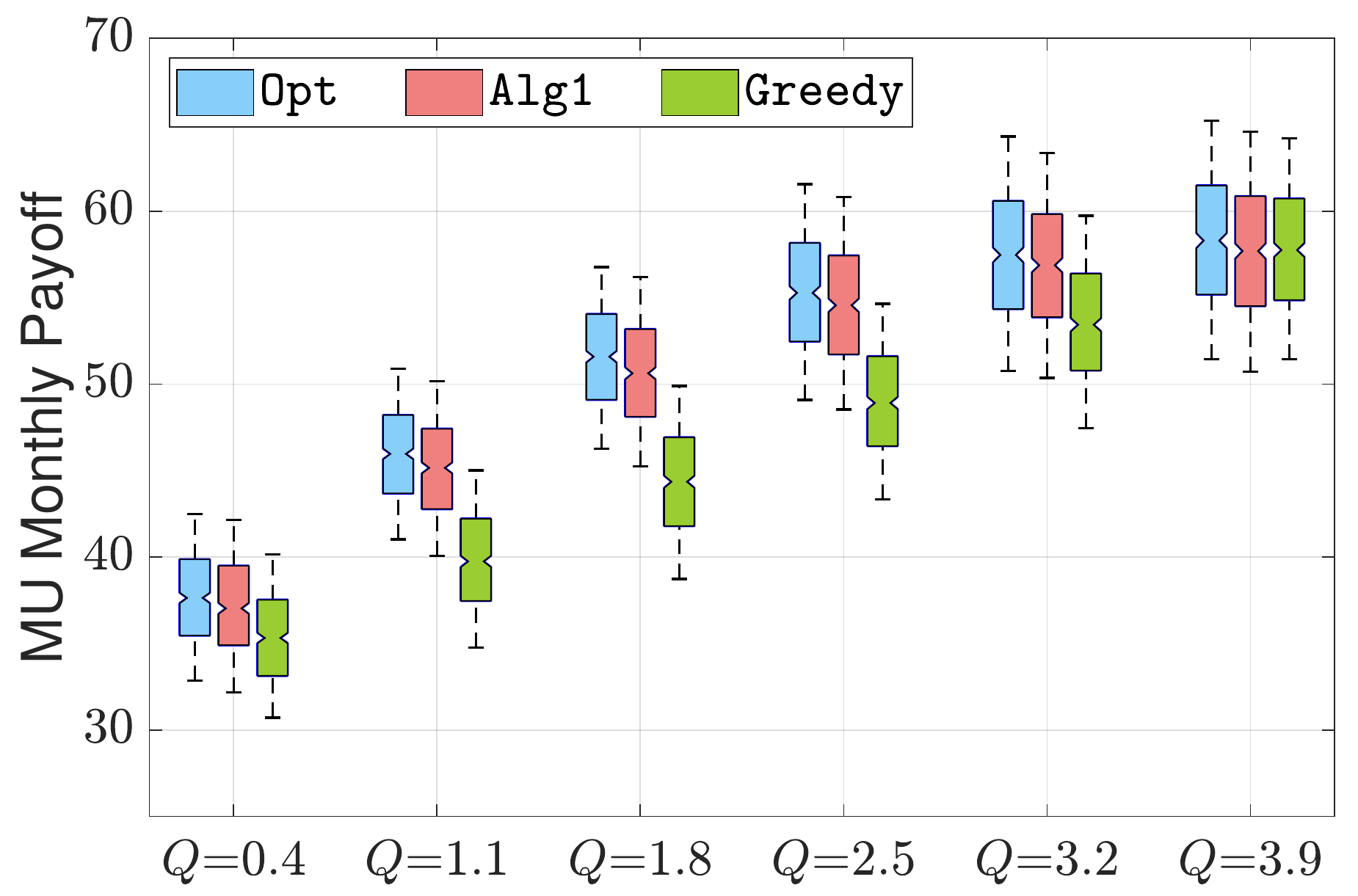}
	\caption{MU monthly payoff}
	\label{fig: Appendix_MU_Payoff_Q}
\end{figure}

Given the parameters $\alpha=0.6$ and $\beta=2$, we will compare the MU's monthly payoff in the following three cases:
\begin{itemize}
	\item The case of $\mathtt{Opt}$ corresponds to the off-line optimal outcome discussed in Theorem \ref{Theorem: MU KKT}.
	
	\item The case of $\mathtt{Alg1}$ corresponds to the proposed online strategy $\mathcal{A}$ defined in Algorithm \ref{Algorithm: MU}.
	
	\item The case of $\mathtt{Greedy}$ corresponds to the greedy strategy that tends to maximize the daily payoff without taking into account the potential future over usage.
\end{itemize}

Fig. \ref{fig: Appendix_MU_Payoff_Q} plots the MU's monthly payoff under different monthly data caps.
Comparing Fig. \ref{fig: Appendix_MU_Payoff_Q} with Fig. \ref{fig: User_Q}, we note that the MU's monthly payoff increases, but the relative performance among the three cases is similar.
This also verifies the theoretical result for the MU's online strategy.

\begin{figure}[h]
	\setlength{\abovecaptionskip}{0pt}
	\setlength{\belowcaptionskip}{0pt}
	\centering
	\subfigure[Average ESP revenue.]
	{\label{fig: Appendix_ESPrevenues_indata}\includegraphics[width=0.48\linewidth]{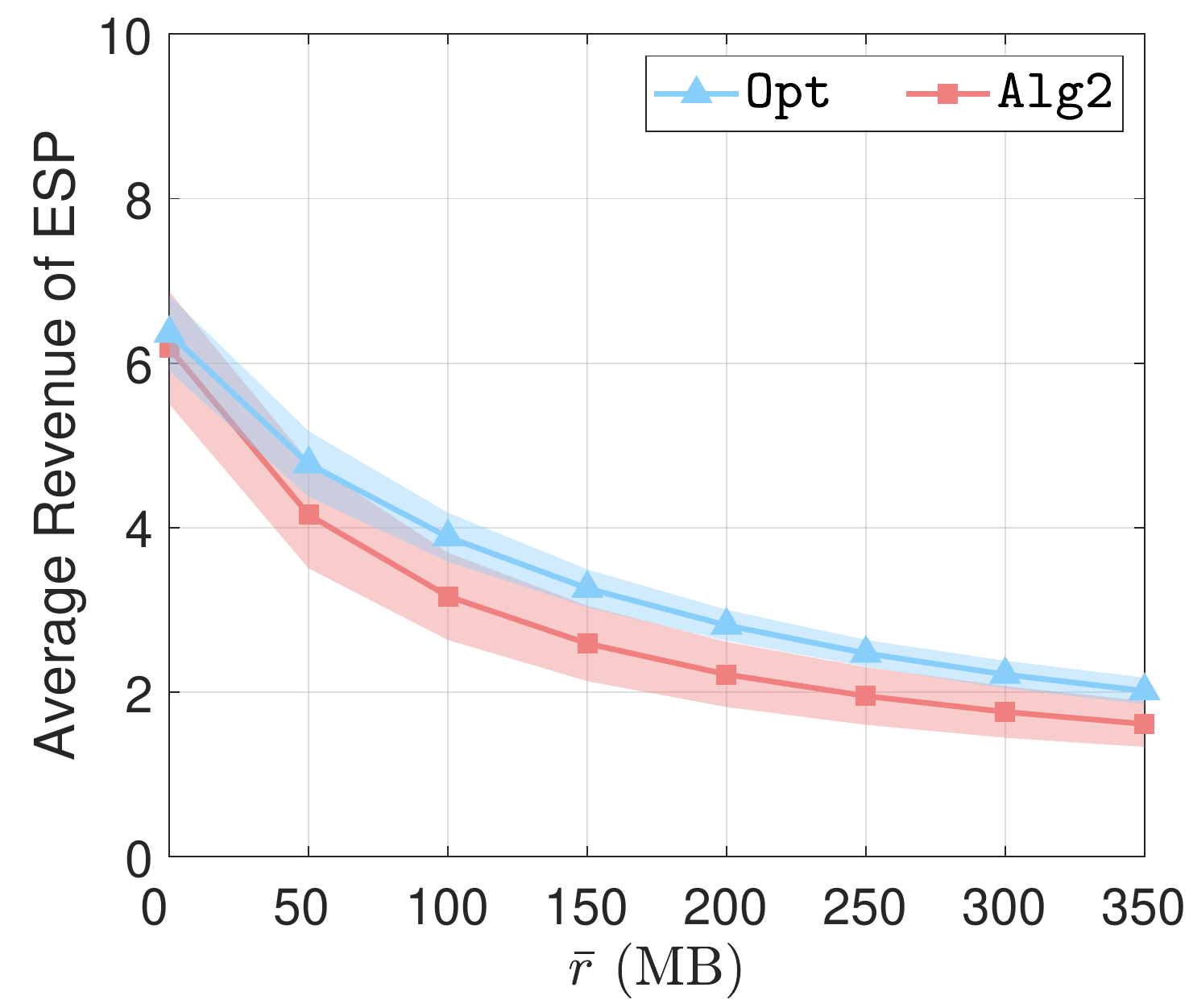}}\
	\subfigure[Ratio of the optimal revneue.]
	{\label{fig: Appendix_ESPfraction_indata}\includegraphics[width=0.48\linewidth]{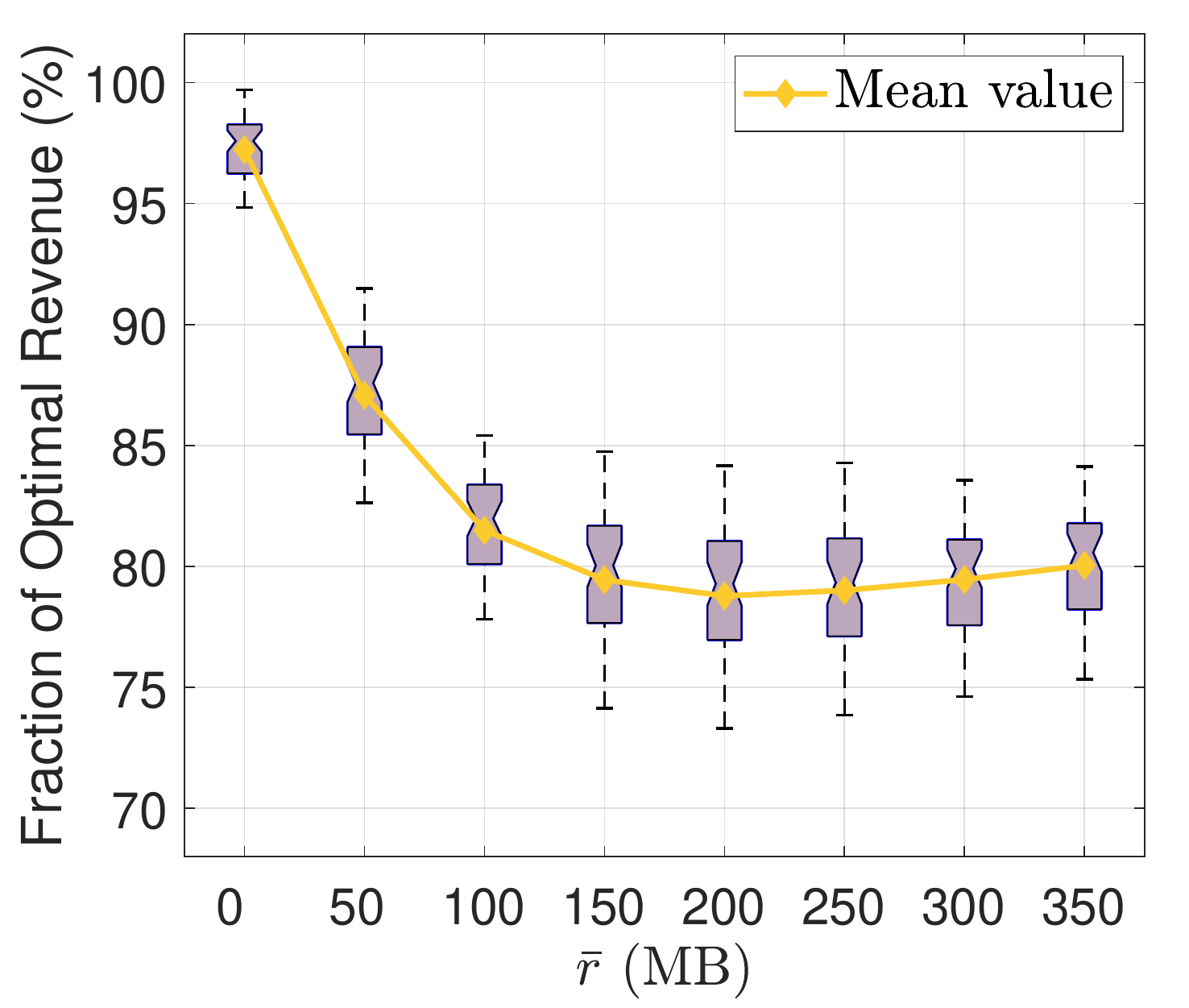}}
	\caption{ESP's average revenue versus maximal raw data $\bar{\indata}$.}
	\label{fig: Appendix_indata}
\end{figure}

Fig. \ref{fig: Appendix_indata} plots the performance of the pricing policy $\mathcal{P}$ under different values of the maximal raw data amount $\bar{\indata}$.
We compare ESP's revenue under the pricing policy $\mathcal{P}$ (labeled by $\mathtt{Alg2}$) with the offline optimal revenue (labeled by $\mathtt{Opt}$).
Comparing Fig. \ref{fig: Appendix_ESPrevenues_indata} with Fig \ref{fig: ESPrevenues_indata}, we note that the ESP's revenues of the two cases decrease.
However, comparing Fig. \ref{fig: Appendix_ESPfraction_indata} to Fig. \ref{fig: ESPfraction_indata}, we note that the relative performance of the pricing policy is robust.

\end{document}